\theoremstyle{definition}
\newtheorem{definition}{Definition}[section]
\newtheorem{exmp}{Example}[section]
\newcommand{\tool}{{\sc NGST2}\xspace}
\newcommand{\toolasnbidirec}{{\sc ASN-Bidirec}\xspace}
\newcommand{\toolnoprune}{{\sc NGST2-NoPrune}\xspace}
\newcommand{\toolnotrace}{{\sc NGST2-NoTrace}\xspace}
\newcommand{\net}{{\sc CGN}\xspace}
\newcommand{\asn}{{\sc ASN}\xspace}
\newcommand{\seqtoseq}{{\sc SEQ2SEQ}\xspace}
\newcommand{\cegisneo}{{\sc CEGIS-Neo}\xspace}
\newcommand{\enumasn}{{\sc Enum-ASN}\xspace}
\newcommand{\enumseqtoseq}{{\sc Enum-Seq2Seq}\xspace}
\newcommand{\netbeans}{{\sc Netbeans}\xspace}
\newcommand{\idsl}{IMP\xspace}
\newcommand{\fdsl}{$\lambda_{str}$\xspace}
\newcommand{\grammar}{\mathcal{G}}
\newcommand{\code}[1]{\texttt{#1}\xspace}
\newcommand{\todo}[1]{\textcolor{red}{\textbf{TODO:} #1}}
\newcommand{\pexp}{\tilde{e}}
\newcommand{\pcand}{\tilde{C}}
\newcommand{\progeq}[2]{{#1} \equiv {#2}}
\newcommand{\udjudge}[3]{{#1} \vdash {#2} \updownarrow {#3}}
\newcommand{\upjudge}[3]{{#1} \vdash {#2} \uparrow {#3}}
\newcommand{\exec}[2]{\llbracket {#1} \rrbracket_{#2}}
\newcommand{\inv}[2]{\llbracket {#1} \rrbracket^{#2}}
\newcommand{\inva}[3]{\llbracket {#1} \rrbracket^{#2}({#3})}
\newcommand{\consist}{\backsimeq}
\newcommand{\consista}[2]{{#1} \consist {#2}}
\newcommand{\nconsist}{\nbacksimeq}
\newcommand{\nconsista}[2]{{#1} \nconsist {#2}}
\newcommand{\conjudge}[3]{{#1} \vdash \consista{#2}{#3}}
\newcommand{\nconjudge}[3]{{#1} \vdash \nconsista{#2}{#3}}
\newcommand{\derives}{\Rightarrow^*}
\DeclareMathOperator*{\argmax}{arg\,max}
\newcommand{\prog}{\mathcal{P}}
\newcommand{\nn}{\mathcal{M}_\theta}
\newcommand{\worklist}{\mathcal{W}}
\newcommand{\cexs}{\mathcal{E}}
\newcommand{\symexp}{\psi}
\newcommand{\symexec}[1]{\llbracket {#1} \rrbracket^{\sharp}}
\newcommand{\symexecback}[2]{\llbracket {#1} \rrbracket^{\sharp^{#2}}}
\newcommand{\lang}{\mathcal{L}}
\newcommand{\valtyp}{\mathcal{V}}
\newcommand{\vartyp}{Var}
\newcommand{\derive}[1]{\mathcal{R}({#1})}
\newcommand{\astpath}{\pi}
\newcommand{\linear}{l}
\newcommand{\lstm}{\mathrm{LSTM}}
\newcommand{\lstmcopy}{\mathrm{LSTM}_{\mathrm{copy}}}
\newcommand{\lstmstar}{\mathrm{LSTM}_{\star}}
\newcommand{\bilstm}{\mathrm{BiLSTM}}
\newcommand{\bilstmcopy}{\mathrm{BiLSTM}_{\mathrm{copy}}}
\newcommand{\bilstmstar}{\mathrm{BiLSTM}_{\star}}
\newcommand{\ffnn}{\mathrm{FFNN}}
\newcommand{\softmax}{\mathrm{softmax}}
\newcommand{\attn}{\mathrm{Attn}}
\newcommand{\encode}{\mathrm{enc}}
\newcommand{\hidden}{s}
\newcommand{\coarse}{\mathcal{C}}
\newcommand{\functools}{{\tt functools}\xspace}
\newcommand{\translate}{\mathrm{translate}}
\newcommand{\defeq}{\triangleq}
\newcommand{\symexpp}{\symexp^\prime}
\newcommand{\symexpfr}{\hat{\symexp}}
\newcommand{\symexpbar}{\bar{\symexp}}
\newcommand{\binop}{\oplus}
\newcommand{\len}{\mathsf{len}}
\newif\ifExt\Exttrue
\newif\ifDiff\Difffalse
\newcommand{\new}[1]{\textcolor{teal}{#1}}
\newcommand{\old}[1]{\textcolor{red}{\sout{#1}}}
\newcommand{\new}[1]{{#1}}
\newcommand{\old}[1]{}
\newcommand{\coll}[3]{\mathcal{C}\llbracket {#1} \rrbracket_{#2,#3}}
\newcommand{\powerset}{\raisebox{.15\baselineskip}{\Large\ensuremath{\wp}}}
\newcommand{\sterm}{{\tau}}
\newcommand{\collctx}[3]{\mathcal{C}_{ctx}\llbracket {#1} \rrbracket_{#2,#3}}
\begin{document}

%%
%% The "title" command has an optional parameter,
%% allowing the author to define a "short title" to be used in page headers.
% \title{Imperative-to-Functional API Translation via Neurosymbolic Program Synthesis}
\title[Automated Transpilation of Imperative to Functional Code using Neural-Guided Program Synthesis (Extended)]{Automated Transpilation of Imperative to Functional Code using Neural-Guided Program Synthesis (Extended Version)}

\author{Benjamin Mariano}
\email{bmariano@cs.utexas.edu}
% \orcid{1234-5678-9012}
\affiliation{%
  \institution{University of Texas at Austin}
  \country{USA}    
}
\author{Yanju Chen}
\email{yanju@cs.ucsb.edu}
\affiliation{%
  \institution{University of California, Santa Barbara}
  \country{USA}  
}
\author{Yu Feng}
\email{yufeng@cs.ucsb.edu}
\affiliation{%
  \institution{University of California, Santa Barbara}
  \country{USA}  
}
\author{Greg Durrett}
\email{gdurrett@cs.utexas.edu}
\affiliation{%
  \institution{University of Texas at Austin}
  \country{USA}      
}
\author{I\c{s}il Dillig}
\email{isil@cs.utexas.edu}
\affiliation{%
  \institution{University of Texas at Austin}
  \country{USA}      
}

%%
%% The "author" command and its associated commands are used to define
%% the authors and their affiliations.
%% Of note is the shared affiliation of the first two authors, and the
%% "authornote" and "authornotemark" commands
%% used to denote shared contribution to the research.
% \author{Ben Trovato}
% \authornote{Both authors contributed equally to this research.}
% \email{trovato@corporation.com}
% \orcid{1234-5678-9012}
% \author{G.K.M. Tobin}
% \authornotemark[1]
% \email{webmaster@marysville-ohio.com}
% \affiliation{%
%   \institution{Institute for Clarity in Documentation}
%   \streetaddress{P.O. Box 1212}
%   \city{Dublin}
%   \state{Ohio}
%   \country{USA}
%   \postcode{43017-6221}
% }

%%
%% By default, the full list of authors will be used in the page
%% headers. Often, this list is too long, and will overlap
%% other information printed in the page headers. This command allows
%% the author to define a more concise list
%% of authors' names for this purpose.
% \renewcommand{\shortauthors}{Trovato and Tobin, et al.}

%%
%% The abstract is a short summary of the work to be presented in the
%% article.
\begin{abstract}
While many mainstream  languages such as Java, Python, and C\# increasingly incorporate functional APIs to simplify programming and improve parallelization/performance, there are no effective techniques that can be used to \emph{automatically} translate existing imperative code to functional variants using these APIs. Motivated by this problem, this paper presents a transpilation approach based on inductive program synthesis for modernizing existing code. Our method is based on the observation that the overwhelming majority of source/target programs in this setting satisfy an assumption that we call \emph{trace-compatibility}: not only do the programs share syntactically identical low-level expressions, but these expressions also take the same values in corresponding execution traces. Our method leverages this observation to design a new neural-guided synthesis algorithm that (1) uses a novel neural architecture called \emph{cognate grammar network (CGN)} and (2) leverages a form of concolic execution to prune partial programs based on \emph{intermediate values} that arise during a computation. We have implemented our approach in a tool called \tool and use it to translate imperative Java and Python code to functional variants that use the {\tt Stream} and {\tt functools} APIs respectively. Our experiments show that \tool significantly outperforms several baselines and that our proposed neural architecture and pruning techniques are vital for achieving good results. 
\end{abstract}

%%
%% The code below is generated by the tool at http://dl.acm.org/ccs.cfm.
%% Please copy and paste the code instead of the example below.
%%
% \begin{CCSXML}
% <ccs2012>
%  <concept>
%   <concept_id>10010520.10010553.10010562</concept_id>
%   <concept_desc>Computer systems organization~Embedded systems</concept_desc>
%   <concept_significance>500</concept_significance>
%  </concept>
%  <concept>
%   <concept_id>10010520.10010575.10010755</concept_id>
%   <concept_desc>Computer systems organization~Redundancy</concept_desc>
%   <concept_significance>300</concept_significance>
%  </concept>
%  <concept>
%   <concept_id>10010520.10010553.10010554</concept_id>
%   <concept_desc>Computer systems organization~Robotics</concept_desc>
%   <concept_significance>100</concept_significance>
%  </concept>
%  <concept>
%   <concept_id>10003033.10003083.10003095</concept_id>
%   <concept_desc>Networks~Network reliability</concept_desc>
%   <concept_significance>100</concept_significance>
%  </concept>
% </ccs2012>
% \end{CCSXML}

% \ccsdesc[500]{Computer systems organization~Embedded systems}
% \ccsdesc[300]{Computer systems organization~Redundancy}
% \ccsdesc{Computer systems organization~Robotics}
% \ccsdesc[100]{Networks~Network reliability}

% %%
% %% Keywords. The author(s) should pick words that accurately describe
% %% the work being presented. Separate the keywords with commas.
% \keywords{datasets, neural networks, gaze detection, text tagging}

%%
%% This command processes the author and affiliation and title
%% information and builds the first part of the formatted document.
\maketitle

\section{Introduction}
\label{sec:intro}

In recent years, a number of mainstream programming languages have introduced functional APIs to offer users the benefits of functional programming within imperative languages. For instance, Java 8 introduced the notion of streams, which provides an API for processing sequences of elements using common functional operators like map and filter. Similarly, Python offers an expansive functional API incorporating functional operators like map, reduce, and list comprehensions.

While these functional APIs provide a number of benefits such as convenient parallelization \cite{khatchadourian2020safe}, succinct definitions, and easier readability, converting imperative code to functional APIs can be challenging, particularly for developers accustomed to imperative languages. Recognizing this drawback, prior efforts introduced rule-based translators to automatically refactor imperative code into (equivalent) functional versions \cite{gyori2013crossing}, and such approaches have even been incorporated into mainstream IDEs like NetBeans. However, since manually crafting rule-based translators for \emph{arbitrary} imperative code is infeasible, existing approaches can only handle stylized code snippets over commonly occurring imperative code patterns.

%this technique has a significant drawback -- hand-writing translation rules is difficult and time intensive; thus, most systems using this approach are limited to a small set of commonly occurring imperative code patterns.

In this paper, we propose an alternative solution to this transpilation problem based on \emph{counterexample guided inductive synthesis (CEGIS)}~\cite{solar2007sketching}. Our approach  incorporates a novel inductive synthesis engine that combines neural networks (for guiding search) with powerful bidirectional pruning rules based on concolic execution. At a high level, our solution is based on two key observations:

\begin{itemize}[leftmargin=*]
    \item While the high-level structure of the code is quite different between the imperative and functional versions, the source and target programs actually share many syntactically identical low-level expressions (e.g., arguments of API calls). We formalize this observation using a new notion that we call \emph{cognate  grammars} and leverage it to reduce the size of the search space.
    \item Beyond being syntactically identical, the source and target programs enjoy a property that we call \emph{trace-compatibility}:  shared expressions in the two programs always take the same values when executing the source and target programs on the same inputs. 
    %We formalize this observation using a new notion called \emph{trace compatibility} and design bidirectional pruning rules based on concolic execution to significantly reduce the search space.
\end{itemize}

Guided by these observations, we have developed a new neural-guided inductive synthesizer called \tool\footnote{Stands for \emph{Neural Guided Source To Source Translation}} (see Figure~\ref{fig:overview}) that  (1) incorporates a new neural architecture called a \emph{cognate grammar network (CGN)} to guide the search, and (2) a powerful pruning system that significantly reduces the search space using the trace compatibility observation. Our proposed CGN model is an adaptation of the existing \emph{abstract syntax network (ASN)}~\cite{asn} (which is a top-down, tree-structured model that is suitable for reasoning about programs), but it leverages the syntactically shared terms between the source  and target programs to make more accurate predictions. We train the CGN model off-line on   pairs of equivalent (but synthetic) imperative and functional programs and then use it on-line at synthesis time to guide a top-down enumerative search engine. Given a partial abstract syntax tree (AST) where some of the nodes are non-terminal symbols in the target grammar, our approach uses the CGN to predict which grammar productions to use for expanding the non-terminals. This results in a refined AST which is then checked under the trace compatibility assumption. In particular, the goal of  {this check} is to determine whether a given partial program produces any intermediate values that are  inconsistent with the source program. If so, we can prune all completions of the partial program from the search space. In contrast to prior pruning approaches used in inductive program synthesis~\cite{feser2015synthesizing,albarghouthi2013recursive,alur2015synthesis}, the key novelty of our approach is to leverage \emph{intermediate} values (as opposed to just input-output examples) to perform more aggressive pruning.

%Our solution to this problem is guided by a key observation: imperative code and its functional equivalents often share common syntactic and semantic components. We formalize this observation with the notion of \textit{trace compatible} programs, and propose a novel neurosymbolic program synthesis algorithm for synthesizing such programs.

%Our algorithm comprises two components: a data-driven neural component which uses the syntax of the input program to guide the search for trace compatible functional programs, and a deductive component which uses the semantics of the input program to prune the search space. 

% In this paper, we propose a new technique for translating imperative code to functional APIs using neurosymbolic program synthesis. Our approach leverages a data-driven neural component which guides the search for functional programs, and a deductive component which leverages the input program to significantly prune the search space. Unlike hand-written approaches, our technique is able to learn commonly occurring code patterns from data and prune a significant part of the search space, enabling it to generalize to a greater range of imperative code with less effort.

\begin{figure}
    \centering
    \includegraphics[width=.8\linewidth]{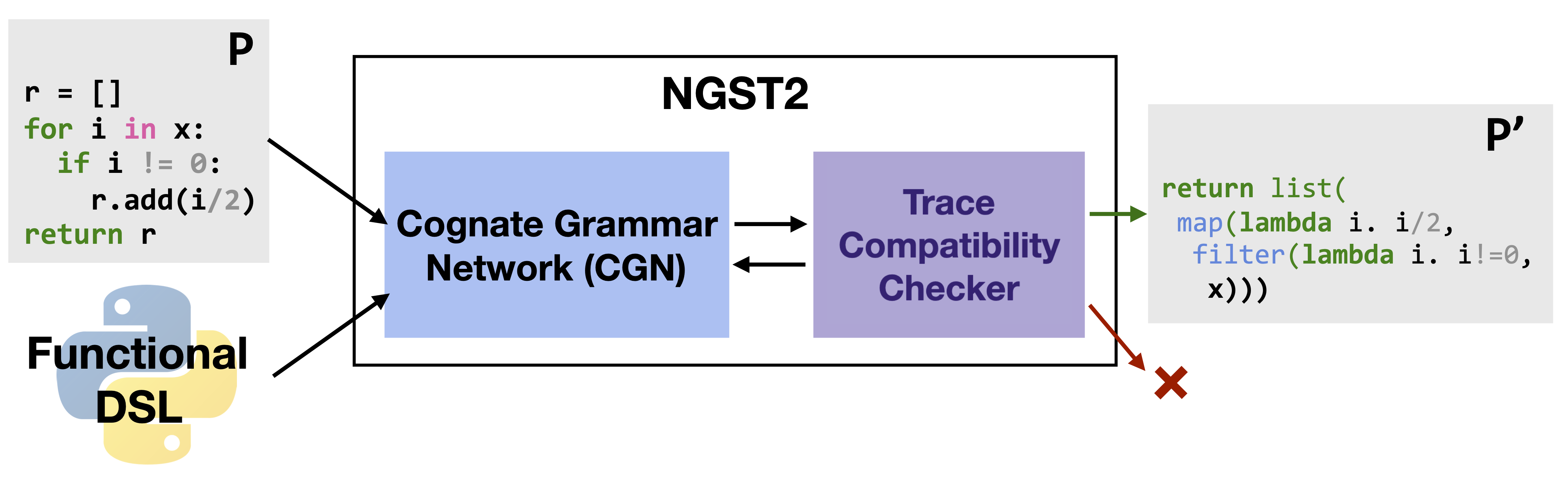}
    \vspace{-0.3in}
    \caption{Schematic Overview of \tool}
    \label{fig:overview}
    \vspace{-0.1in}
\end{figure}

%Figure~\ref{fig:overview} gives a schematic illustration of our synthesis algorithm, which we implemented in a tool called \tool. The neural component is a top-down, tree-structured neural model, which constructs syntactically-valid programs in a top-down manner. The model is closely related to the \textit{abstract syntax network} \cite{asn}, however, it incorporates a custom pointer network which augments the top-down search with a mechanism for copying portions of the input directly into the output. This mechanism enables the model to learn the syntactic relationships between imperative and functional code, and leverage those relationships for efficient synthesis.

%At test time, the network takes as input an imperative program $P$, and iteratively produces (partial) program candidates in a top-down manner. Each such candidate $C$ is fed to the deductive component to determine its validity. The key novelty of our deductive component is that it uses \textit{intermediate} values of the imperative code to constrain the space of possible functional candidates. In particular, we propose a bidirectional pruning technique inspired by \cite{polikarpova2016program}, which simultaneously propagates output constraints top-down (similar to \cite{feser2015synthesizing}) and intermediate value constraints bottom-up. Our pruning technique efficiently exploits the semantic similarities of trace compatible programs to significantly prune the search space of possible candidates.

We evaluate \tool in two different settings that require translating imperative code snippets to equivalent functional variants. In the first setting, we use \tool to migrate imperative Java programs to the functional Java Stream API. In our second client, we use \tool to translate imperative Python code to a functional style using the \functools API and other Pythonic constructs, such as list comprehensions. Overall, \tool is able to automate 80\% of these challenging source-to-source translation tasks and significantly outperforms existing techniques and simpler baselines/ablations. These experimental results demonstrate the effectiveness of our {trace compatibility} checking rules as well as the proposed neural CGN architecture and provide evidence that this work advances the state-of-the-art in imperative-to-functional code translation.

In summary, this paper makes the following key contributions:

\begin{enumerate}[leftmargin=*]
\item We propose the first\old{ general} \new{broadly applicable} solution for translating imperative code to equivalent versions using functional APIs.  
    \item We identify and formalize two key characteristics of this problem (namely, \emph{cognate grammars} and \emph{trace compatibility}) and propose a new neural-guided inductive synthesizer that leverages these properties.
    \item We propose a new neural architecture called \emph{cognate grammar network (CGN)} that can be used for source-to-source translation tasks involving cognate grammars with shared terms.
    \item We propose a novel  pruning technique --- based on the trace compatibility assumption --- for detecting partial programs that produce intermediate values that are inconsistent with the source program. Our pruning technique is based on concolic execution and leverages bidirectional reasoning to reduce the overhead of trace compatibility checking.  
    \item We implement our algorithm in a tool called \tool and evaluate it on real-world programs from two different domains and show that it significantly outperforms existing approaches.
\end{enumerate}
\section{Overview}
\label{sec:overview}

\begin{figure}[t]
    \centering
    \begin{lstlisting}[basicstyle=\small,linewidth=20cm,frame=none, language=Java,breaklines]
    public List<String> getUserRoles(String uID, List<Policy> policies) {
        List<String> roles = new ArrayList<>();
        for (Policy policy : policies) {
            for (Role role : (*\colorbox{pink}{policy.getRoles()}*)) {
                if ((*\colorbox{pink}{role.getIDs().contains(uID)}*)) 
                    roles.add((*\colorbox{pink}{role.getName()}*));
            }
        }
        return roles;
    }
    \end{lstlisting}    
    \vspace{-0.3in}
    \caption{Imperative Java Program}
    \label{fig:motivex_imp}
    \vspace{-0.1in}
\end{figure}

We give a high-level overview of our method with the aid of the motivating example shown in Figure~\ref{fig:motivex_imp}. Here, the \code{getUserRoles} procedure  takes as input a user ID (\code{uID}) and a list of policies and returns all the ``roles'' (e.g., beneficiary, policy owner, etc.) that the specified user has across all policies.
%In order to compute these roles, the method iterates over \code{policies} (line 3), and for each policy, iterates over all of its roles (line 4). If the role's list of user IDs contains the user ID (line 5), it adds this role to the return value \code{roles} (line 6).
 Figure~\ref{fig:motivex_imp} performs this computation in an imperative style using a doubly nested loop, but it is possible to express the same program logic in a functional style using the Java Stream API. As shown in  Figure~\ref{fig:motivex_func}, the functional implementation first converts the input policy to a stream and then obtains the desired result using the higher-order combinators \code{map}, \code{filter}, and \code{flatMap}\footnote{\code{flatMap} maps each element of the input to a stream and then flattens the result to a single stream.} provided by the Stream API. 

%The same computation can be implemented more efficiently using the popular Java Streamsfunctional API, as shown in Figure 3. The stream program computes the output in a similar manner.It first converts the input policy list to a special stream class (line 3), then retrieves the roles for eachpolicy (line 4). Next, it filters out those roles which do not contain the user ID (line 5), and retrievesthe names of all such roles (line 6). Finally, the result is converted to a list (line 6) and returned. Wedescribe the semantics of these common functional operators in more detail in Section

\begin{figure}[t]
    \centering
    \begin{lstlisting}[basicstyle=\small,linewidth=20cm,frame=none, language=Java,breaklines]
    public List<String> getUserRoles(String uID, List<Policy> policies) {
       return policies.stream()
                .flatMap(policy -> (*\colorbox{pink}{policy.getRoles()}*).stream())
                .filter(role -> (*\colorbox{pink}{role.getIDs().contains(uID)}*))
                .map(role - > (*\colorbox{pink}{role.getName()}*))
                .collect(Collectors.toList());
    }
    \end{lstlisting}
    \vspace{-0.1in}
    \caption{Java Stream Program}
    \label{fig:motivex_func}
    \vspace{-0.1in}    
\end{figure}

While the high-level structure of the code is clearly quite different between Figures~\ref{fig:motivex_imp} and~\ref{fig:motivex_func}, we notice 
 that many of the low-level expressions (highlighted in pink) are actually shared between the two programs. In fact, on deeper 
inspection, we realize that the relationship between these expressions goes beyond just surface syntax. As shown in the sample execution traces in Figures~\ref{fig:impexec} and ~\ref{fig:funcexec}, these shared expressions also take the same set of values in two corresponding executions of the imperative and functional programs.

\begin{figure}
    \centering
    \includegraphics[width=.8\textwidth]{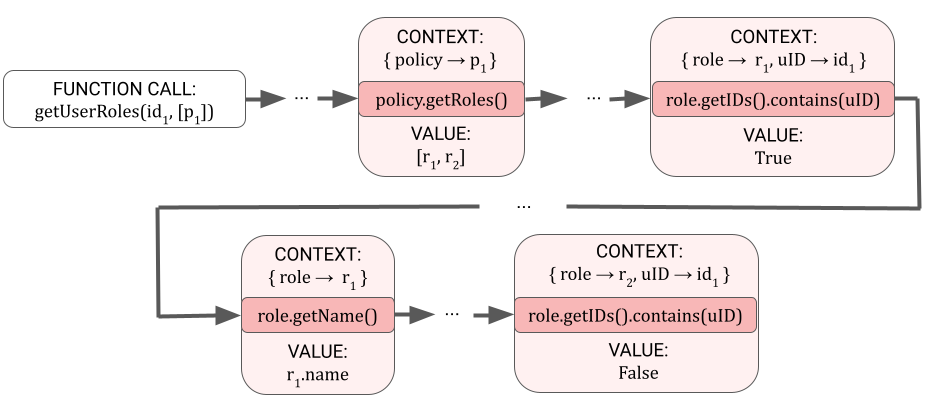}
    \caption{Execution of Imperative Program through Shared Expressions}
    \label{fig:impexec}
    \vspace{-0.1in}
\end{figure}

As illustrated by this example, code snippets written using functional APIs  exhibit close {syntactic} \emph{and} semantic ties to their corresponding imperative version. Our method takes advantage of such relationships between the two programs to dramatically simplify the underlying synthesis task. In particular, we formalize the syntactic relationship between the source and target programs through the concept of \emph{cognate grammars} and express their semantic ties using the notion of \emph{trace compatibility}. 

\subsection{Leveraging Syntactic Ties via CGN}
Intuitively, the syntactic relationship between the two program versions is extremely useful for restricting the search space that a program synthesizer needs to explore. In particular, once the synthesizer produces  a ``sketch'' of the functional program with a certain choice of functional combinators, the arguments of those combinators are not \emph{arbitrary} expressions but rather small snippets taken from the original imperative program.  Thus, this observation immediately gives us a way to reduce the search space.

\begin{figure}
    \centering
    \includegraphics[width=.8\textwidth]{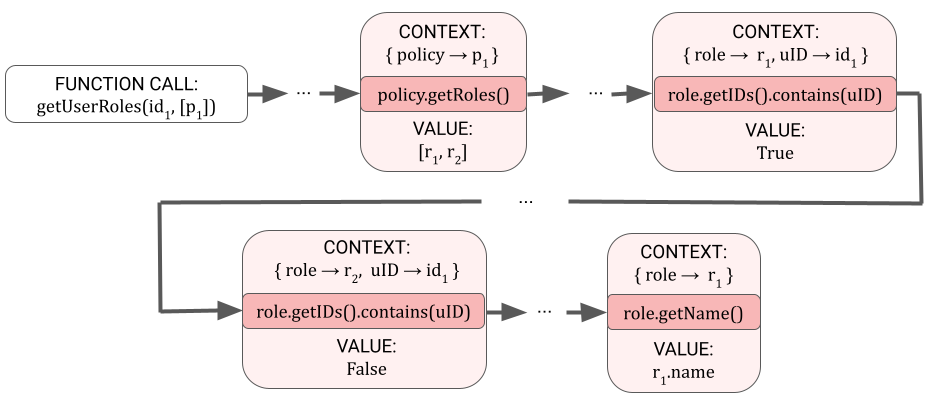}
    \caption{Execution of Functional Program through Shared Expressions}
    \label{fig:funcexec}
    \vspace{-0.2in}
\end{figure}

In addition, our method also uses the syntactic relationship between the two programs to more effectively \emph{guide}  search.
In particular, similar to many prior techniques~\cite{balog2016deepcoder,ye2020optimal}, we want to leverage a neural model  to predict more likely target programs based on the source program. However, existing techniques for neural-guided synthesis do not exploit the syntactic similarity between source and target expressions and predict \emph{grammar productions} instead of \emph{concrete expressions} from the source program. In contrast, we need a method that can predict grammar productions when generating the high-level structure of the program but that \emph{switches to predicting source expression when filling in the remaining holes}.  
%However, this is a challenging problem from a machine learning point because the expressions to be predicted have not been observed during training. 

To address this problem, we formulate the source and target languages as a pair of \emph{cognate grammars} where some of the non-terminals are \emph{shared} between the two languages. Then, based on this concept, we propose a new neural architecture called   \emph{cognate grammar network (CGN)}  to perform predictions over such  grammars. Our proposed architecture extends the existing \emph{abstract syntax network} model \cite{asn} (for generating syntactically valid programs) with  a so-called \emph{pointer network} \cite{jia-liang-2016-data,see-etal-2017-get} in order to make useful predictions when expanding shared non-terminals.

%Our approach overcomes this problem through a novel neural architecture called \emph{cognate grammar network (CGN)} that extends the existing \emph{abstract syntax network} model \cite{asn} with a so-called \emph{pointer network} \cite{jia-liang-2016-data,see-etal-2017-get}.

\subsection{Leveraging Semantic Ties}

\begin{figure}
\begin{center}
\centering
  \begin{minipage}{\linewidth}
\begin{lstlisting}[linewidth=6.5cm,frame=none, numbers=none, language=Java]
return (*$?_0$*).flatMap(policy -> (*\colorbox{pink}{$?_1$}*))
         .map(role -> (*\colorbox{pink}{$?_2$}*))
\end{lstlisting}
\end{minipage}
\end{center}
\vspace{-0.2in}
\caption{Partial program encountered during search.} \label{fig:partial}
\vspace{-0.2in}
\end{figure}

While our proposed CGN architecture makes it possible to exploit syntactic similarities between the source and target programs, we also want to exploit the \emph{semantic ties} between them.  To see what we mean by this, consider the partial program shown in Figure~\ref{fig:partial}. Here, $?_0$, $?_1$, and $?_2$ represent unknown expressions, and the fact that $?_1$ and $?_2$ are highlighted in pink indicates that they must be instantiated with one of the highlighted expressions from the source program in Figure~\ref{fig:motivex_imp}.  Even though $?_0$ is completely unconstrained (i.e., it can be any arbitrary expression), we can use our trace compatibility assumption to prune this partial program using the following chain of inferences for the program executions pictured in Figures~\ref{fig:impexec} and \ref{fig:funcexec}:

\begin{itemize}[leftmargin=*]
    \item Among the four source expressions that $?_1$ could be instantiated with, the only one that would even type check is \code{policy.getRoles()}, which only takes on  value $[r_1, r_2]$ in the execution trace shown in Figure~\ref{fig:impexec}. 
    \item Thus, given the semantics of \code{flatMap}, we can conclude that the expression $?_0.$\code{flatMap(...)} must produce a list which contains $[r_1, r_2]$ as a sublist when executed on the same input as the imperative program.
    \item Now, reasoning backwards from the return value, we know that the output for the partial program on this trace needs to be the list $[r_1]$. \item But, no matter how we instantiate $?_2$, we can never obtain the list $[r_1]$ when invoking \code{map} on an input list containing $[r_1, r_2]$ as a sublist (as \code{map} is length preserving). Thus, we conclude that no completion of this partial program can be equivalent to the imperative source program from Figure~\ref{fig:motivex_imp}.
\end{itemize}

Note that this kind of reasoning would \emph{not} be possible without knowing intermediate values in the execution of the source program. In particular,  if we did not know the possible values that $?_1$ could take, we would have to assume that \code{map} could be called on arbitrary list, which does not provide any pruning opportunities. Our synthesis engine leverages  such shared intermediate values between the two programs and prunes infeasible partial programs using a form of concolic execution.
%through bidirectional \old{type checking} \new{inference} rules that prune the search space by detecting violations of trace compatibility.

\section{Preliminaries and Problem Statement}
\label{sec:prelim}

In this section, we first present some preliminary information and then formalize our problem.

\subsection{CFGs and Partial Programs}

As standard, we define the syntax of our source and target languages in terms of a context-free grammar (CFG):
\begin{definition}[{\bf CFG}]
A context-free grammar $\mathcal{G}$ is a tuple $(V, \Sigma, R, S)$
where $V$ are non-terminals, $\Sigma$ are terminals, $R$ are 
productions, and $S$ is the start symbol.
\end{definition}

Given a string $s \in (\Sigma \cup V)^*$, we use the notation $s \Rightarrow s'$ to denote that $s'$ is obtained from $s$ by replacing one of the non-terminals $N$ in $s$ by a string $w \in (\Sigma \cup V)^*$ such that $N \rightarrow w$ is a production in the grammar. We use the notation $\Rightarrow^*$ to denote the transitive closure of $\Rightarrow$. 

\begin{definition}[{\bf Partial program}]
Let $\mathcal{G}=(V, \Sigma, R, S)$ be a context-free grammar for some programming language $\lang$. A \emph{partial program}  in $\lang$ is a string $\prog \in (\Sigma \cup V)^*$ such that $S \Rightarrow^* \prog$. We  say that $\prog$ is \emph{complete} if $\prog$ does not contain any non-terminals. Finally, a complete program $\prog'$ is a \emph{completion} of partial program $\prog$ if $\prog \derives \prog'$.
\end{definition}

\subsection{Program Equivalence and Transpilation}

The problem we consider in this paper is an instance of \textit{transpilation}, where the goal is to translate a program $\prog_1$ in a programming language $\lang_1$ to a \emph{semantically equivalent} program $\prog_2$ in a different language $\lang_2$. In this work, we presume program semantics are given in terms of denotational semantics {$\exec{\cdot}{\sigma} : \Sigma^* \rightarrow \valtyp$} which maps a complete program {$\prog \in \Sigma^*$} to a value $v \in \valtyp$ given valuation $\sigma: \vartyp \rightarrow \valtyp$  mapping variables in $\prog$ to values.

\begin{definition}[{\bf Equivalence}]
Two programs $\prog_1, \prog_2$ are \emph{semantically equivalent}, denoted $\progeq{\prog_1}{\prog_2}$, if, for all input valuations $\sigma$, we have $\exec{\prog_1}{\sigma} = \exec{\prog_2}{\sigma}$.\footnote{{Since we out technique does not target reactive programs, we assume all source programs terminate, so our equivalence definition is for terminating programs.}} 
\end{definition}

In other words, we consider two programs to be equivalent if they always produce the same output when executed on the same input. 

\begin{definition}[{\bf Transpilation problem}]
\label{def:transpile}
Let $\prog_s$ be a program in some source language $\lang_s$ defined by grammar $\grammar_s$. Given a target language $\lang_t$ defined by grammar $\grammar_t$, the transpilation problem is to find a (complete) program $\prog_t$ in $\lang_t$ such that $\progeq{\prog_s}{\prog_t}$.
\end{definition}

\subsection{Trace Compatible Transpilation}

As discussed in Section~\ref{sec:overview}, the  transpilation problem as introduced in Definition~\ref{def:transpile} is too general to be amenable to an efficient solution (at least without placing severe syntactic restriction on the source or target languages). Motivated by a property we have observed in our target application domain (namely, translating imperative code to functional APIs), we first introduce the \emph{trace compatibility} assumption and then refine the transpilation problem under this restriction.

First, since the notion of trace compatibility only makes sense for certain combinations of languages, we  introduce the concept of \emph{cognate grammars}:

\begin{definition}[{\bf Cognate grammars}]
\label{def:cognate}
Grammars $\mathcal{G}_1 = (V_1, \Sigma_1, R_1, S_1)$ and   $\mathcal{G}_2 = (V_2, \Sigma_2, R_2, S_2)$  are \textit{cognate} iff, for every \emph{shared non-terminal} $N \in {V}_1 \cap {V}_2$, we have    $N \derives  \sterm$ in $\mathcal{G}_1$ iff $N \derives  \sterm$ in $\mathcal{G}_2$. We refer to the words $ \sterm \in (\Sigma_1 \cap \Sigma_2)^*$ that are derivable from the shared non-terminals as \emph{shared terms} between $\mathcal{G}_1$ and $\mathcal{G}_2$.
\end{definition}

In other words,  two grammars are said to be \emph{cognate} if  every shared non-terminal symbol $N \in V_1 \cap V_2$ can derive the same words from $N$ in both grammars. As mentioned in Section~\ref{sec:intro}, this notion of cognate grammars makes sense in the context of translating imperative code to functional APIs because, while the high-level structure of the source and target programs may differ, the low-level expressions often tend to be shared. Thus, we can structure the grammar of the source and target languages to be cognate --- i.e., low-level expressions of the same type are derived from shared non-terminal symbols, whereas high-level constructs (e.g., {\tt for} loops, higher-order combinators, etc.) are derived from unshared non-terminals.

Beyond having  cognate source and target languages, another important  observation in our  setting is that shared terms almost always take the same values in corresponding pairs of executions. To formalize this observation which we refer to as \emph{trace compatibility}, we first assume  a collecting semantics \cite{hudak1991collecting, cousot1994higher} $\coll{\cdot}{\prog}{\sigma} : Expr \rightarrow \powerset(\mathcal{V})$ which maps an expression $e$ to the set of values which $e$ takes during execution of $\prog$ on $\sigma$.

 \begin{example}
 Consider the following  program $\prog$ that takes as input an integer $n$:

\begin{center}
\centering
\begin{minipage}{0.35\linewidth}

\begin{lstlisting}[numbers=none,linewidth=5.5cm,breaklines]
for(i :=0; i < n; i+=2) {
    print(i+1);
}
\end{lstlisting}
\end{minipage}
\end{center}
Here, for input valuation $\sigma: [n \mapsto 3]$, we have
$\coll{i+1}{\prog}{\sigma} = \{ 1, 3 \}$ since these are the values that $i+1$ takes during this execution. 
% $\mathsf{Contexts}($i+1$, \prog, \sigma) = \{ [n \mapsto 3, i \mapsto 0], [n \mapsto 3, i \mapsto 2] \}$ since \old{$\bullet$}\new{$i+1$} is only evaluated when $i=0$ or $i=2$.
 \end{example}
 
Next, we define \emph{trace compatibility} as follows:

% Old version of def 3.7
\begin{comment}
\begin{definition}[{\bf Trace compatibility}]
 Let $\prog_s$ and $\prog_t$ be a pair of source and target programs from cognate grammars $\grammar_s$ and $\grammar_t$. We say that $\prog_s$ and $\prog_t$ are \emph{trace-compatible} on $\sigma$, denoted $\conjudge{\sigma}{\prog_s}{\prog_t}$  iff for every shared term $\old{t} \sterm$ and every $\old{\prog'}\new{\prog_t'}[\bullet]$ where $\prog_t = \old{\prog'}\new{\prog_t'}[t]$, we have:
 \begin{enumerate}
\item $\prog_s = \old{\prog}\new{\prog_s'}[t]$ for some $\old{\prog}\new{\prog_s'}[\bullet]$
 \item If $\sigma_t \in \mathsf{Contexts}(\old{\prog'[\bullet]}\new{\prog_t'[\sterm]}, \sigma)$, there exists  $\sigma_s \in \mathsf{Contexts}(\old{\prog[\bullet]}\new{\prog_s'[\sterm]}, \sigma)$ such that $\exec{\old{t} \sterm}{\sigma_s} = \exec{\old{t} \sterm}{\sigma_t}$ \new{(or equivalently $\sigma_s[vars(\sterm)] = \sigma_t[vars(\sterm)]$)}.
\end{enumerate}
\end{definition}
\end{comment}

% New version of def 3.7
\begin{definition}[{\bf Trace compatibility}]
 Let $\prog_s$ and $\prog_t$ be a pair of source and target programs from cognate  grammars $\grammar_s$ and $\grammar_t$. We say that $\prog_s$ and $\prog_t$ are \emph{trace-compatible} on $\sigma$, denoted $\conjudge{\sigma}{\prog_s}{\prog_t}$  iff for every shared term $\sterm$ \new{appearing (at least once)} in $\prog_t$ we have (1) $\sterm$ \new{appears (at least once)} in $\prog_s$, and (2) $\coll{\sterm}{\prog_t}{\sigma} \subseteq \coll{\sterm}{\prog_s}{\sigma}$.
%  \begin{enumerate}
% \item $\sterm$ in $\prog_s$
%  \item If $v_t \in \coll{\sterm}{\prog_t}{\sigma}$, then $v_t \in \coll{\sterm}{\prog_s}{\sigma}$.
% \end{enumerate}
\end{definition}
 
 In other words, trace compatibility of $\prog_s$ and $\prog_t$ on input $\sigma$ means that (a) every shared term $\sterm$ that appears in $\prog_t$ also syntactically appears in the source program $\prog_s$, and (b) for every value that $\sterm$ can take when executing $\prog_t$ on $\sigma$, there is a corresponding value of $\sterm$ when executing $\prog_s$ on $\sigma$. \new{Note that if there are multiple occurrences of $\sterm$ in some program $\prog$ (i.e., for either $\prog_s$ or $\prog_t$), then $\coll{\sterm}{\prog}{\sigma}$ contains the values that \emph{all} occurrences of $\sterm$ take during execution of $\prog$ on $\sigma$.}
 
 \begin{example}
Consider the following imperative program $\prog_s$, which prints every odd element of input list $x$ multiplied by $2$, and the functional program $\prog_t$ which prints \emph{every} element multiplied by $2$. We assume the term $i*2$ (highlighted in pink) is a shared term in the grammars of $\prog_s$ and $\prog_t$.
\begin{center}
    \begin{tabular}{cc}
\begin{minipage}{0.35\linewidth}

\begin{lstlisting}[numbers=none]
for(int $i$ : odd(x)) {
    print((* \colorbox{pink}{$i*2$}*)); 
}
\end{lstlisting}
\end{minipage}
             &
\begin{minipage}{0.4\linewidth}

\begin{lstlisting}[numbers=none]
map(x, $\lambda$ i. print((* \colorbox{pink}{$i*2$}*)))
\end{lstlisting}
\end{minipage}
    \end{tabular}
\end{center}
These two programs are clearly not equivalent, and they also violate trace compatibility on the input  $\sigma = \{x \mapsto [2]\}$. In particular, $i*2$ produces value $4$ in $\prog_t$ but not $\prog_s$.

%Nevertheless, they are still trace compatible on some inputs. For instance, consider the input $\sigma = \{ x \mapsto [1,3] \}$. The first criteria for trace compatibility is satisfied on all inputs; that is, the shared term $i*2$ in $\prog_t$ also appears in $\prog_s$. The second criteria is also satisfied, because $i*2$ produces the values $2$ and $6$ in both programs. However, $\prog_s$ and $\prog_t$ are not trace compatible on \emph{all} inputs. For example, for $\sigma = \{x \mapsto [2]\}$, $i*2$ produces value $4$ in $\prog_t$ but not $\prog_s$.

\end{example}
 
Using this formalization of trace compatibility, we next define the \emph{trace compatible transpilation} problem that we address in the rest of this paper:

\begin{definition}{\bf (Trace compatible transpilation)}
Let $\lang_s$ and $\lang_t$ be a pair of source and target languages defined by two cognate context-free grammars. Given a program $\prog_s$ in $\lang_s$, the \emph{trace compatible transpilation problem} is to find a program $\prog_t$ in $\lang_t$ such that (1) $\prog_s \equiv \prog_t$, and (2) for all input valuations $\sigma$, we have $\conjudge{\sigma}{\prog_s}{\prog_t}$.
\end{definition}

\section{Neural-Guided Transpilation Algorithm}
\label{sec:alg}

\begin{figure}
\begin{algorithm}[H]
    \begin{algorithmic}[1]
        \Procedure{Transpile}{$\mathcal{P}, \nn, \grammar$}
            \State \textbf{input:} Source program $\mathcal{P}$
            \State \textbf{input:} Neural model $\nn$
            \State \textbf{input:} Context-free grammar $\grammar = (V, \Sigma, R, S) $
            % \State \textbf{input:} Randomly generated inputs $\mathcal{I}$
            \vspace{0.05in}
            \State $\worklist \gets \{S\}$ \Comment{initialize worklist  to $S$, the empty partial program}
            \State $\cexs \gets \emptyset$ \Comment{initialize counterexamples to empty set}
            \While{$\worklist \neq \emptyset$}
                \State $\prog' \gets \textsc{ChooseBest}(\worklist, \nn, \prog)$ \Comment{Dequeue top candidate from priority queue}

                \If{\textsf{IsComplete}($\prog'$)} 
                    \State $(\nu, \sigma) \gets $ \textsf{IsEquivalent}($\prog$, $\prog'$\new{, $\cexs$}) 
                    \If{$\nu$} \Comment{Candidate is equivalent}
                        \State \Return $\prog'$
                    \EndIf
                    \State $\cexs \gets \cexs \cup \{\sigma\}$ \Comment{Add new counterexample}
                    \State \textbf{continue}
                \EndIf
                \If{$\neg$ \textsc{IsFeasible}($\prog', \prog, \cexs$)} \Comment{Check feasibility of partial program}
                 \State {\bf continue}    
                  \EndIf
                  \State $N \gets \mathsf{ChooseNonterminal}(\prog')$
                  \For{$r \in \mathsf{Productions}(N)$ }
                   \State $\worklist \gets \worklist \cup \{ \mathsf{Expand}(\prog', r) \}$ \Comment{ Add expansions of $\prog'$ to worklist }
                   \EndFor

            \EndWhile

            \vspace{0.01in}
            \State \Return $\bot$
    \EndProcedure
    \end{algorithmic}
\end{algorithm}
\vspace{-0.3in}
\caption{Top-level transpilation Algorithm}  
\label{fig:synthalg}
\vspace{-0.2in}
\end{figure}

In this section, we discuss our transpilation algorithm based on neural-guided inductive program synthesis. We first describe our high-level approach  and then explain the key pruning that exploits the trace-compatibility assumption. 

\subsection{Top-level Algorithm}\label{sec:top-level}

Figure~\ref{fig:synthalg} summarizes our transpilation technique based on inductive program synthesis. The {\sc Transpile} procedure takes as input (i) the source program $\prog$,  (ii) a trained neural model $\nn$  (described in the next section), and (iii) a context-free grammar describing the target language. The output of {\sc Transpile} is either a program $\prog'$ that is equivalent to the source program $\prog$ or $\bot$ if no equivalent program is found.  At a high-level, the {\sc Transpile} algorithm is an instantiation of the counterexample-guided inductive synthesis paradigm but  (1) uses a new neural architecture to guide its search, and (2) prunes the search space by using program analysis techniques that exploit the trace-compatibility assumption.

Internally, {\sc Transpile} maintains a worklist $\worklist$ of partial programs and, as standard in CEGIS, a set of counterexamples $\cexs$. After initializing the worklist to an empty partial program (line 5), the algorithm enters  a loop in which it dequeues a partial program from $\worklist$ and expands it using one of the productions in $\grammar$. Specifically, at line 8, it invokes a procedure called {\sc ChooseBest} that returns the ``best" partial program according to the neural model $\nn$ as follows: 
\begin{equation}
\label{eq:nnbestprog}
    \argmax_{\prog' \in \worklist} \nn(\prog'  \ | \ \prog)
\end{equation}

If the chosen program $\prog'$ is complete (i.e., it has no non-terminals), then {\sc Transpile} invokes an equivalence checking engine to test whether $\prog$ and $\prog'$ are equivalent (line 10) and returns $\prog'$ if they are (line 12). In particular, the procedure $\mathsf{IsEquivalent}$ takes as input two programs \new{and the counterexample set} and returns a tuple $(\nu, \sigma)$ where $\nu$ is a boolean indicating whether the two programs are equivalent and $\sigma$ is a counterexample if they are not. Since our approach is based on CEGIS, the returned counterexample $\sigma$ is added to counterexample set $\cexs$ and the algorithm moves on to the next partial program in the worklist (lines 13-14). \new{Note that $\mathsf{IsEquivalent}$ first checks equality on the examples in the counterexample set before invoking a stronger verifier -- if one of these checks fails, the returned counterexample $\sigma$ is simply the counterexample from $\cexs$ which caused the failed check.}

On the other hand, if  $\prog'$ is a partial program with remaining non-terminals, our algorithm checks whether $\prog'$ is feasible under the trace compatibility assumption. In particular, the procedure {\sc IsFesible} (discussed in detail in the next subsection) takes as input $\prog, \prog'$ as well as the counterexamples $\cexs$ and checks whether they are trace-compatible on inputs $\cexs$ (line 15).  If they are not, the algorithm prunes the search space by not considering expansions of $\prog'$. Otherwise, it chooses a non-terminal $N$ used in $\prog'$ and adds all expansions of $\prog'$ obtained by replacing $N$ with some $w \in (\Sigma \cup V)^*$ (for $N \rightarrow w \in R) $ in $\prog'$ to the worklist.

\begin{figure}
\begin{algorithm}[H]
    \begin{algorithmic}[1]
        \Procedure{IsFeasible}{$\prog, \prog', \cexs$}
            \State \textbf{input:} Source program $\prog$
            \State \textbf{input:} Candidate program $\prog'$
            \State \textbf{input:} Counterexamples $\cexs$
            \vspace{0.05in}
            \For{$\sigma \in \cexs$}
                \If{$\nconjudge{\sigma}{\prog'}{\prog}$}
                    \State \Return $\emph{false}$
                \EndIf
            \EndFor
            \vspace{0.01in}
            \State \Return $\emph{true}$
    \EndProcedure
    \end{algorithmic}

\end{algorithm}
\vspace{-0.3in}
\caption{Algorithm for checking trace compatibility}  
\label{fig:prunealg}
\vspace{-0.2in}
\end{figure}

\subsection{Checking Trace Compatibility}
\label{sec:checkTraceCompat}

As illustrated by the discussion in Section~\ref{sec:top-level}, one of the novel aspects of our approach is its ability to rule out partial programs based on the trace-compatibility assumption. In this subsection, we discuss the {\sc IsFeasible} procedure, summarized in Figure~\ref{fig:prunealg}, for checking whether a partial program $\prog'$ has any completion that is trace-compatible with $\prog$ on inputs $\cexs$.

In more detail, {\sc IsFeasible} iterates over all inputs $\sigma \in \cexs$ and checks whether the partial program $\prog'$ satisfies the trace compatibility judgment $
\conjudge{\sigma}{\prog'}{\prog}
$
which indicates that $\prog'$ \emph{may} have a completion  that is trace compatible with $\prog$ on $\cexs$. More importantly, if $\nconjudge{\sigma}{\prog'}{\prog}$, this means that \emph{no} completion of $\prog'$  is trace compatible with $\prog$ and $\prog'$ can be pruned from the search space. Thus, the {\sc IsFeasible} procedure returns false if it finds any input $\sigma$ for which $\nconjudge{\sigma}{\prog'}{\prog}$ (line 7).

In the remainder of this subsection, we discuss our implementation of the $\conjudge{\sigma}{\prog'}{\prog}$ judgment. At a high-level, our technique for checking trace compatibility need to achieve two important goals:

\begin{itemize}[leftmargin=*]
    \item {\bf Pruning power:} In order to have good pruning power, our rules for checking trace compatibility need to be sufficiently precise. That is, if there is no completion of $\prog'$ that is trace compatible with $\prog$ on input $\sigma$, our {inference} rules should derive $\nconjudge{\sigma}{\prog'}{\prog}$ most of the time. 
\item {\bf Low overhead:} Since our transpilation algorithm invokes the trace compatability checker \emph{many} times, it is crucial that this procedure has low overhead. Thus, it must be efficient to check  whether $\nconjudge{\sigma}{\prog'}{\prog}$.
    \end{itemize}

With these goals in mind, we discuss our trace compatibility checking procedure in two steps. First, we describe inference rules that achieve the first goal (i.e., they are very precise and conceptually easy to understand); however, they fall short of our second goal. To address this shortcoming, 
we next describe more complex \emph{bidirectional} rules that achieve better scalability without sacrificing precision.

%we next describe more complex \emph{bidirectional} typing rules that achieve a better trade-off between precision and scalability. 

\subsubsection{Uni-directional  Rules for Trace Compatibility}\label{sec:forward}
\label{sec:unidirecrules}
Given a partial program $\prog'$ and an input valuation $\sigma$, our  rules over-approximate the possible outputs of $\prog'$ using a form of \emph{concolic execution}~\cite{sen2005cute}. The basic idea is to  introduce \emph{symbolic variables} to represent the unknown value of unshared non-terminals. On the other hand, for shared non-terminals, we know which \emph{concrete values} they can take when executing $\prog'$ on $\sigma$. We then propagate this mix of concrete and symbolic values using concolic execution and then check (using an SMT solver) whether it is possible for $\prog'$ to produce the output of $\prog$ on $\sigma$. If not, $\prog'$ and $\prog$ are guaranteed \emph{not} to be equivalent under the trace compatibility assumption, so we can safely prune $\prog'$ from the search space.

\begin{figure}
    \centering
    \textbf{Uni-directional Trace-Compatibility Checking Rules $\uparrow$} \\
    \begin{mathpar}
    \inferrule*[Right=$\nconsist$-$\uparrow$, width=10cm]{
       \sigma \not \vdash \prog' \consist \prog
    } {
       \nconjudge{\sigma}{\prog'}{\prog}
    }
    
    \inferrule*[Right=$\consist$-$\uparrow$, width=10cm]{
      \upjudge{\prog, \sigma}{\prog'}{\symexp} \\ \mathsf{SAT}(\symexp = \exec{\prog}{\sigma}) 
    } {
      \conjudge{\sigma}{\prog'}{\prog}
    }
    \\
    \inferrule*[Right=NTerm-$\uparrow$, width=10cm]{
       \neg \mathrm{Shared}(N) \\ \mathrm{FreshVar}(\nu)
    } {
       \upjudge{\prog, \sigma}{N}{\nu}
    }
    \\
    \inferrule*[Right=S-NTerm-$\uparrow$, width=10cm]{
       \mathrm{Shared}(N)  \\ 
       N \derives w \\ 
       w \in \prog \\
       c \in \coll{w}{\prog}{\sigma}
    } {
       \upjudge{\prog, \sigma}{N}{c}
    }
    \\
    % \inferrule*[Right=S-NTerm-$\uparrow$, width=10cm]{
    %   \mathrm{Shared}(N)  \\ 
    %   N \derives w \\ 
    %   \prog = \prog^{\star}[w] \\
    %   \sigma' \in \mathsf{Contexts}(\prog^{\star}[\old{\bullet}\new{w}], \sigma) \\
    %   \exec{w}{\sigma'} = c
    % } {
    %   \upjudge{\prog, \sigma}{N}{c}
    % }
    % \\
    \inferrule*[Right=VarTerm-$\uparrow$, width=10cm]{
       \mathrm{Variable}(v) \\ 
       \sigma[v] = \symexp
    } {
       \upjudge{\prog, \sigma}{v}{\symexp}
    }
    \\
    \inferrule*[Right=FirstOrderTerm-$\uparrow$, width=10cm]{
       \mathrm{Function}(f) \\ 
       \mathrm{FirstOrder}(f) \\
       \forall i. \, \upjudge{\prog, \sigma}{\pexp_i}{\symexp_i} \\ 
       \symexec{f}(\symexp_1, ..., \symexp_n) = \symexp
    } {
       \upjudge{\prog, \sigma}{f(\pexp_1, ..., \pexp_n)}{\symexp}
    }
    \\
    \inferrule*[Right=HigherOrderTerm-$\uparrow$, width=10cm]{
       \mathrm{Function}(f) \\
       \mathrm{HigherOrder}(f) \\
       \upjudge{\prog, \sigma}{\pexp_1}{[\symexp_1, ..., \symexp_k]} \\ 
       \upjudge{\prog, \sigma[i \mapsto \symexp_i]}{\pexp_2}{\symexp_i'} \\
       \symexec{f}([\symexp_1, ..., \symexp_k], \{ \symexp_1 \mapsto \symexp_1', ..., \symexp_k \mapsto \symexp_k' \}) = \symexp
    } {
       \upjudge{\prog, \sigma}{f(\pexp_1, \, \lambda i. \, \pexp_2)}{\symexp}
    }
    
    \end{mathpar}
    \caption{Uni-directional rules for checking trace compatibility between partial program in target language and a program in the source language. We assume that lambda bindings in higher-order functions do not shadow input variables. }
    \label{fig:forwardrules}
    \vspace{-0.1in}
\end{figure}

{To describe our pruning rules, we assume that the target language $\mathcal{L}_t$  is an instance of the following meta-grammar that takes as arguments (1) $F_H$, a set of higher-order functions, (2) $F$, a set of first-order functions, (3) $V$, a set of variables, and (4) $C$, a set of constants:}
\[
E \rightarrow F_H(E_1, \lambda V. \ E_2) \ | \ F(E_1, \ldots, E_N) \ | \ V \ | \ C
\]
We also assume that we have access to the abstract semantics~\cite{cousot1992abstract} for each function (including both first-order and higher-order) in the target language. Given some construct $f$, we write $\symexec{f}$ to denote the abstract semantics of $f$.

With these assumptions in place, we can now explain the inference rules from Figure~\ref{fig:forwardrules}. All rules with the exception of the first two derive judgments of the form: 
\[
\prog, \sigma \vdash \prog' \uparrow \psi
\]
where $\psi$ is a symbolic expression (i.e., an SMT term\footnote{The exact logical theory used in the SMT encoding depends on the abstract semantics, so we don't specify what logical theory these symbolic expressions belong to.}). 
%In particular, symbolic expressions can include variables, constants and combinations of these using both interpreted and uninterpreted functions. 
%For instance, examples of symbolic expressions include $x+y$, $f(x)$ for some uninterpreted function $f$, $x[i]$, etc. The specific first-order theory for symbolic expressions depends on the provided symbolic semantics $\symexec{\cdot}$ \cite{cousot1992abstract} for operators in the DSL, which maps functions over concrete values to functions over symbolic expressions}.
The meaning of this judgment is that, given source program $\prog$ and input valuation $\sigma$, partial program $\prog'$ can produce symbolic expression $\psi$ under the trace compatibility assumption. Thus, according to the second rule in Figure~\ref{fig:forwardrules}, $\prog$ and $\prog'$ are trace compatible on valuation $\sigma$ if $\prog'$ can produce an expression $\psi$ that is consistent with the output of $\prog$ on $\sigma$ (i.e., we have $\mathsf{SAT}(\psi = \exec{\prog}{\sigma})$). Furthermore, according to the first rule, $\prog$ and $\prog'$ are \emph{not} trace compatible under $\sigma$ if we cannot derive $\sigma \vdash \prog \consist \prog'$ using the second rule.

Next, we explain the rules in Figure~\ref{fig:forwardrules} that derive judgments of the form $\prog, \sigma \vdash \prog' \uparrow \psi$. The rule labeled \textsc{NTerm-}$\uparrow$ is for unshared non-terminals and introduces a fresh variable $\nu$ to represent the unknown value of expressions derived for non-terminal $N$. The  \textsc{S-NTerm-}$\uparrow$ rule  is for \emph{shared} non-terminals $N$ and leverages the trace compatibility assumption. In particular,  under this  assumption, $N$ can only take on values observed when executing some term $w$ (for $N \derives w$) within $\prog$ on input valuation $\sigma$. Thus, we ``look up" the values that $w$ can take in $\prog$ and conclude that $N$ can take value $c$ only if $w$ can evaluate to $c$ when executing $\prog$ on input $\sigma$. 

%for each such term $w$, this rule finds a valuation $\sigma'$ that $w$ is executed on and then evaluates $w$ on $\sigma'$. If the result of the evaluation is $c$, we then conclude that it is possible for some completion of $N$ in $\prog'$ to take on value $c$.

The last three rules in Figure~\ref{fig:forwardrules} deal with terminal symbols in the grammar. The rule labeled \textsc{VarTerm-}$\uparrow$ is for variables and states that variable $v$ evaluates to $\sigma[v]$. The next two rules are for functions but differentiate between first-order and higher-order combinators.\footnote{We view constants as nullary functions; so there is no special rule for constants.} For first order functions, we first evaluate the arguments $\pexp_1, \ldots, \pexp_n$  through recursive invocation of the inference rules; let $\symexp_1, \ldots, \symexp_n$ be the  evaluation result. We then symbolically execute function $f$ on these arguments to obtain the symbolic expression $\symexp$ for $f(\pexp_1, \ldots, \pexp_n)$. 

The final rule is for higher-order functions, which, for simplicity, we assume to be of arity two, with the first argument being a list and the second argument being a lambda abstraction. Here, we first evaluate the first argument $\pexp_1$; suppose that the result is a list of length $k$ with symbolic elements $\symexp_1, \ldots, \symexp_k$. When evaluating the second argument $\lambda i. \pexp_2$, we first bind $i$ to $\symexp_i$ and then evaluate $\pexp_2$ as $\symexpp_i$. Finally, since $f$ takes  $[\symexp_1, \ldots, \symexp_n]$ as its first input and the function $\{ \symexp_1 \mapsto \symexp_1', \ldots, \symexp_k \mapsto \symexp_k'\}$ as its second input, we symbolically evaluate $f$ on these arguments to obtain the final result $\symexp$.

\begin{example}
\label{ex:forward}
Consider the following imperative program $\prog_s$, which takes in two lists of integers, $x_1$ and $x_2$, and produces an output list of all integers $i_1$ in $x_1$ for which $i_1*i_2+1$ is prime for some $i_2$ in $x_2$. Suppose that the synthesizer proposes the incorrect partial functional program $\prog_t$, where $I$ is a shared non-terminal representing integer expressions.

\begin{center}
    \begin{tabular}{cc}
Source program $\prog_s$: & Target partial program $\prog_t$: \\
\begin{minipage}{0.4\linewidth}
\begin{lstlisting}[numbers=none]
r = []
for(int $i_1$ : $x_1$) :
   for(int $i_2$ : $x_2$) :
      if prime($i_1*i_2+1$) :
         r.add($i_1$)  
return r
\end{lstlisting}
\end{minipage}
             &
\begin{minipage}{0.35\linewidth}

\begin{lstlisting}[numbers=none]
map($L_1$, $\lambda$ $i_1$. $I_1*(I_2+I_3$)))
\end{lstlisting}
\end{minipage}
    \end{tabular}
\end{center}
For input $\sigma = \{ x_1 \mapsto [1,2,3,4,5], x_2 \mapsto [11,70,61,72,61]\}$, $\prog_s$ produces the output $[1,1,2,3,4]$, and we can prune $\prog_t$ using these input-output examples and the trace compatibility assumption. In particular, each occurrence of shared non-terminal $I$ can take values of $i_1, i_2, i_1*i_2$, and $i_1*i_2+1$ in the source program. Since these expressions only take positive values, we infer that $I_1*(I_2+I_3)$ cannot be $1$. However, using the semantics of {\tt map}, this would imply that the output list cannot contain $1$, thereby contradicting the fact that the output list is $[1,1,2,3,4]$.

\begin{comment}
. In particular, suppose the following forward semantics for $\mathsf{map}$, which is essentially identical to the expected concrete semantics:
\[
  \symexec{\mathrm{map}}([\symexp_1, \ldots, \symexp_k], \uplus \{ \symexp_i \mapsto \symexp_i^\prime \}) \defeq [\symexp_1^\prime, \ldots, \symexp_k^\prime]
\]
By basic type reasoning, we know $E_1$ must be a list of integers; as $E_1$ is a shared non-terminal, by Rule S-NTerm-$\uparrow$, we know it can take on the value of either $x_1$ or $x_2$, i.e., $[1,2,3,4,5]$ or $[11,70,61,72,61]$. Similarly, we know $E_2$, $E_3$, and $E_4$ must take on integer values from the shared terms $i_1$, $i_2$, $i_1*i_2$, or $i_1*i_2+1$; in total, in the execution of $\prog_s$, these shared terms take on $45$ unique integer values. Thus, to check the trace compatibility of $\prog_t$ with $\prog_s$, we symbolically execute $\prog_t$ with all possible values of the unknowns. Indeed, when we do this, we find that $\prog_t$ is infeasible, as all possible values of $E_2$, $E_3$, and $E_4$ are $\geq 1$, and thus no combination can make $E_2*(E_3+E_4) = 1$ (the first element of the output
\end{comment}
\end{example}

\subsubsection{Bidirectional Rules for Checking Trace Compatibility.} 
\label{sec:bidirecrules}

While the inference rules from Section~\ref{sec:forward} are sufficient for precisely checking trace compatibility, they can be very inefficient to execute. To gain some intuition, consider a term $f(\pexp_1, \ldots, \pexp_k)$ and suppose that each $\pexp_i$ can evaluate to $n$ different symbolic expressions. Then, there are a total of $n^k$ different expressions that $f$ can evaluate to. Thus, as this example illustrates, checking trace compatibility can be exponential in the size of the partial program being evaluated. 

To mitigate this problem, we augment our trace compatibility checking rules using backwards reasoning. The idea is to use the inverse semantics of constructs in the target language to obtain a specification $\varphi$ of each sub-term being evaluated. Then, if a sub-term evaluates to an expression $\symexp$ that is not consistent with $\varphi$ (i.e., $\mathsf{Unsat}(\varphi(\symexp)$), we know that $\symexp$ is infeasible and we need not propagate it forwards. Hence, the use of backwards reasoning allows us to control the number of symbolic expressions being propagated forwards and prevents this exponential blow-up in practice.

\begin{figure}
    \centering
    \textbf{Bidirectional Trace-Compatibility Checking Rules $\updownarrow$} \\
    \begin{mathpar}
    \inferrule*[Right=$\nconsist$-$\updownarrow$, width=10cm]{
       \sigma \not \vdash \prog' \consist \prog
    } {
       \nconjudge{\sigma}{\prog'}{\prog}
    }
    
    \inferrule*[Right=$\consist$-$\updownarrow$, width=10cm]{
       \udjudge{\prog, \sigma\textcolor{blue}{, y = \exec{\prog}{\sigma}}}{\prog'}{\symexp} \\
       \mathsf{SAT}(\symexp = \exec{\prog}{\sigma})
    } {
       \conjudge{\sigma}{\prog'}{\prog}
    }
    \\
    \inferrule*[Right=NTerm-$\updownarrow$, width=10cm]{
       \neg \mathrm{Shared}(N) \\ 
       \mathrm{FreshVar}(\nu) \\ 
       \textcolor{blue}{\mathsf{SAT}(\varphi[\nu])}
    } {
       \udjudge{\prog, \sigma\textcolor{blue}{, \varphi}}{N}{\nu}
    }
    \\
    \inferrule*[Right=S-NTerm-$\updownarrow$, width=10cm]{
       \mathrm{Shared}(N)  \\  
       N \derives w \\
       w \in \prog \\
       c \in \coll{w}{\prog}{\sigma} \\
       \textcolor{blue}{\mathsf{SAT}(\varphi[c])}
    } {
       \udjudge{\prog, \sigma\textcolor{blue}{, \varphi}}{N}{c}
    }
    \\
    % \inferrule*[Right=S-NTerm-$\updownarrow$, width=10cm]{
    %   \mathrm{Shared}(N)  \\  
    %   N \derives w \\
    %   \prog = \prog^{\star}[w] \\
    %   \sigma' \in \mathsf{Contexts}(\prog^{\star}[\old{\bullet}\new{w}], \sigma) \\
    %   \exec{w}{\sigma'} = c \\ 
    %   \textcolor{blue}{\mathsf{SAT}(\varphi[c])}
    % } {
    %   \udjudge{\prog, \sigma\textcolor{blue}{, \varphi}}{N}{c}
    % }
    % \\
    \inferrule*[Right=VarTerm-$\updownarrow$, width=10cm]{
       \mathrm{Variable}(v) \\ 
       \sigma[v] = \symexp \\ 
       \textcolor{blue}{\mathsf{SAT}(\varphi[\symexp])}    
    } {
       \udjudge{\prog, \sigma\textcolor{blue}{, \varphi}}{v}{\symexp}
    }
    \\
    \inferrule*[Right=FirstOrderTerm-$\updownarrow$, width=10cm]{
       \mathrm{Function}(f) \\ 
       \mathrm{FirstOrder}(f) \\
       \forall i. \, \udjudge{\prog, \sigma\textcolor{blue}{, \symexecback{f}{-i}(\varphi, \symexp_1, ..., \symexp_{i-1})}}{\pexp_i}{\symexp_i} \\
       \symexec{f}(\symexp_1, ..., \symexp_n) = \symexp \\ 
       \textcolor{blue}{\mathsf{SAT}(\varphi[\symexp])}
    } {
       \udjudge{\prog, \sigma\textcolor{blue}{, \varphi}}{f(\pexp_1, ..., \pexp_n)}{\symexp}
    }
    \\
    \inferrule*[Right=HigherOrderTerm-$\updownarrow$, width=8cm]{
       \mathrm{Function}(f) \\ 
       \mathrm{HigherOrder}(f) \\
       \udjudge{\prog, \sigma\textcolor{blue}{, \symexecback{f}{-1}(\varphi)}}{\pexp_1}{[\symexp_1, ..., \symexp_k]} \\
       \udjudge{\prog, \sigma[i \mapsto \symexp_i]\textcolor{blue}{, \symexecback{f}{-2}(\varphi, \symexp_i)}}{\pexp_2}{\symexp_i'} \\
       \symexec{f}([\symexp_1, ..., \symexp_k], \{ \symexp_1 \mapsto \symexp_1', ..., \symexp_k \mapsto \symexp_k' \}) = \symexp \\
       \textcolor{blue}{\mathsf{SAT}(\varphi[\symexp])}
    } {
       \udjudge{\prog, \sigma\textcolor{blue}{, \varphi}}{f(\pexp_1, \, \lambda i. \, \pexp_2)}{\symexp}    
    }    
    
    \end{mathpar}
    % $\irulefour{
    % \mathrm{Function}(f) \ \ \ \ \ 
    % \mathrm{HigherOrder}(f)
    % }{
    % \udjudge{\prog, \sigma\textcolor{blue}{, \symexecback{f}{-1}(\varphi)}}{\pexp_1}{[\symexp_1, ..., \symexp_k]}
    % }{
    % \udjudge{\prog, \sigma[i \mapsto \symexp_i]\textcolor{blue}{, \symexecback{f}{-2}(\varphi, \symexp_i)}}{\pexp_2}{\symexp_i'}
    % }{
    % \symexec{f}([\symexp_1, ..., \symexp_k], \{ \symexp_1 \mapsto \symexp_1', ..., \symexp_k \mapsto \symexp_k' \}) = \symexp \ \ \ \ \
    % \textcolor{blue}{\mathsf{SAT}(\varphi[\symexp])}    
    % }
    % {
    % \udjudge{\prog, \sigma\textcolor{blue}{, \varphi}}{f(\pexp_1, \, \lambda i. \, \pexp_2)}{\symexp}
    % }$   [HigherOrderTerm-$\updownarrow$] \\~\\  
    % \end{tabular}
    \caption{Bidrectional Rules}
    \label{fig:bidirecrules}
    \vspace{-0.2in}
\end{figure}

With this intuition in mind, we now explain  the bidirectional trace compatibility checking rules shown in Figure~\ref{fig:bidirecrules} which derive judgments of the following form:
\[
\prog, \sigma, \varphi \vdash \prog' \updownarrow \symexp
\]
The meaning of this judgment is that $\prog'$ can evaluate to $\symexp$ under the assumption that (1) $\prog'$ satisfies $\varphi$ and (2) $\prog'$ is trace compatible with $\prog$ on valuation $\sigma$.  As expected, these rules are a refinement of those from Figure~\ref{fig:forwardrules} and the differences from Figure~\ref{fig:forwardrules} are indicated in blue. 
The rule $\consist$\textsc{-}$\updownarrow$ is the same as $\consist$\textsc{-}$\uparrow$, but it initializes the specification $\varphi$ of $\prog'$ to state that $\prog'$ must produce the same output as $\prog$ when executed on $\sigma$.
The rules labeled \textsc{NTerm-}$\updownarrow$, \textsc{S-NTerm-}$\updownarrow$, and \textsc{VarTerm-}$\updownarrow$ are exactly the same as the corresponding rules from Figure~\ref{fig:forwardrules} with the only difference being the consistency check between the specification  and the result of the concolic evaluation. 

The last two rules for function terms are slightly more involved and make use of the inverse semantics. In particular, in the rule labeled \textsc{FirstOrderTerm-}$\updownarrow$, we first compute the specification for argument $\pexp_i$ using the inverse semantics of $f$ with the respect to its $i$'th argument. Specifically, we use the notation $\symexecback{f}{-i}$ to denote the function that takes as input the specification for the whole term and the (symbolic) values for expressions $\symexp_1, \ldots, \symexp_{i-1}$  and yields the specification for the $i$'th argument. Then, when evaluating $\symexp_i$, we use $\symexecback{f}{-i}(\varphi, \symexp_1, \ldots, \symexp_{i-1})$ as the specification.

\begin{example}
Consider the term $+(N_1, N_2)$ and suppose that we have:
\[
\prog, \sigma, \symexecback{+}{-1}(y>0) \vdash N_1 \updownarrow 1
\]
Note, the argument $y > 0$ of $\symexecback{+}{-1}$ is the specification for the entire term, i.e., $+(N_1, N_2)>0$. Then, using the inverse semantics of $+$, we can infer the specification $y>-1$ for the second argument $N_2$. This specification can be used to prune all negative values for $N_2$. 
\end{example}

The next rule for higher-order combinators is similar and uses the inverse semantics in exactly the same way. In particular, note that the satisfiability check reduces the space of possible evaluation results $\psi$ for $f(e_1, \lambda i_1. e_2)$  and therefore reduces the number of terms propagated forward.

The following theorem states that our pruning rules only rule out partial programs that are not trace compatible with the source program on valuation $\sigma$.

\begin{theorem}
\old{Suppose that $\nconjudge{\sigma}{\prog}{\prog'}$. Then, there is no completion of $\prog'$ that is trace-compatible with $\prog$ on input $\sigma$.}
\new{Suppose that $\nconjudge{\sigma}{\prog}{\prog'}$ and both the forward semantics $\symexec{\cdot}$ and backward semantics $\symexecback{\cdot}{-i}$ provided are sound. Then, there is no completion $\prog_t$ of $\prog'$ that is trace-compatible with $\prog$ on input $\sigma$ where $\exec{\prog_t}{\sigma} = \exec{\prog}{\sigma}$.}
\end{theorem}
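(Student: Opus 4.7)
The plan is to prove the contrapositive: assume there exists a completion $\prog_t$ of $\prog'$ that is trace-compatible with $\prog$ on $\sigma$ and satisfies $\exec{\prog_t}{\sigma} = \exec{\prog}{\sigma}$, and derive $\conjudge{\sigma}{\prog'}{\prog}$ using the rules of Figure~\ref{fig:bidirecrules}. By rule $\consist$-$\updownarrow$, it suffices to exhibit a symbolic expression $\symexp$ such that $\udjudge{\prog, \sigma, y = \exec{\prog}{\sigma}}{\prog'}{\symexp}$ and $\mathsf{SAT}(\symexp = \exec{\prog}{\sigma})$. The guiding intuition is to build $\symexp$ so that its free (fresh) variables can be instantiated to the concrete values taken by the corresponding parts of $\prog_t$ on $\sigma$; the satisfiability check then reduces to the assumption $\exec{\prog_t}{\sigma} = \exec{\prog}{\sigma}$.

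The core of the argument is a structural induction lemma on sub-partial-programs of $\prog'$. For every sub-partial-program $\pi$ of $\prog'$, its corresponding sub-program $\pi_t$ in $\prog_t$, every valuation $\sigma'$ arising as an execution context of $\pi_t$ during $\exec{\prog_t}{\sigma}$, and every specification $\varphi$ threaded into $\pi$ by the surrounding derivation, one can derive $\udjudge{\prog, \sigma', \varphi}{\pi}{\symexp}$ with $\symexp$ compatible with $\exec{\pi_t}{\sigma'}$, provided that the latter value satisfies $\varphi$. The terminal base cases are immediate: a variable $v$ gives $\symexp = \sigma'[v] = \exec{v}{\sigma'}$; an unshared non-terminal instantiates its fresh variable $\nu$ to $\exec{\pi_t}{\sigma'}$, and $\mathsf{SAT}(\varphi[\nu])$ follows from the hypothesis on $\varphi$. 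For a shared non-terminal $N$ with $\pi_t$ a shared term $\sterm$, the trace-compatibility of $\prog_t$ with $\prog$ on $\sigma$ guarantees that $\sterm$ occurs in $\prog$ and $\coll{\sterm}{\prog_t}{\sigma} \subseteq \coll{\sterm}{\prog}{\sigma}$, so the concrete value $\exec{\pi_t}{\sigma'}$ is a legal choice of $c$ in rule \textsc{S-NTerm-}$\updownarrow$.

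For the inductive cases, soundness of the abstract semantics plays the central role. For a first-order application $f(\pexp_1, \ldots, \pexp_n)$, soundness of $\symexecback{f}{-i}$ guarantees that the concrete value of the $i$-th argument satisfies the backward specification $\symexecback{f}{-i}(\varphi, \symexp_1, \ldots, \symexp_{i-1})$ once $\symexp_1, \ldots, \symexp_{i-1}$ have been chosen inductively; the inductive hypothesis then yields $\symexp_i$ for each argument, and soundness of the forward $\symexec{f}$ ensures that $\symexec{f}(\symexp_1, \ldots, \symexp_n)$ can be instantiated to $\exec{\pi_t}{\sigma'}$, discharging the side-condition $\mathsf{SAT}(\varphi[\symexp])$. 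The higher-order rule \textsc{HigherOrderTerm-}$\updownarrow$ follows the same template, invoking the lemma once on the list argument (under $\symexecback{f}{-1}(\varphi)$) and once per element binding $i \mapsto \symexp_i$ on the lambda body (under $\symexecback{f}{-2}(\varphi, \symexp_i)$). Specializing the lemma to $\pi = \prog'$, $\varphi = (y = \exec{\prog}{\sigma})$, and $\sigma' = \sigma$ yields the derivation of $\conjudge{\sigma}{\prog'}{\prog}$, contradicting $\nconjudge{\sigma}{\prog}{\prog'}$.

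The main obstacle is the higher-order case. There I need to quantify the inductive hypothesis uniformly over every per-element valuation $\sigma[i \mapsto \symexp_i]$ that arises while the lambda body of $\pi_t$ is traversed during the completion's execution, and I must argue that $\coll{w}{\prog}{\sigma}$ used by \textsc{S-NTerm-}$\updownarrow$ records \emph{all} values the corresponding source expression assumes across the entire execution of $\prog$ on $\sigma$, not merely within a single iteration, since this is what lets the trace-compatibility inclusion $\coll{\sterm}{\prog_t}{\sigma} \subseteq \coll{\sterm}{\prog}{\sigma}$ discharge the shared non-terminal cases arising inside the lambda. I also expect to rely on the non-shadowing assumption from the caption of Figure~\ref{fig:forwardrules} to ensure the extended valuations used in the derivation coincide with the valuations under which $\pi_t$ actually executes, and on a careful definition of ``corresponding sub-program of $\prog_t$'' that matches positions in the AST of $\prog'$ rather than in the surface syntax.
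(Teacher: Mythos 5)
Your proposal is correct in substance but takes a genuinely different route from the paper. You prove the contrapositive by a \emph{single} structural induction directly on the bidirectional rules: for each sub-partial-program $\pexp$ of $\prog'$ with corresponding concrete subterm of the completion $\prog_t$, you exhibit a derivation $\udjudge{\prog,\sigma',\varphi}{\pexp}{\symexp}$ whose fresh variables can be instantiated to the values $\prog_t$ actually computes, using trace compatibility for shared non-terminals, forward soundness to discharge $\mathsf{SAT}(\varphi[\symexp])$, and backward soundness to establish the precondition of the inductive hypothesis on each argument's propagated specification. The paper instead factors the argument into two independent pieces: first a soundness proof for the \emph{unidirectional} rules via an ``inclusion lemma'' (the derived $\symexp$ over-approximates every concrete value a completion can take in any execution context, formalized with concretizations of symbolic valuations and a collecting semantics for contexts), and second a proof that the unidirectional and bidirectional rules have \emph{equivalent} pruning power (two lemmas, one per direction, with backward-semantics soundness used only in the direction showing the bidirectional side-conditions never block a derivation the unidirectional rules admit). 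The paper's factoring buys the stronger, independently interesting fact that adding backward reasoning loses no pruning power; your fused induction is more economical if one only wants soundness. Two points deserve care in your write-up. First, the paper's appendix replaces the nondeterministic \textsc{S-NTerm} rule by a deterministic variant returning a fresh variable constrained by a disjunction over all collected values; you work with the nondeterministic main-text rule, which is fine for refuting $\nconjudge{\sigma}{\prog'}{\prog}$ since one witnessing derivation suffices, but you should say so explicitly. Second, in your first-order case there is an apparent circularity: the backward-soundness definition lets you conclude $\mathsf{SAT}(\symexecback{f}{-i}(\varphi,\symexp_1,\ldots,\symexp_{i-1})[\symexp_i])$ only \emph{after} you know $\symexec{f}(\symexp_1,\ldots,\symexp_n)=\symexp$ and $\mathsf{SAT}(\varphi[\symexp])$, yet those $\symexp_i$ are produced by the very sub-derivations whose specifications you are trying to justify. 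This is resolvable --- the symbolic values are determined by the forward propagation independently of the specification checks, so you can compute them first and then discharge all side-conditions (this is exactly how the paper's second lemma is organized) --- but your sketch should make that ordering explicit, and your phrase ``the concrete value satisfies the backward specification'' should be sharpened to ``the backward specification instantiated with the symbolic value is satisfiable,'' matching the paper's definition of backward soundness.
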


\ifExt
\begin{proof}
See Appendix~\ref{app:pruning}
\end{proof}
\else
\begin{proof}
Please see the \href{https://rebrand.ly/r1ckr0l13r}{extended version} for full proofs.
\end{proof}
\fi
\section{Cognate Grammar Network}
\label{sec:net}

The transpilation algorithm from Figure~\ref{fig:synthalg} uses a neural model to guide its search. 
In this section, we discuss the \emph{cognate grammar network} (\net) that serves as our custom neural model in this context. In what follows, we first explain the  \emph{abstract syntax network} (\asn) model on which \net is based \cite{asn}, then describe how \net extends the \asn architecture with a copying mechanism \cite{jia-liang-2016-data,see-etal-2017-get} to handle shared nonterminals, and, finally, we give a brief overview of our training phase.

\subsection{ASN Preliminaries }
\label{subsec:asn}

Our goal is to model the distribution $p(\prog_t \mid \prog_s)$ over programs $\prog_t$ in the target language given programs $\prog_s$ in the source language. An abstract syntax network models this distribution by decomposing the probability of a complete target program $\prog_t$ into a series of probabilities for each AST node in $\prog_t$.

\paragraph{Notation.} To formalize this decomposition, we first introduce some notation. Following \cite{ye2020optimal}, given node $n$ in (complete) program $\prog$ (simply denoted $n \in \prog$), we define an AST path $\astpath(\prog,n) = ((n_1, i_1), \ldots, (n_k, i_k))$ to be a sequence of (node, index) pairs where node $n_{j+1}$ is the $i_j$'th child of node $n_j$, and the $i_k$'th child of node $n_k$ is $n$. In other words,  $\astpath(\prog,n)$ is the path from the root node of the AST to node $n$. For each node $n$, we use the notation $\derive{n}$ to denote the CFG production used to assign a terminal symbol to $n$.

\begin{example}
The AST for the simple functional program $\mathsf{map}(x, \lambda i. i + 1)$ is pictured below. The name for each node is given as a subscript.
\begin{center}
    \includegraphics[scale=.2]{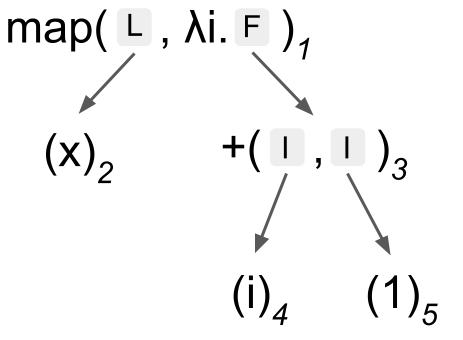}
\end{center}
The ast path to node $n_4$, denoted $\astpath(\prog, n_4)$, is $((n_1,2),(n_3,1))$ and the production rule used to assign $n_4$, $\derive{n_4}$, is $I \rightarrow i$.
\end{example}

The \asn $\nn$ models $p(\prog_t \mid \prog_s)$ using the distribution $\nn(\derive{n} \mid \astpath(\prog_t, n), \prog_s)$ over grammar productions $\derive{n}$ for each node $n$ in $\prog_t$ given the AST path $\pi$ to $n$ and the source program $\prog_s$. In particular, the probability $\nn(\prog_t \mid \prog_s)$ is computed %as the product of each node's grammar rule distribution 
as follows:
\begin{equation}
\label{eq:nnbestprod}
    \nn(\prog_t \mid \prog_s) = \prod_{n \in \prog_t} \nn(\derive{n} \mid \astpath(\prog_t, n), \prog_s)
\end{equation}
In other words, the probability of program $\prog_t$ given $\prog_s$ is given as the product of the probability of each grammar rule used to construct $\prog_t$.

Following past work \cite{ye2020optimal}, we compute Equation~\ref{eq:nnbestprod} by first encoding the source program $\prog_s$ with a $\bilstm$ \cite{lstm} over its linearized representation $\linear(\prog_s)$ to form an embedding $h_p = \bilstm(\linear(\prog_s))$. In this work, $\linear(\prog_s)$ simply treats $\prog_s$ as a sequence of tokens, where each token corresponds to a keyword or symbol from $\prog_s$ (e.g., {\tt for}, {\tt if}, {\tt =}, etc.).
We then use an LSTM\footnote{Note that this is a unidirectional LSTM, since we are encoding a sequential ``history'' of the node to predict its children. The most recent ancestors in the tree are what matter the most.} to produce embedding $h_n = \lstm(h_p, \astpath(\prog_t, n))$ for node $n$ of the target program AST, which will be used to form the distribution over production rules at this node. In particular, given embedding $h_n$, the probability for each production rule at node $n$ is computed using a feedforward neural network ($\ffnn$) module with attention over the input source program:
\begin{equation}
\label{eq:asn}
    \nn(\cdot \mid \astpath(\prog_t, n), \prog_s) = \softmax(\ffnn(h_n;\attn(h_n, \bilstm(\linear(\prog_s)))))
\end{equation}
Attention is a standard tool in sequence-to-sequence models for making information from the input more accessible to the decoder model \cite{bahdanau2015neural, luong-etal-2015-effective}. In our case, the attention layer computes a distribution over tokens in the linearized source program $\linear(\prog_s)$ for each node $n$, indicating which input tokens contribute most (and least) to the probability of production rules used at $n$. 

\subsection{Proposed CGN Model}

While the \asn model allows us to model $p(\prog_t \mid \prog_s)$, it is not necessarily well suited to our problem setting.  In particular,  source and target programs in our context are generated using cognate grammars, where some terms are shared between the two programs. For nodes corresponding to shared non-terminals, we want our model to produce a probability distribution over \emph{terms} in the source program as opposed to \emph{production rules} in the grammar. Thus, we propose the \emph{cognate grammar network (CGN)} model, which comprises (1) a \emph{copy} component for handling nodes  that correspond to shared non-terminals,  and (2) an \asn component for handling nodes corresponding to unshared non-terminals. In essence, the copy component directly \emph{copies} terms from the source program to the target program \new{in the same way that pointer networks copy portions of input text during text summarization \cite{see-etal-2017-get}}, while the \asn component builds target program terms from grammar rules in the target language.

% predicting high-level program structure, and a a \emph{copy} component for copying shared terms from the source program.

In more detail, for each node $n$ that \emph{does not} correspond to a shared non-terminal (i.e., the LHS of $\derive{n}$ is \emph{not} a shared non-terminal), the ASN component of the \net simply uses Equation~\ref{eq:asn} for determining the probability of a production rule at $n$. However, for each node $n$ which corresponds to a shared non-terminal $N$, the copy component of the \net computes the distribution $p(t \mid \astpath(\prog_t,n), \prog_s)$ over terms $t$ from the source program derivable from $N$ ($N \derives t)$ given the AST path to $n$ and source program $\prog_s$. To compute this, we must first compute an encoding for the entire source program term $t$, which (potentially) consists of multiple tokens, each with their own separate $\lstm$ hidden state embedding. Following work on encoding spans in constituency parsing models \cite{kitaev-klein-2018-constituency}, we compute the encoding of $t$ as the difference of the LSTM hidden states corresponding to the start and end tokens of $t$ as follows:
\begin{equation}
    \encode(t) = \frac{1}{2}(\hidden_S - \hidden_E)
\end{equation}
where $\hidden_S$ and $\hidden_E$ are the LSTM hidden states for the start and end tokens respectively, and $\frac{1}{2}$ is a scaling factor.

To compute the probability of copying some source program term $t$, we learn a weight matrix $W$ which is combined with the term encoding $\encode(t)$ and target program node embedding $h_n$ as follows:
\begin{equation}
\label{eq:copy}
    \nn(t \mid \astpath(\prog_t, n), \prog_s) = \softmax(h_n^\top W \encode(t))
\end{equation}

\begin{comment}
While this formulation of the \net is sufficient, it ignores one key observation: passing the complete source program $\prog_s$ to the \asn component may be unnecessarily verbose. This is because the \asn need not predict shared terms, and thus shared terms are only useful as \emph{context} for predicting unshared non-terminals. Thus, we introduce a coarsening function $\coarse(\prog_s)$ which replaces each shared term in $t$ with a placeholder token $\star$. Using this, we now compute Equation~\ref{eq:asn} as:
\begin{equation}
    \nn(\cdot \mid \astpath(\prog_t, n), \prog_s) = \softmax(\ffnn(h_n;\attn(h_n, \lstm(\linear(\coarse(\prog_s))))))
\end{equation}

As the copy mechanism must predict shared terms, we introduce a second $\lstm$ which we use to compute the program embedding $h_p^c = \lstmcopy(\linear(\prog_s))$ over the \emph{complete} source program and a second $\bilstm$ which we use to compute the embedding $h_n^c = \bilstmcopy(h_p^c, \astpath(\prog_t, n))$ for each $n$ corresponding to a shared non-terminal. Thus, Equation~\ref{eq:copy} becomes:
\begin{equation}
    \nn(t \mid \astpath(\prog_t, n), \prog_s) = \softmax(h_n^{c^\top} W \encode(t))
\end{equation}
\end{comment}

\subsection{Training}

Our model is trained on a labeled dataset of paired programs $\mathcal{D} = (\prog_s, \prog_t)$. As is standard for this type of model, we maximize the conditional log probability of the target program given the source program:
\begin{equation}
\mathcal{L}(\theta) = \sum_{i=1}^{|\mathcal{D}|} \log \nn(\prog_t \mid \prog_s)
\end{equation}
It should be noted that we can directly minimize this quantity, as the copy component is invoked if and only if a node corresponds to a shared non-terminal and the corresponding source-side shared terminal is unambiguous. In other words, the ground-truth decision for each production rule choice $\derive{n}$ or term choice $t$ at node $n$ is unambiguous, regardless of whether $n$ corresponds to a shared or unshared non-terminal. 

{Similar to prior work \cite{ye2020optimal}, we train the networks using \emph{complete} programs as opposed to the partial programs seen at test time. This is possible because both the \asn and \net compute a probability distribution based on the source program $\prog_s$ and AST path $\pi(\prog_t, n)$ to the node $n$ being expanded (Equations~\ref{eq:asn} and \ref{eq:copy}). Thus, when learning these probability distributions from complete program pairs $(\prog_s, \prog_t)$, we train the network for each node $n$ in $\prog_t$ where the ground-truth label $\derive{n}$ (or $t$ for \net) is given by the choice of grammar rule (or term for \net) used to expand $n$ in $\prog_t$ and the embeddings for $\pi(\prog_t, n)$ and $\prog_s$ are computed as described in Section~\ref{subsec:asn}.}
\section{Implementation and Instantiations}
\label{sec:implementation}

We implemented our transpilation algorithm in a tool called \tool, which is parameterized over the source and target language grammars. 
%In addition to the source and target language semantics, we also must provide both forward and inverse concolic semantics for the target language.
In this section, we first discuss  salient aspects of our implementation and then describe  how to instantiate \tool for a pair of source/target languages.

\subsection{\tool Implementation}
\tool is implemented in Python and performs several optimizations over the algorithm presented in Sections~\ref{sec:alg} and~\ref{sec:net}.

%extending both the Trinity program synthesis framework \cite{martins2019trinity} and the open-source \asn implementation from \todo{cite xi?}. Building \tool and scaling it to real-world transpilation problems required a number of important design decisions and optimizations which we discuss briefly here.

\paragraph{Training.} For training our \net, we use the Adam optimizer \cite{adam}, with early stopping based on the accuracy achieved on the validation set and a dropout mechanism~\cite{srivastava14dropout} to prevent model overfitting. Additionally, we apply the gradient norm clipping method \cite{pascanu13clipping}  to stabilize  training.

\paragraph{Training {data}.} As there are no existing datasets for our target application domains, we  generated synthetic source/target program pairs to train the \net. Our data generation procedure relies on a key insight: \textit{while translating imperative code to functional APIs is challenging, translating functional APIs to imperative code is straightforward}. In particular, this \emph{reverse} translation problem can be achieved with a small set of composable translation rules. We propose a translation function $\translate(e,r)$ that converts a functional expression $e$ into equivalent imperative code $c$ with return value saved to variable $r$. For example, consider the following implementation of $\translate$ for the functional operator $map$:
\begin{center}
\centering
  \begin{minipage}{\linewidth}
\begin{lstlisting}[linewidth=10.5cm,frame=none, language=python, numbers=none]
                               translate($e_1$, $r_1$)
                               for $i \in r_1$:    
translate(map($e_1$, $\lambda i$. $e_2$), $r$) =         translate($e_2$, $r_2$)
                                 $r$.add($r_2$)
\end{lstlisting}    
\end{minipage}
\end{center}
This translation rule simply inlines the result of translating subexpressions $e_1$ and $e_2$, and rules for other higher-order operators follow a similar pattern. Thus, to generate training data, we randomly sample functional programs and use the the translator described above to generate matching imperative programs.
\new{
Note that using this same approach in reverse is not feasible -- i.e., applying these translation rules in reverse would not yield a solution to our problem. This is because these rules construct a \emph{separate loop} for each higher-order operator in a functional expression. Thus, to apply the rules in reverse (i.e., to apply compositional rules to imperative code), each loop would need to correspond to exactly one higher-order operator, which is rarely the case. For example, the three higher-order operators in Figure~\ref{fig:motivex_func} correspond to only two loops in Figure~\ref{fig:motivex_imp}.
}

\paragraph{Equivalence checker.} Our implementation utilizes a (bounded) verifier implemented on top of {\sc Rosette}~\cite{torlak2013growing} to check equivalence between the source and target programs. Specifically, the checker takes  as input two programs $\prog_s, \prog_t$ as well as a symbolic input $x$ and verification bound $K$ that determines how many times loops are unrolled. It then symbolically executes $\prog_t$ and $\prog_s$ on $x$, resulting in symbolic output states $y_t$ and $y_s$. If $y_t$ and $y_s$ are proven equal, $\prog_t$ and $\prog_s$ are known to be equivalent up to bound $K$. While the use of a bounded verifier can, in principle, result in \tool producing the wrong target program, we have not found this to be an issue in practice.
%To make our equivalence checker applicable to multiple languages, we developed compilers to translate both to a common intermediate representation (IR). We then implemented our checker on top of the solver-aided programming language {\sc Rosette} \cite{torlak2013growing}, which takes in two programs in the IR and performs K-bounded equivalence checking.

%Once \tool generates a complete target program $\prog_t$, we must check if that program is equivalent to the source program $\prog_s$. To achieve this, we first translate $\prog_t$ into an equivalent imperative program $\prog_t^{s}$ in the same language as $\prog_s$. We perform this translation using the same set of compositional translation rules described above. To check the equivalence of $\prog_t^s$ and $\prog_s$, we implemented a (bounded) verifier. The construction of our verifier is that of a standard bounded model checker \cite{clarke2001bounded}. 

\subsection{Instantiating \tool}\label{sec:inst}

We have instantiated \tool  for two pairs of source/target languages. Our first client involves translating standard Java code to functional equivalents written using the Stream API, and the second requires converting imperative Python to code snippets using the {\tt functools} API. To give the reader a sense of what it takes to customize \tool, we describe how we instantiated it in the context of our Java client.

%To instantiate \tool for our clients, we must provide cognate source and target language DSLs, as well as both forward and backward concolic semantics for the target language. In what follows, we will give a brief overview of these components for both the Java Stream and Python \functools clients.

\begin{comment}
\begin{figure}[t]
  \centering
  \[
  \begin{array}{lll}
    \mathrm{Statement} \ S \ & \rightarrow & V := E \ | \ \mathrm{skip} \ | \ I \ | \ S;S \ | \ \mathrm{return} \ V \\
    & & \mathsf{for} \ (V \in E) \ S \ | \ \mathsf{if} \ (E) \ S \ \ S \   \mathsf{else} \  S \\ 
       \mathrm{Invocation} \ I \ & \rightarrow & V.F_I(\overline{E}) \\ 
     \mathrm{Expression} \ E \ &  \rightarrow &  V \ | \ C \ | \ F_P(\overline{E})  \\
    \mathrm{Variable} \ V \ & \rightarrow & x \ | \ i_1 \ | \ \ldots \ | \ i_m \\
    \mathrm{Constant} \ C \ & \rightarrow & k \in \mathbb{Z} \ | \ b \in \mathbb{B}\\
  \end{array}
 \]
 \vspace{-0.1in}
  \caption{Source language \idsl where $F_I$ denotes impure functions, and $F_P$ denotes pure functions}
  \label{fig:idsl}
  \vspace{-0.15in}
\end{figure}
\end{comment}

\begin{figure}[t]
  \centering
  \[
  \begin{array}{lll}
    \mathrm{Application} \ A \ & \rightarrow \mathsf{map}(A, \lambda V. A) \ | \ \mathsf{filter}(A, \lambda V. E) \ | \ \mathsf{flatmap}(A, \lambda V. A) \ | \\
    & \;\;\;\;\,\mathsf{find}(E, A, \lambda V. E) \ | \ \mathsf{fold}(E, A, \lambda V,V. E) \ | \ E \\
     \mathrm{Expression} \ E \ &  \rightarrow V \ | \ C \ | \ f(\overline{E})  \\
    \mathrm{Variable} \ V \ & \rightarrow x \ | \ i_1 \ | \ \ldots \ | \ i_m \\
    \mathrm{Constant} \ C \ & \rightarrow k \in \mathbb{Z} \ | \ b \in \mathbb{B}\\
  \end{array}
 \]
 \vspace{-0.1in}
  \caption{Target language \fdsl where $f$ denotes a first-order function}
  \label{fig:fdsl}
  \vspace{-0.15in}
\end{figure}

%For both Java and Python, we use a simplified imperative source language \idsl, and target functional language \fdsl. From the common functional language \fdsl, we are able to produce Stream and \functools via straightforward translation rules. For example, the \fdsl program $\mathsf{map}(x, \ \lambda i. \ i+1)$ is converted to {\tt map(x, lambda: i. i+1)} in Python and {\tt x.stream().map(i -> i+1)} in Java Stream.

%\idsl is shown in Figure~\ref{fig:idsl}. As is standard, \idsl programs are sequences of statements, where statements can be for loops, if statements, assignments, expressions, and returns. Expressions can either be function calls $f$ over expressions $\overline{E}$, variables $V$, or constants $\mathrm{C}$.

\paragraph{Grammar and shared non-terminals.} Since most readers are familiar with standard Java, we only present our target language, which is shown in Figure~\ref{fig:fdsl} and which we refer to as \fdsl in the rest of this section. \fdsl allows functional programs composed of common higher-order functional operators, such as $\mathsf{map}$, $\mathsf{filter}$, and $\mathsf{fold}$. In addition, \fdsl also contains two other operators $\mathsf{find}$ and $\mathsf{flatmap}$. As its name indicates, $\mathsf{flatmap}(e_1, \lambda i. e_2)$ is  similar to $\mathsf{map}$ in that it takes each element of $e_1$ and applies $e_2$ to it. However, for $\mathsf{flatmap}$, $e_2$ is a function which produces a list, and the final result is flattened to a single list. For example, $\mathsf{flatmap}(x, \ \lambda i. \ \mathrm{factors}(i))$ will produce a list of all factors of integers appearing in the list $x$. As for $\mathsf{find}(e_1, e_2, \lambda i. e_3)$, it returns the first element of  list $e_2$ satisfying condition $e_3$. If no such element is found, the default value $e_1$ is returned. For example, $\mathsf{find}(0, \ x, \ \lambda i. \ i \% 2 == 1)$ finds the first odd element of $x$, or $0$ if none is found.

In order to leverage the synthesis technique proposed in this paper, we formulate the source and target languages as \emph{cognate grammars} as defined in Def.~\ref{def:cognate}. Our choice of shared non-terminals is motivated by the observation from Section~\ref{sec:intro}: while the high-level structure of the source and target programs are very different, low-level expressions tend to be shared. In the grammar shown in Fig~\ref{fig:fdsl}, we designate  non-terminals $E$,  $V$, and $C$ as shared, while the top-level non-terminal $A$ is unshared.

\paragraph{Forward semantics.} Recall that our basic rules from Figure~\ref{fig:forwardrules} make use of the forward semantics of the 
target language. Figure~\ref{fig:forward-semantics} shows the semantics that we use for \fdsl in our implementation. 
%These rules take as input lists of fixed length but the elements of the lists can be symbolic.
Since these rules are based directly on the standard semantics of {\tt map}, {\tt fold} etc., we do not explain them in detail. Also,  these rules utilize standard auxiliary functions such as {\tt append} and can be easily converted into an SMT encoding.

\paragraph{Backward semantics.} Recall from Section~\ref{sec:bidirecrules} that our method also utilizes the backward semantics of each target language construct to make trace compatibility checking more practical.  Figure~\ref{fig:backward-semantics} shows the inverse semantics for \fdsl. Here, we use the notation $\symexecback{f}{-i}(\varphi, \psi_1, \ldots \psi_{i-1})$ to indicate the specification for the $i$'th argument of  $f$ given the specification $\varphi$ of the output and symbolic values for the first $i$ arguments. We further assume that all specifications $\varphi$ contain a single free variable $y$ that refers to the output of the expression. 
%For higher order combinators whose $i$'th argument is a function, we write $\symexecback{f}{-ij}$ to denote the output of this function for the $j$'th element in the input list. 

%Furthermore, to take advantage of our bidirectional pruning algorithm, we implemented both a forward and inverse concolic semantics for \fdsl, both shown in Figure~\ref{fig:concolicSemantics}. The forward concolic semantics closely resemble the concrete semantics; for instance, $\mathsf{map}$ produces the result of applying a (symbolic) function to a (symbolic) list of values. Similarly, $\mathsf{flatmap}$ does the same, but concatenates the result together. The definition of $\mathsf{filter}$ uses the auxiliary function $\mathsf{append}$, which creates a list of each non-empty (i.e. not $\bot$) argument, to produce a list of each symbolic element of the input list for which applying the symbolic function returns true. $\mathsf{find}$ builds of the definition of $\mathsf{filter}$, first getting the list of symbolic inputs which are satisfied by the symbolic function, and then returning the first to satisfy, or the default value $\symexp_0$ if none are found. Finally, $\mathsf{fold}$ returns the default value $\symexp_0$ if given an empty list, and otherwise returns the result $\symexp_k$ of iteratively applying the symbolic function to successive pairs of values from the input list.

\begin{figure}[t]
  \centering
  \begin{tabular}{c}
           Forward Semantics $\symexec{\cdot}$ \\
           \\ 
  $\begin{array}{lll}
  \symexec{\mathrm{map}}([\symexp_1, \ldots, \symexp_k], \uplus \{ \symexp_i \mapsto \symexp_i^\prime \}) & \defeq & [\symexp_1^\prime, \ldots, \symexp_k^\prime] \\

  \symexec{\mathrm{flatmap}}([\symexp_1, \ldots, \symexp_k], \uplus \{\symexp_i \rightarrow [\symexp_{i1}^\prime, \ldots, \symexp_{il_i}^\prime]\}) & \defeq & [\symexp_{11}^\prime, \ldots, \symexp_{1l_1}^\prime, \ldots, \symexp_{k1}^\prime, \ldots, \symexp_{kl_k}^\prime] \\

   \symexec{\mathrm{filter}}([\symexp_1, \ldots, \symexp_k], \uplus \{\symexp_i \rightarrow \symexp_i^\prime\}) & \defeq & \mathrm{append}(\symexpbar_1, ..., \symexpbar_k) \\
  & & \ \ \ \ \text{where } \symexpbar_i = (\symexp_i^\prime \ ? \ \symexp_i : \bot) \\

 \symexec{\mathrm{find}}(\symexp_0, [\symexp_1, \ldots, \symexp_k], f) & \defeq & (\len(l) = 0 \ ? \ \symexp_0 : \mathrm{hd}(l)) \\
  & & \ \ \ \ \text{where } l = \symexec{\mathrm{filter}}([\symexp_1, \ldots, \symexp_k], f) \\

  \symexec{\mathrm{fold}}(\symexp_0, [], \{\}) & \defeq & \symexp_0 \\

  \symexec{\mathrm{fold}}(\symexp_0, [\symexp_1, \ldots, \symexp_k], \uplus \{(\symexp_{i-1}, \symexp_i) \rightarrow \symexp_i^\prime\}) & \defeq & \symexp_k^\prime

  \end{array}$ 
  \end{tabular}
 \vspace{-0.1in}
  \caption{Forward semantics for \fdsl.}
  \label{fig:forward-semantics}
  \vspace{-0.15in}
\end{figure}

\begin{figure}[t]
  \centering
  \begin{tabular}{c}
  Backward Semantics $\symexecback{\cdot}{-i}$ \\
  \\
  $\begin{array}{lll}
  \symexecback{\mathrm{map}}{-1}(y = [\symexpp_1, \ldots, \symexpp_n]) & \defeq & \len(y) = n \\
  
  \symexecback{\mathrm{map}}{-2}(y = [\symexpp_1, \ldots, \symexpp_n], \symexp_i) & \defeq & y = \symexpp_i \\

  \symexecback{\mathrm{flatmap}}{-1}(y = [\symexpp_1, \ldots, \symexpp_n]) & \defeq & \mathsf{true} \\

  \symexecback{\mathrm{flatmap}}{-2}(y = [\symexpp_1, \ldots, \symexpp_n], \symexp_i) & \defeq & \mathrm{subarr}(y, [\symexpp_1, \ldots, \symexpp_n]) \ \land \ \len(y) \leq n \\

  \symexecback{\mathrm{filter}}{-1}(y = [\symexpp_1, \ldots, \symexpp_n]) & \defeq & \len(y) \geq n \land \mathrm{subseq}([\symexpp_1, \ldots, \symexpp_n],y)\\

  \symexecback{\mathrm{filter}}{-2}(y = [\symexpp_1, \ldots, \symexpp_n], \symexp_i) & \defeq & \big ( \bigwedge_{j=1}^n \symexp_i \neq \symexp_j^\prime \big ) \rightarrow y = \mathsf{false} \\
  
  \symexecback{\mathrm{find}}{-1}(y = \symexpp) & \defeq & \mathsf{true} \\

  \symexecback{\mathrm{find}}{-2}(y = \symexpp, \symexp_0) & \defeq & (\symexp^\prime \neq \symexp_0) \rightarrow \symexp^\prime \in y \\
 
   \symexecback{\mathrm{find}}{-3}(y = \symexpp, \symexp_0, \symexp_i) & \defeq & \symexp^\prime \neq \symexp_i \rightarrow y = \mathsf{false} \\

   \symexecback{\mathrm{fold}}{-1}(y = \symexpp) & \defeq & \mathsf{true} \\
   
   \symexecback{\mathrm{fold}}{-2}(y = \symexpp, \symexp_0) & \defeq & \mathsf{true} \\
   
   \symexecback{\mathrm{fold}}{-3}(y = \symexpp, \symexp_0, \symexp_i) & \defeq & y = 
   \symexpp \ \ \text{if } \symexp_i \text{ final arg}
   
   \end{array}$
 \end{tabular}
 \vspace{-0.1in}
  \caption{Backward semantics for \fdsl.}
  \label{fig:backward-semantics}
  \vspace{-0.2in}
\end{figure}

\section{Experiments}
\label{sec:experiments}

In this section, we describe the results of a series of experiments that are designed to answer the following research questions:

\begin{itemize}
    \item [\textbf{RQ1}] How does \tool compare to existing  techniques?
    \item [\textbf{RQ2}] How important are the various design decisions underlying \tool? In particular, how important are our proposed neural architecture and pruning techniques?
    % \item [\textbf{RQ3}] What are weaknesses of \tool and when does it fail?
    \item [\textbf{RQ3}] How does the trace compatibility assumption limit \tool's ability to transpile real-world benchmarks?
\end{itemize}

All experiments are run with a 5-minute timeout on a 2019 MacBook Pro with an 8-core i9 processor and 16 GB RAM. 

\subsection{Benchmark Collection}

As mentioned earlier in Section~\ref{sec:inst}, we instantiated \tool for two types of transpilation tasks, namely (1) converting imperative Java to functional code using the {\tt Stream} API and (2) translating Python  code to a functional variant using the {\tt functools} API.  Unlike our training benchmarks that were generated synthetically, it is important to evaluate \tool on  \emph{test} benchmarks that are representative of real-world code. Towards this goal, we implemented a web crawler to extract real-world transpilation tasks from Github and Stackoverflow. In particular, our script looks for Github commits and Stackoverflow posts that meet the following criteria:
\begin{enumerate}[leftmargin=*]
\item Contains relevant keywords such as \texttt{Java stream}, \texttt{Pythonic loops}, or \texttt{List comprehension}.
\item Includes both the imperative code and its functional translation in the versions before/after the commit or  Stackoverflow post/answer.
\item Passes our non-triviality checks such as  the imperative version containing at least one loop, and the  functional version containing at least two higher-order combinators. 
%whose body has non-trivial control flows such as inner loops or branches. This helps us to filter out trivial benchmarks.
%\item It performs operations supported by our DSL.
%\item Its solution contains at least two functional operators or complicated expressions (i.e. containing multiple operators or function calls).
\end{enumerate}

Using this methodology, we obtained a total of $91$ benchmarks,  46 of which are for Java and 45 of which are Python. To give the reader some intuition about the complexity of these benchmarks, Table~\ref{tab:benchmark_info} presents various statistics about the source/target versions of these programs.

%we give some summary statistics about the benchmark sets. In particular, the first column \texttt{AST Nodes} reports the average and median number of AST nodes in the imperative input programs for each client. The remaining columns report the number of benchmarks of size 1-5, where size refers to the number of functional operators in the desired functional output program. In general, we see that most desired functional programs contain two or three operators, though there are some small programs which contain only one operator, and large programs which contain 4 and even 5 operators. Despite very similar size distributions in desired output, the Python client inputs contain on average 10 more AST nodes than the Stream client inputs, indicating expressions appearing in Python benchmarks are slightly more complex.

\begin{table}[t]
\centering
 \begin{tabular}{|c|c|c|c|c|c|c|} 
 \hline
 Client & \multicolumn{3}{c}{Imperative Stats} & \multicolumn{3}{|c|}{Functional Stats} \\
 \hline
 & \multicolumn{3}{c|}{AST Nodes} & 
%  \multicolumn{3}{c|}{Control Flow Stmts} & 
 \multicolumn{3}{c|}{AST Nodes}
%  & \multicolumn{3}{c|}{Higher Order Funcs} 
 \\
%  \cline{2-7}
%  & Avg & Med & Max & Avg & Med & Max & Avg & Med & Max & Avg & Med & Max \\
 & \multicolumn{1}{c}{Avg} & \multicolumn{1}{c}{Med} & Max & 
%  \multicolumn{1}{c}{Avg} & \multicolumn{1}{c}{Med} & Max & 
 \multicolumn{1}{c}{Avg} & \multicolumn{1}{c}{Med} & Max
% & \multicolumn{1}{c}{Avg} & \multicolumn{1}{c}{Med} & Max 
 \\
 \cline{2-7}
 Stream & 66 & 62 & 127 & 
%  2 & 3 & 4 & 
 40 & 35 & 100 
%  & 2 & 3 & 4 
 \\
 Pythonic & 76 & 72 & 198 & 
%  2 & 2 & 4 & 
 38 & 36 & 159 
%  & 2 & 3 & 5 
 \\
 \hline
 \end{tabular}
 \caption{Benchmark Information.}
 \label{tab:benchmark_info}  
    \vspace{-0.3in}
\end{table}

% \begin{table*}[!t]
% \centering
%  \begin{tabular}{|c|c|c|c|c|c|c|c|c|c|c|c|c|} 
%  \hline
%  Client & \multicolumn{6}{c}{Imperative Source Stats} & \multicolumn{6}{|c|}{Functional Target Stats} \\
%  \hline
%  & \multicolumn{3}{c|}{AST Nodes} & \multicolumn{3}{c|}{Control Flow Stmts} & \multicolumn{3}{c|}{AST Nodes} & \multicolumn{3}{c|}{Higher Order Funcs} \\
% %  \cline{2-13}
% %  & Avg & Med & Max & Avg & Med & Max & Avg & Med & Max & Avg & Med & Max \\
%  & \multicolumn{1}{c}{Avg} & \multicolumn{1}{c}{Med} & Max & \multicolumn{1}{c}{Avg} & \multicolumn{1}{c}{Med} & Max & \multicolumn{1}{c}{Avg} & \multicolumn{1}{c}{Med} & Max & \multicolumn{1}{c}{Avg} & \multicolumn{1}{c}{Med} & Max \\
%  \cline{2-13}
%  Stream & 66 & 62 & 127 & 2 & 3 & 4 & 40 & 35 & 100 & 2 & 3 & 4 \\
%  Pythonic & 76 & 72 & 198 & 2 & 2 & 4 & 38 & 36 & 159 & 2 & 3 & 5 \\
%  \hline
%  \end{tabular}
%  \caption{Benchmark Information. Control flow statements refer to loops (e.g., \code{for}, \code{while}) and branches (e.g., \code{if}, \code{continue}).}
%  \label{tab:benchmark_info}  
% \end{table*}

\subsection{Comparison against Existing Techniques}

To answer our first research question, we compared the performance of \tool against existing techniques. In particular, we consider the following four baselines:
\begin{itemize}[leftmargin=*]
    \item {\bf CEGIS-Neo}, a CEGIS-based synthesizer that uses {\sc Neo}~\cite{neo} as the inductive synthesis backend and our equivalence checker as the verifier
    \item {\bf Enum-ASN}, an enumerative synthesizer  in which we enumerate 
 syntactically valid code according to its likelihood using an abstract syntax network (ASN) model~\cite{asn} and then verify correctness using our equivalence checker
 \item {\bf Enum-Seq2Seq}, another enumerative synthesizer based on a Sequence to Sequence (\seqtoseq)  model \cite{sutskever2014sequence} in which we enumerate programs based on the score of the \seqtoseq model and verify correctness using our equivalence checker
 \item {\bf Netbeans}, a translator (implemented as a Netbeans plugin) that uses a set of handcrafted rules to modernize  imperative Java code to use the Stream API~\cite{gyori2013crossing}
 \end{itemize}
 Note that the first three of these baselines are applicable to both of our clients, while the last one is specific to Java. {For all tools which rely on a neural component (i.e., \tool, \enumasn, and \enumseqtoseq) we use the same training data and training methodology described in Section~\ref{sec:implementation}. Additionally, to set hyperparameters for these systems, we experimented with $10$ different common/intuitive settings for each and used the one which produced the best results.}

\begin{figure}
    \centering
    \subfloat[Stream results.]{
    \includegraphics[width=.95\linewidth]{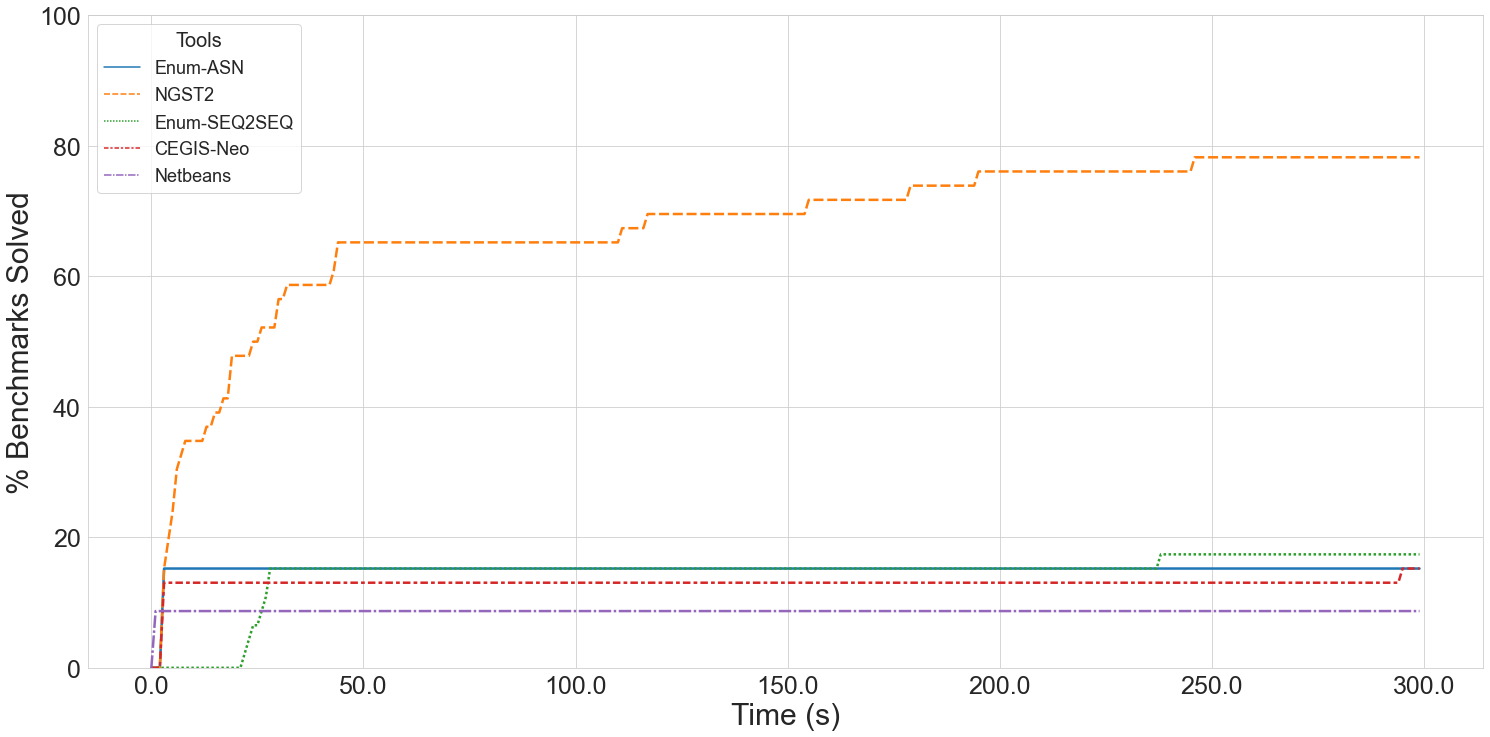}
    }
    \hfill
    \subfloat[Python results.]{
    \includegraphics[width=.95\linewidth]{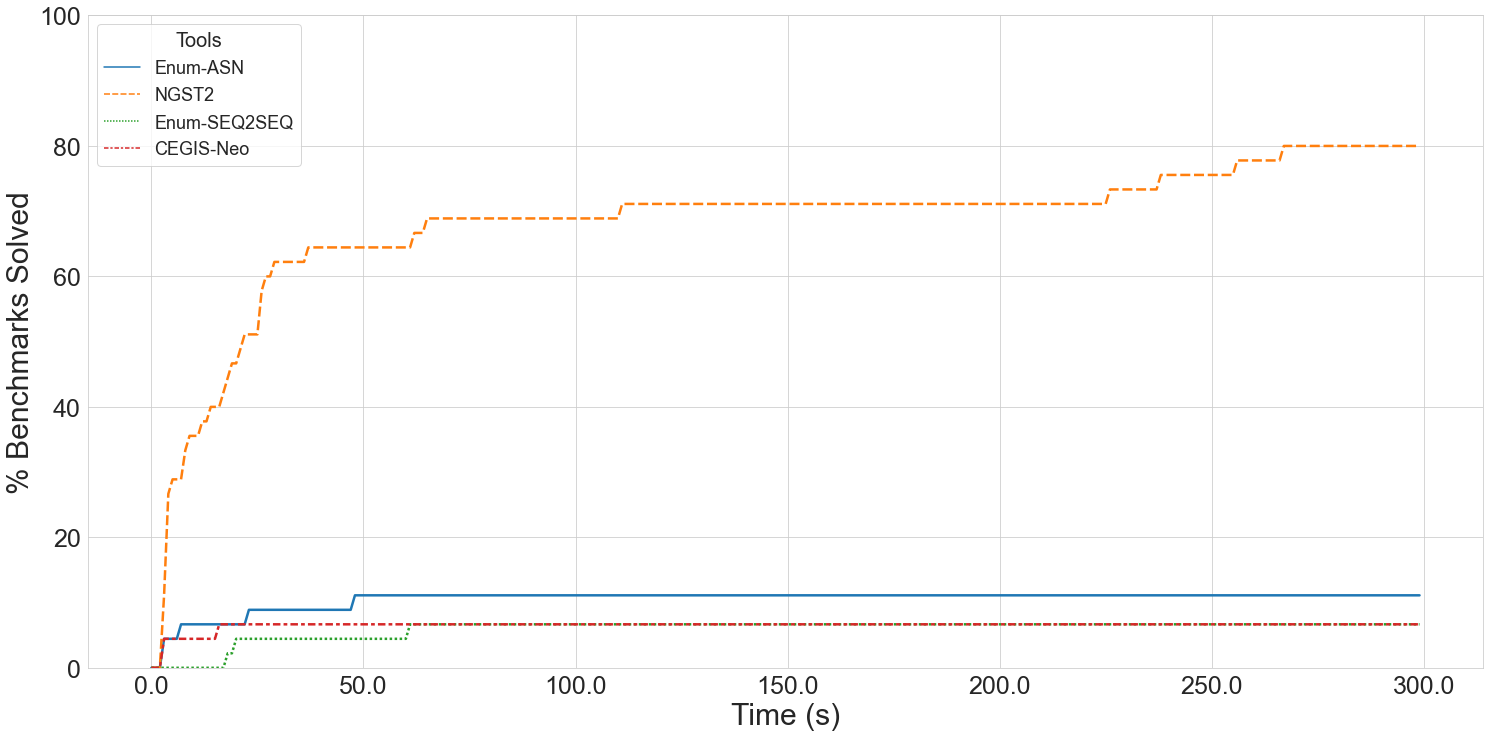}
    }
    % \begin{subfigure}{.5\textwidth}
    %   \centering
    % \includegraphics[width=\linewidth]{images/stream_baseline_graph.png}
    % \caption{Stream results.}
    % \label{fig:baselineGraphStream}
    % \end{subfigure}%
    % \begin{subfigure}{.5\textwidth}
    %  \centering
    % \includegraphics[width=\linewidth]{images/pythonic_baseline_graph.png}
    % \caption{Pythonic results.}
    % \label{fig:baselineGraphPythonic}
    % \end{subfigure}
    \caption{Performance comparison with existing tools.}
    \label{fig:baselineGraphs}
    % \vspace{-0.2in}
    % \vspace{1.2in}
\end{figure}

 The results of this experiment are summarized   in Figure~\ref{fig:baselineGraphs}. Here, the x-axis shows the   time (per benchmark) and the y-axis presents the percentage of benchmarks solved within that time limit.  As we can see in both figures, \tool significantly outperforms all baselines in both clients. In particular, \tool solves at least 60\% more benchmarks than any other baseline across both clients. We believe the major cause of poor performance for the other synthesizers (\enumasn, \enumseqtoseq, \cegisneo) is the large search space of programs to be searched. In contrast to these tools, \tool leverages the trace compatibility assumption to prune large parts of the search space. The poor performance of the \netbeans plugin is likely due to an inadequate set of hand-made translation templates; as shown, in the few cases where the plugin did have the correct templates, translation was very fast.
 
%As shown, \tool significantly outperforms all other tools on both clients. In particular, \tool solves $36$ of $46$ Stream benchmarks, and $36$ of $45$ Python benchmarks within the $5$-minute timeout. \todo{add blurb about other baselines once we have them completed}. \\

\vspace{4mm}
 \fbox{
\begin{minipage}{0.9\linewidth}
{\bf Result for RQ1:}
\tool significantly outperforms existing baselines for both the Java and Python clients. In particular, \tool solves 78\% of the Java benchmarks, while its closest competitor (\enumseqtoseq) solves 17\%. For the Python client, \tool solves 80\% of benchmarks, while the best baseline (\enumasn) solves only 11\%. 
\end{minipage}
}

\subsection{Ablation Studies}

\begin{figure}
    \centering
    \subfloat[Stream results.]{
    \includegraphics[width=.95\linewidth]{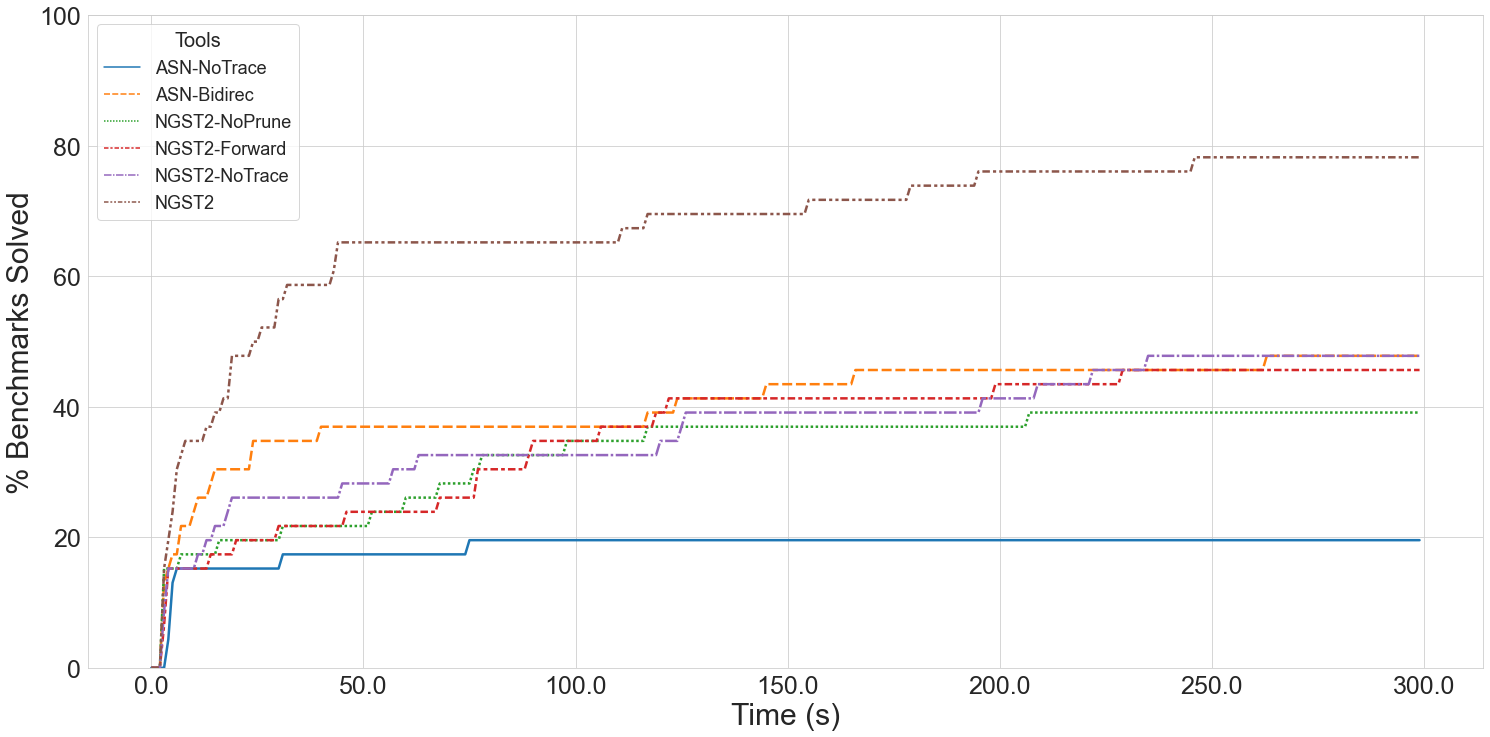}
    }
    \hfill
    \subfloat[Python results.]{
    \includegraphics[width=.95\linewidth]{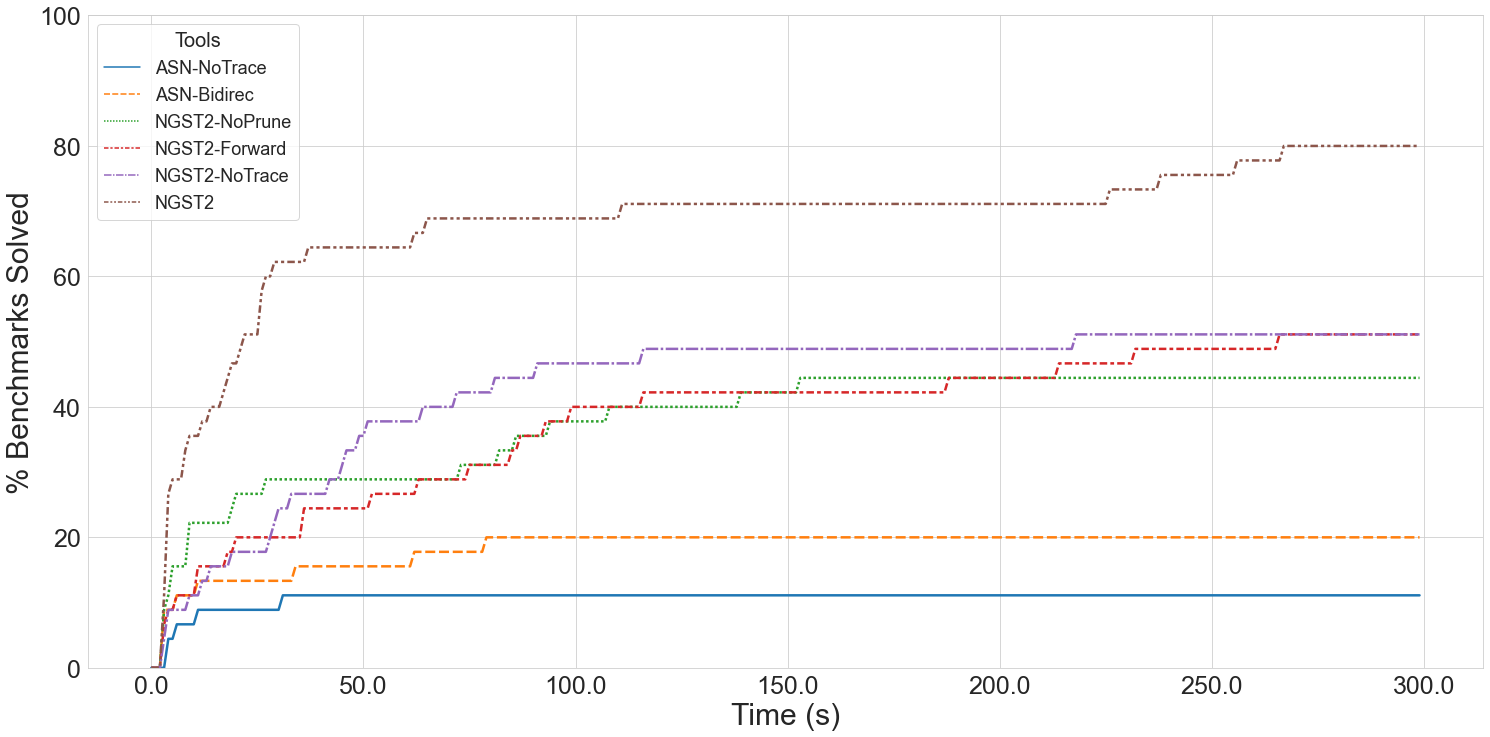}
    }
    % \begin{subfigure}{.5\textwidth}
    %   \centering
    % \includegraphics[width=\linewidth]{images/stream_baseline_graph.png}
    % \caption{Stream results.}
    % \label{fig:baselineGraphStream}
    % \end{subfigure}%
    % \begin{subfigure}{.5\textwidth}
    %  \centering
    % \includegraphics[width=\linewidth]{images/pythonic_baseline_graph.png}
    % \caption{Pythonic results.}
    % \label{fig:baselineGraphPythonic}
    % \end{subfigure}
    \vspace{-0.1in}
    \caption{Performance comparison with \tool variants.}
    \label{fig:ablationGraphs}
    \ifDiff
    \vspace{-0.2in}
    \fi
\end{figure}

In this section, we describe a series of ablation studies that are designed to answer our second research question. In particular, we consider the following ablations of \tool:

\begin{itemize}[leftmargin=*]
    \item {\bf NGST2-NoPrune:} This is a variant of \tool that does not perform pruning based on trace compatibility. In particular, it does not leverage any of the techniques discussed in Section~\ref{sec:checkTraceCompat}.
    \item {\bf NGST2-Forward:} This variant of \tool uses the \emph{uni-directional} pruning rules from Section~\ref{sec:checkTraceCompat}; however, it does not utilize the backward semantics to reduce the overhead of pruning.
    \item {\bf NGST2-NoTrace:} This variant of \tool uses the \emph{bidirectional} pruning rules from Section~\ref{sec:checkTraceCompat}; however, it does not utilize \emph{intermediate} values for shared non-terminals; i.e., rule $\mathrm{S-NTerm}-\updownarrow$ simply returns a fresh variable $\nu$ like rule $\mathrm{NTerm}-\updownarrow$.
    \item {\bf ASN-Bidirec:} This ablation uses the ASN architecture instead of our proposed CGN. 
    \item {\bf ASN-NoTrace:} This variant of \tool is the same as \toolnotrace, except it uses the ASN to guide search rather than the CGN.
\end{itemize}
{Again, similar to our baseline comparison, we use the same training data, training methodology, and hyperparameter setting approach for all ablations.}

%To answer our second research question, we performed an ablation study, comparing the performance of \tool to four variants of the tool: \toolnoprune, \toolforward, \toolasnforward, \toolasnbidirec. \toolnoprune still uses the \net to guide search, but completely removes the pruning mechanism described in Section~\ref{sec:checkTraceCompat}. Similarly, \toolforward replaces the bidirectional pruning rules from Figure~\ref{fig:bidirecrules} with the unidirectional rules from Figure~\ref{fig:forwardrules}. Finaly, \toolasnforward and \toolasnbidirec both replace the \net with an \asn to guide search, and use unidirectional/bidirectional pruning respectively. 

The results of these ablation studies are summarized in Figure~\ref{fig:ablationGraphs}  using the same type of cactus plot as in Figure~\ref{fig:baselineGraphs}. The main take-away from this experiment is that \tool significantly outperforms all other variants for both clients. In particular, \tool solves more benchmarks in the first $45$ seconds than any other variant solves in the total $5$ minutes. 
%Interestingly, using the forward-only pruning strategy often reduces performance as compared to using no pruning at all. This finding justifies our use of bidirectional pruning:  in all cases, adding bidirectional pruning significantly increases performance over no pruning. 
%In other words, it appears that scaling the rules from unidirectional pruning requires some method for more efficient enumeration, such as our proposed bidirectional rules. 

In terms of the relative importance of the proposed neural architecture vs. the pruning component, the data are mixed. For the Java Stream client, \toolasnbidirec outperforms  \toolnoprune, while \toolnoprune outperforms \toolasnbidirec for the Python client. It appears that, for some benchmarks, the \net is more important to performance than bidirectional pruning, while for others, pruning is more important than the \net. However, it is clear that using a combination of bidirectional pruning and the proposed CGN architecture leads to the best performance.

\vspace{4mm}
 \fbox{
\begin{minipage}{0.9\linewidth}
{\bf Result for RQ2:}
Both the pruning mechanism and the \net architecture are vital to \tool; removing either component results in  more than a 25\% reduction in benchmarks solved.
\end{minipage}
}

% \subsection{Evaluation of Limitations}
\subsection{Evaluation of Trace Compatibility Assumption}

% \begin{figure}
%     \centering
%     \includegraphics[width=.8\linewidth]{}
%     \caption{\tool failure rate by benchmark size.}
%     \label{fig:sizeGraph}
% \end{figure}

To answer our third research question, we performed a manual evaluation of the benchmarks that \tool fails to synthesize within the time limit. 
% For the overwhelming majority of these cases, the reason for failure is that the benchmarks are too large/complex to be solved within the 5 minute limit; however, there is no fundamental reason why \tool would be unable to solve them. 
% On the other hand, for 3 of the 91 benchmarks, synthesis fails because these benchmarks do not satisfy our trace compatibility assumption. 
We found only 3 of the 91 fail to synthesize because they do not satisfy our trace compatibility assumption.
However, as we discuss below, it is possible to solve all three benchmarks using a slight relaxation of the trace compatibility assumption. 

% To better understand the first issue, consider Figure~\ref{fig:sizeGraph}, which breaks down the performance of \tool on benchmarks of different sizes across both clients. In particular, the x-axis shows the number of AST nodes in the desired target program and they y-axis shows the synthesis failure rate, i.e., the percentage of benchmarks of that size for which \tool could synthesize no solution before the timeout. The data indicate that \tool's performance degrades as the size of target programs gets larger. We believe this is because the \net biases prediction of smaller target programs, meaning that \tool, even with effective pruning, must enumerate many smaller programs before finding larger programs. \tool's weaker performance for larger programs could also be a consequence of using randomly-generated training data, which generalizes more poorly to real-code for larger programs than smaller programs. It would be interesting to experiment with real or augmented random training data to test this hypothesis.

\begin{figure}[t]
\centering
    \begin{tabular}{cc}
    \textbf{Input} & \textbf{Output} \\
           \begin{minipage}{0.5\linewidth}
\begin{lstlisting}[linewidth=6cm, frame=none, language=Python, numbers=none] 
def lowerIn(strs, S):
    for s in strs:
        if s in S:
            return s.lower()
    return None
\end{lstlisting}
\end{minipage}
&
  \begin{minipage}{0.5\linewidth}
\begin{lstlisting}[linewidth=6.1cm,language=C,frame=none, morekeywords={map, find, filter}, numbers=none,breaklines]
find(map(filter(strs, 
                $\lambda$ s. s in S), 
         $\lambda$ s. s.lower()),
     None,
     $\lambda$ s. True)
\end{lstlisting}
\end{minipage}
    \end{tabular}
\vspace{-0.1in}
\caption{Non-trace-compatible Benchmark Example}
\label{fig:no_trace_compat}
\vspace{-0.1in}
\end{figure}

% As mentioned previously, 3 of the 91 benchmarks cannot be solved by \tool as these tasks violate our trace compatibility assumption. 
To provide intuition,  Figure~\ref{fig:no_trace_compat} shows a simplified version of one of the three benchmarks that violate trace compatibility.
 Here, the imperative function iterates through each string \texttt{s} from a list of strings \texttt{strs} and returns the lower-case version of the first string which appears in  \texttt{S}. The functional translation is shown next to the imperative version: it first uses the $\code{map}$ and $\code{filter}$ operators to get the lower case versions of all strings \texttt{s} in \texttt{S}.  Then, it leverages the $\code{find}$  combinator to return the first element of that list if one exists. The source and target programs shown in Figure~\ref{fig:no_trace_compat} are not trace complete, as the expression \texttt{True} in the target does not appear in the source program. However, if we relax the trace compatibility assumption slightly and  allow the target program to contain additional "default" constants such as True, False, and 0, then all three benchmarks can be correctly transpiled by \tool.
%Note, we cannot replace \texttt{True} with the expression \texttt{s in S}, as \texttt{s} in this context is bound to the lower-cased versions of the strings from the list \texttt{strs}, and thus this substitution would only be safe if we knew, for each \texttt{s} from \texttt{strs}, (\texttt{s in S}) $\rightarrow$ (\texttt{s.lower() in S}).

\vspace{4mm}
 \fbox{
\begin{minipage}{0.9\linewidth}
{\bf Result for RQ3:}
Only 3 of the 91 benchmarks violate the trace-compatibility assumption. Furthermore, if we slightly relax the trace compatibility assumption to allow default constants, then those three benchmarks can also be successfully transpiled by \tool. 
\end{minipage}
}
\section{Limitations}
\new{
Our approach has some limitations based on the assumptions we made. First, we assume the target language belongs to the meta-grammar shown in Section~\ref{sec:unidirecrules}. While in theory this could restrict the space of languages we can translate to, in practice we found this meta-grammar is expressive enough to allow all of the functional APIs we were interested in. Second, we assume that we have access to the forward and backward semantics of the target language constructs and that these semantics can be encoded in some first-order theory. We do not believe this to be a major limitation, as our method does not require the precise semantics of language constructs, and thus they can be over-approximated in some reasonable manner. For example, we encode the forward and backwards semantics for Java Stream operators via the rules shown in Figures~\ref{fig:forward-semantics} and \ref{fig:backward-semantics}. Finally, our proposed neural approach assumes there is sufficient data on which to train the network. However, we also do not see this as a considerable limitation -- we found that easily-generated synthetic training data was sufficient for all the translation problems we considered (Section~\ref{sec:implementation}).
}
\section{Related Work}
\label{sec:related}

In this section, we survey prior work that is most closely related to our proposed approach.

\paragraph{Transpilation}
A number of works use program synthesis techniques for transpilation across a variety of domains. For instance, Ahmad and Cheung \cite{ahmad2018automatically} introduce Casper, a tool for synthesizing programs in the MapReduce paradigm, using imperative source implementations \footnote{We contacted the authors about comparing our tool with Casper, however, their tool is no longer in repair.}. QBS \cite{cheung2013optimizing} uses synthesis to translate application code fragments to SQL queries and STNG \cite{kamil2016verified} translates low-level Fortran code into a high-level predicate language. Mariano et al. \cite{mariano2020demystifying} use type-directed program synthesis to generate functional summaries of loops in smart contracts. 

In addition to synthesis techniques, rule-based transpilation techniques have also been proposed. Gyori et al. \cite{gyori2013crossing} suggest rule-based translation of imperative Java to the functional Stream API while Khatchadourian et al \cite{khatchadourian2020safe} propose a similar method for automatically translating sequential Java streams into parallel streams. Similarly, MOLD \cite{radoi2014translating} uses rules to automatically translate Java programs to Apache Spark.

Another emerging approach to transpilation is based on machine translation, and such techniques have been used to translate from Java to C\# \cite{koehn2007moses}, Python2 to Python3 \cite{aggarwal2015using}, and CoffeScript to JavaScript \cite{chen2018tree}. Recently, Transcoder \cite{lachaux2020unsupervised} has shown great success in this space by using large pre-trained models for learning a variety of transilation tasks. To the best of our knowledge, we are the first to propose a hybrid  technique that  combines a neural component for guiding the search and a  pruning component that leverages intermediate values.
\new{
\paragraph{MapReduce Synthesis}
The problem solved in this work is related to the problem of MapReduce synthesis which attempts to automatically generate parallelized programs. Raychev et al. \cite{raychev2015parallelizing} propose a technique for automatically parallelizing imperative user-defined aggregations (UDAs) using symbolic execution. Farzan and Nicolet \cite{farzan2017synthesis} also address the problem of UDA parallelization, in their case by synthesizing joins which combine the results of executing portions of the UDA in parallel. In general, these approaches are limited to UDAs and are thus not applicable to the more general imperative loops considered in this project. Radoi et al. \cite{radoi2014translating} propose a technique for translating sequential loops into MapReduce programs via a reduction to the lambda calculus and then a series of rewrites. Similar to \cite{gyori2013crossing}, this approach relies on a set of hand-written rules while ours does not. Smith and Albarghouthi \cite{smith2016mapreduce} propose a technique for synthesizing MapReduce programs from input-output examples. Their approach requires a small fixed set of higher-order sketches (only 8 in their evaluation) which they use to significantly reduce the search space; defining such a set of sketches in our context is not feasible given the varied nature of the benchmarks we consider.
}

\paragraph{Pruning and Program Synthesis}
{A number of existing synthesis tools use top-down and/or bottom-up reasoning to prune the search space. Synquid \cite{polikarpova2016program} uses a combination of top-down and bottom-up reasoning to propagate refinement type constraints to unknowns during program search. Similarly, STUN \cite{alur2015synthesis} uses a combination of top-down and bottom-up reasoning to decompose a single synthesis problem into multiple subproblems whose solutions are unified together. $\lambda^2$ \cite{feser2015synthesizing} leverages a top-down deductive approach for inferring input-output examples of missing subexpressions. However, none of these approaches incorporate reasoning based on intermediate values via concolic execution as our technique does.}

\paragraph{Copy Mechanisms}
As discussed in Section~\ref{sec:net}, the \net is an extension of the \asn \cite{asn} with a copy mechanism \cite{jia-liang-2016-data, see-etal-2017-get} for copying components from the input to the output. Copy mechanisms have been effectively used in a number of settings, including document summarization \cite{see-etal-2017-get}, dialogue generation \cite{gu-etal-2016-incorporating}, and have even been used in the context of program synthesis for copying string variables \cite{jia-liang-2016-data} and predicting out-of-vocabulary tokens \cite{li-ijcai-pointer}. However, as far as we are aware, we are the first to suggest using copy mechanisms for copying large subtrees of ASTs.

\paragraph{Neurosymbolic Program Synthesis}
In recent years, neurosymbolic techniques have become quite popular. Notable applications of  neurosymbolic program synthesis include program correction \cite{bhatia2018neuro}, compositional rule-based translation \cite{nye2020learning}, rediscovering classic physics laws \cite{ellis2020dreamcoder},\old{ and} scraping website information \cite{chen2021web}\new{, and learning regular expressions \cite{ye2020optimal}}. As far as we are aware, we are the first to apply neurosymbolic synthesis techniques to the problem of imperative-to-functional API translation \new{and the only to propose symbolic reasoning based on intermediate values via concolic execution}.

% Recent work \cite{ye2020optimal} has even combined Abstract Syntax Networks with abstraction-based pruning for optimal code synthesis. 
% For example, Ye et al \cite{ye2020optimal} combine Abstract Syntax Networks with an abstraction-based pruning mechanism for synthesizing optimal code from multimodal specifications. In contrast to our work, their synthesizer is not designed specifically for transpilation -- thus, it does not incorporate a copy mechanism and does not leverage intermediate values for pruning. Other notable applications of 
\section{Conclusion and Future Work}
\label{sec:conclusion}

We have proposed a new technique, and its implementation in a tool called \tool, for translating imperative code to functional variants. Our method is based on the assumption that the overwhelming majority of  source/target programs are \emph{trace-compatible}, meaning that low-level expressions are not only syntactically shared but also take the same values in a pair of corresponding program traces. Our approach leverages this assumption to (1) design a new neural architecture called \emph{cognate grammar network (CGN}) to effectively guide transpilation and (2) develop bidirectional type checking rules that allow pruning partial programs based on intermediate values that arise during a computation. We have evaluated our approach against several baselines and ablations in the context of translating imperative Java and Python code to functional APIs targeting these languages. Our results show that the proposed techniques are both useful and necessary in this context.
More broadly, while  \tool has only been evaluated in the context of imperative-to-functional translation,  our techniques are applicable in any domain satisfying the trace-compatibility assumption, which is likely to hold in other contexts as well.  We leave it to future work to explore the applicability of our techniques (as well as their limitations) for other transpilation tasks.
\section{Acknowledgements}

We would like to thank Benjamin Sepanski, Shankara Pailoor, and Jocelyn Chen for their thoughtful feedback. This material is based upon work supported by the National Science Foundation under grant numbers CCF-1908494, CCF-1811865, CCF-1918889, DARPA under the HARDEN program, Google under the Google Faculty Research Grant, as well as both Intel and RelationalAI.

\bibliography{references}

\ifExt
\newpage

\begin{appendices}

\section{Pruning Soundness}
\label{app:pruning}

In what follows, we give the formalization and proof of the following theorem assuming the unidirectional consistency judgment (Section~\ref{subsec:uniproof}) and bidirectional consistency judgment (Section~\ref{subsec:biproof}).

\begin{theorem}
Suppose that $\nconjudge{\sigma}{\prog}{\prog'}$ and both the forward semantics $\symexec{\cdot}$ and backward semantics $\symexecback{\cdot}{-i}$ provided are sound. Then, there is no completion $\prog_t$ of $\prog'$ that is trace-compatible with $\prog$ on input $\sigma$ where $\exec{\prog_t}{\sigma} = \exec{\prog}{\sigma}$. 
\end{theorem}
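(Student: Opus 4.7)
The plan is to prove the contrapositive: assume there exists a completion $\prog_t$ of $\prog'$ that is trace-compatible with $\prog$ on $\sigma$ and satisfies $\exec{\prog_t}{\sigma} = \exec{\prog}{\sigma}$, and then derive $\conjudge{\sigma}{\prog}{\prog'}$ using the inference rules. Since $\conjudge{\sigma}{\prog}{\prog'}$ is derivable only via rule $\consist$-$\uparrow$ (respectively $\consist$-$\updownarrow$), it suffices to exhibit a symbolic expression $\symexp$ such that $\upjudge{\prog,\sigma}{\prog'}{\symexp}$ (resp.\ $\udjudge{\prog,\sigma,y=\exec{\prog}{\sigma}}{\prog'}{\symexp}$) together with $\mathsf{SAT}(\symexp = \exec{\prog}{\sigma})$. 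The obvious candidate is to pick $\symexp$ to be a symbolic witness of $\exec{\prog_t}{\sigma}$ itself.

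The core of the argument is a structural induction over the partial program $\prog'$, establishing the following key lemma: for every subterm $\pexp$ of $\prog'$, letting $e$ denote the corresponding subterm in the completion $\prog_t$, there exists $\symexp$ such that $\upjudge{\prog,\sigma}{\pexp}{\symexp}$ and $\exec{e}{\sigma} \models (\cdot = \symexp)$. The case analysis mirrors the rules of Figure~\ref{fig:forwardrules}. For unshared non-terminals (\textsc{NTerm}-$\uparrow$), we simply instantiate the fresh variable $\nu$ with the actual value $\exec{e}{\sigma}$. For shared non-terminals (\textsc{S-NTerm}-$\uparrow$), the argument crucially uses trace compatibility: by clause (1) of the definition, the shared term $e$ in $\prog_t$ also syntactically appears in $\prog_s$, and by clause (2), $\coll{e}{\prog_t}{\sigma} \subseteq \coll{e}{\prog_s}{\sigma}$, so the concrete value $\exec{e}{\sigma}$ lies in the set $\coll{w}{\prog}{\sigma}$ queried by the rule. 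The variable case is immediate. For the first-order and higher-order combinator cases, we invoke the induction hypothesis on each argument, then appeal to the assumed soundness of the forward abstract semantics $\symexec{f}$: since each $\symexp_i$ is consistent with $\exec{e_i}{\sigma}$, the composed symbolic result $\symexec{f}(\symexp_1,\ldots,\symexp_n)$ is consistent with $\exec{f(e_1,\ldots,e_n)}{\sigma}$. Applying the lemma at the root of $\prog'$ yields $\symexp$ with $\exec{\prog_t}{\sigma} = \exec{\prog}{\sigma}$ satisfying $\symexp = \exec{\prog}{\sigma}$, as required.

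For the bidirectional rules of Figure~\ref{fig:bidirecrules}, we strengthen the lemma's conclusion: at each subterm, the derived $\symexp$ additionally satisfies the propagated specification $\varphi$. The added $\mathsf{SAT}(\varphi[\symexp])$ side conditions thus hold by construction, since $\exec{e}{\sigma}$ itself satisfies $\varphi$ whenever $\varphi$ is soundly derived from a parent specification. The inductive step now requires, in addition to forward soundness, the soundness of each backward semantics $\symexecback{f}{-i}$: whenever the whole-term specification $\varphi$ is satisfied by $\exec{f(e_1,\ldots,e_n)}{\sigma}$, the argument specification $\symexecback{f}{-i}(\varphi,\symexp_1,\ldots,\symexp_{i-1})$ is satisfied by $\exec{e_i}{\sigma}$. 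The top-level specification $y = \exec{\prog}{\sigma}$ is trivially satisfied by the output value, seeding the induction.

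The main obstacle will be making the higher-order function case fully rigorous, since the rule $\textsc{HigherOrderTerm}$-$\uparrow$ reasons about a concrete list of length $k$ with symbolic elements $\symexp_1,\ldots,\symexp_k$ and binds the lambda parameter $i$ to each $\symexp_i$ in turn. We must show that the collecting semantics $\coll{\cdot}{\prog_t}{\sigma}$ invoked by trace compatibility of the inner expression under the extended valuation $\sigma[i \mapsto \symexp_i]$ aligns with the evaluations that actually occur when executing the completed higher-order call in $\prog_t$; this relies on the stated assumption that lambda bindings do not shadow input variables and on the compositional structure of the collecting semantics for the target meta-grammar. Once this bookkeeping is handled, the remaining cases are essentially routine applications of the soundness assumptions on $\symexec{\cdot}$ and $\symexecback{\cdot}{-i}$.
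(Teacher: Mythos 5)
Your proposal is correct in substance and, for the unidirectional rules, follows essentially the same route as the paper: a structural induction (the paper calls it the ``Inclusion Lemma'') showing that the symbolic value $\symexp$ derived for a partial term over-approximates the concrete value taken by any trace-compatible completion, with the shared-non-terminal case discharged exactly as you describe via clauses (1) and (2) of trace compatibility, and the function cases via soundness of $\symexec{\cdot}$. Two differences are worth noting. First, the bookkeeping you flag as the ``main obstacle'' --- relating the evaluation of the lambda body under $\sigma[i \mapsto \symexp_i]$ to actual evaluations in $\prog_t$ --- is precisely where the paper does extra work: its lemma is stated not for $\exec{e}{\sigma}$ but for $\exec{e}{\sigma''}$ ranging over all valuations $\sigma''$ that are both concretizations of the (possibly symbolic) environment and execution contexts of $e$ under a collecting semantics $\collctx{\cdot}{\prog_t}{\sigma}$; the paper also replaces the nondeterministic \textsc{S-NTerm} rule with a deterministic variant that packages all candidate values into one disjunctively constrained fresh variable, so that each partial program derives a unique $\symexp$. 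Your plan of instantiating the nondeterministic choice with the actual runtime value works equally well for the soundness direction, but you would need to adopt something like the paper's context-indexed lemma statement to make the higher-order case close. Second, for the bidirectional rules you re-run the induction with a strengthened invariant carrying the specification $\varphi$; the paper instead proves two lemmas showing the $\uparrow$- and $\updownarrow$-judgments derive the same symbolic values (given $\mathsf{SAT}(\varphi[\symexp])$, using backward soundness), and then inherits bidirectional soundness from the unidirectional theorem. Both routes are valid; the paper's buys the additional fact that the backward checks never change the pruning power, which justifies the ``without sacrificing precision'' claim in the main text, whereas yours only yields soundness. One caution on your bidirectional step: you assert that the \emph{concrete} value $\exec{e}{\sigma}$ satisfies each propagated $\varphi$, but the paper's notion of backward-semantics soundness is purely symbolic (satisfiability of $\symexecback{f}{-i}(\varphi,\ldots)[\symexp_i]$ given satisfiability of $\varphi[\symexp]$); your invariant should be phrased as $\mathsf{SAT}(\varphi[\symexp])$ rather than as a claim about concrete values, or else you are implicitly assuming a stronger soundness definition than the one the theorem supplies.
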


We will first provide some necessary definitions and then proceed to the proofs.

\subsection{Definitions}

In addition to the definitions in the text, we also add a few other definitions, including the soundness of the forward and backward semantics.

\begin{definition}{\bf (Soundness of forwards semantics)}
\label{def:forwardSound}
Given some target language $\lang_t$ which is an instance of the meta-grammar (with first order functions $F$ and higher-order functions $F_H$) shown in \ref{sec:unidirecrules}, we say a forward semantics $\symexec{\cdot}$ is \emph{sound} wrt to $\lang_t$ if the following two conditions hold:
\begin{enumerate}
    \item For each first-order function $f \in F$, if $\exec{f}{\sigma}(e_1, \ldots, e_N) = k$, $\symexec{f}(\symexp_1, \ldots, \symexp_n) = \symexp_k$, and $\exec{e_i}{\sigma} = \symexp_i$ is satisfiable for all $1 \leq i \leq N$, then $k = \symexp_k$ is satisfiable.
    \item For each higher-order function $f \in F_H$, if $\exec{f}{\sigma}([e_1, \ldots, e_N], \{ e_1 \mapsto e_1', \ldots e_N \mapsto e_N' \}) = k$, $\symexec{f}([\symexp_1, \ldots, \symexp_N], \{ \symexp_1 \mapsto \symexp_1', \ldots \symexp_N \mapsto \symexp_N' \}) = \symexp_k$, and both $\exec{e_i}{\sigma} = \symexp_i$ and $\exec{e_i'}{\sigma[i \mapsto \exec{e_i}{\sigma}]} = \symexp_i'$ are satisfiable for all $1 \leq i \leq N$, then $k = \symexp_k$ is satisfiable.
\end{enumerate}
\end{definition}

Informally, the definition states that the forward semantics are sound if the results of symbolically executing a function $f$ on symbolic arguments always results in a symbolic value $\symexp_k$ which could equal concrete value $k$, the result of executing $f$ on any values which the symbolic arguments of $f$ could equal.

% \begin{definition}{\bf (Soundness of backwards semantics)}
% \label{def:backwardSound}
% Given some target language $\lang_t$ which is an instance of the meta-grammar (with first order functions $F$ and higher-order functions $F_H$) shown in \ref{sec:unidirecrules}, we say a backward semantics $\symexecback{\cdot}{-i}$ is \emph{sound} wrt to $\lang_t$ if the following two conditions holds:
% \begin{enumerate}
%     \item For each first-order function $f \in F$, if $\exec{f}{\sigma}(e_1, \ldots, e_N) = k$, $\varphi[k]$ is satisfiable and $\exec{e_i}{\sigma} = \symexp_i$ is satisfiable for all $1 \leq i \leq N$, then $\symexecback{f}{-i}(\varphi, \symexp_1, \ldots, \symexp_{i-1})$ is satisfiable.
%     \item For each higher-order fucntion $f \in F_H$, if $\exec{f}{\sigma}(e_1, \lambda i. e_2) = k$, $\varphi[k]$ is satisfiable, $\exec{e_1}{\sigma} = [\symexp_1, \ldots, \symexp_l]$ is satisfiable, and $\exec{e_2}{\sigma[i \mapsto \exec{e_1}{\sigma}[i]]} = \symexp_i'$ are satisfiable for all $1 \leq i \leq N$, then $k = \symexp_k$ is satisfiable.
% \end{enumerate}
% \end{definition}

% Informally, the definition states that the backward semantics are sound if the formula resulting from symbolically backwards executing function $f$ on symbolic arguments and formula $\varphi$ always results in a formula which is satisfiable given concrete value $k$, which is the result of executing $f$ on any values which the symbolic arguments of $f$ could equal.

\begin{definition}{\bf (Soundness of backwards semantics)}
\label{def:backwardSound}
Given some target language $\lang_t$ which is an instance of the meta-grammar (with first order functions $F$ and higher-order functions $F_H$) shown in \ref{sec:unidirecrules}, we say a backward semantics $\symexecback{\cdot}{-i}$ is \emph{sound} wrt to a forward semantics $\symexec{\cdot}$ of $\lang_t$ if the following two conditions holds:
\begin{enumerate}
    \item For each first-order function $f \in F$, if $\symexec{f}(\symexp_1, \ldots, \symexp_N) = \symexp$ and $\varphi[\symexp]$ is satisfiable, then $\symexecback{f}{-i}(\varphi, \symexp_1, \ldots, \symexp_{i-1})[\symexp_i]$ is satisfiable for $1 \leq i \leq N$.
    \item For each higher-order fucntion $f \in F_H$, if $\symexec{f}([\symexp_1, \ldots, \symexp_k], \{ \symexp_1 \mapsto \symexp_1', \ldots, \symexp_k \mapsto \symexp_k' \}) = \symexp$ and $\varphi[\symexp]$, then $\symexecback{f}{-1}(\varphi)[[\symexp_1, \ldots, \symexp_k]]$ is satisfiable and $\symexecback{f}{-2}(\varphi, \symexp_i)[\symexp_i']$ is satisfiable for $1 \leq i \leq k$.
\end{enumerate}
\end{definition}

Informally, the definition states that the backward semantics are sound if the constraints resulting from backwards execution of $f$ on symbolic arguments and formula $\varphi$ are always satisfiable given that forwards executing $f$ on those same symbolic arguments produces a symbolic value which satisfies $\varphi$.

\begin{definition}{\bf (Concretized Valuations)}
\label{def:concMap}
Given some valuation $\sigma$ potentially containing symbolic values, we define the concretizations of $\sigma$, denoted $\mathsf{Conc}(\sigma)$, to be the set of all mappings $\sigma'$ such that for every mapping $[i \mapsto \symexp_i]$ in $\sigma$, $\sigma'$ has some mapping $[i \mapsto v_i]$ where $v_i = \symexp_i$ is satisfiable.
\end{definition}

% \begin{definition}{\bf (Valuation Derivation)}
% \label{def:valDeriv}
% We say a valuation $\sigma$ \emph{derives} a valuation $\sigma'$ for some (potentially partial) expression $\pexp$ and program $\prog$, denoted $\sigmaderiv{\sigma}{\sigma'}{\prog,\pexp}$, if $\sigma'$ is encountered 
% Given some valuation $\sigma$ potentially containing symbolic values, we define the concretizations of $\sigma$, denoted $\mathsf{Conc}(\sigma)$, to be the set of all mappings $\sigma'$ such that for every mapping $[i \mapsto \symexp_i]$ in $\sigma$, $\sigma'$ has some mapping $[i \mapsto v_i]$ where $v_i = \symexp_i$ is satisfiable.
% \end{definition}

\begin{definition}{\bf (Subexpression Trace Compatible)}
We say partial expression $\pexp$ is \emph{subexpression trace compatible} with $\prog$ on $\sigma$ if there exists some partial program $\prog'$ such that:
\begin{enumerate}
    \item $\pexp \in \prog'$
    \item $\prog' \derives \prog_t$
    \item $\prog_t$ is trace compatible with $\prog$ on $\sigma$
\end{enumerate}
\end{definition}

We will additionally define a collecting semantics for \emph{contexts} (as opposed to values) $\collctx{\cdot}{\prog}{\sigma}$ which maps an expression $e$ to the set of contexts $\sigma'$ in which $e$ is executed during an execution of $\prog$ on $\sigma$.

\subsection{Unidirectional Pruning Soundness Proof}
\label{subsec:uniproof}

To prove unidirectional pruning soundness, we first give the full set of unidirectional pruning rules, then prove a useful lemma, and finally prove the full theorem.

\subsubsection{Unidirectional Pruning Rules}

\begin{figure}
    \centering
    \textbf{Uni-directional Trace-Compatibility Checking Rules $\uparrow$} \\
    \begin{mathpar}
    \inferrule*[Right=$\nconsist$-$\uparrow$, width=10cm]{
       \sigma \not \vdash \prog' \consist \prog
    } {
       \nconjudge{\sigma}{\prog'}{\prog}
    }
    
    \inferrule*[Right=$\consist$-$\uparrow$, width=10cm]{
      \upjudge{\prog, \sigma}{\prog'}{\symexp} \\ \mathsf{SAT}(\symexp = \exec{\prog}{\sigma}) 
    } {
      \conjudge{\sigma}{\prog'}{\prog}
    }
    \\
    \inferrule*[Right=NTerm-$\uparrow$, width=10cm]{
       \neg \mathrm{Shared}(N) \\ \mathrm{FreshVar}(\nu)
    } {
       \upjudge{\prog, \sigma}{N}{\nu}
    }
    \\
    \inferrule*[Right=S-NTerm-$\uparrow$, width=10cm]{
       \mathrm{Shared}(N)  \\
       \mathrm{FreshVar}(\nu) \\
       \mathsf{SAT}(\bigvee_{w \in \prog, \ N \derives w}\bigvee_{c \in \coll{w}{\prog}{\sigma}} \nu = c)
    } {
       \upjudge{\prog, \sigma}{N}{\nu}
    }
    \\
    \inferrule*[Right=VarTerm-$\uparrow$, width=10cm]{
       \mathrm{Variable}(v) \\ 
       \sigma[v] = \symexp
    } {
       \upjudge{\prog, \sigma}{v}{\symexp}
    }
    \\
    \inferrule*[Right=FirstOrderTerm-$\uparrow$, width=10cm]{
       \mathrm{Function}(f) \\ 
       \mathrm{FirstOrder}(f) \\
       \forall i. \, \upjudge{\prog, \sigma}{\pexp_i}{\symexp_i} \\ 
       \symexec{f}(\symexp_1, ..., \symexp_n) = \symexp
    } {
       \upjudge{\prog, \sigma}{f(\pexp_1, ..., \pexp_n)}{\symexp}
    }
    \\
    \inferrule*[Right=HigherOrderTerm-$\uparrow$, width=10cm]{
       \mathrm{Function}(f) \\
       \mathrm{HigherOrder}(f) \\
       \upjudge{\prog, \sigma}{\pexp_1}{[\symexp_1, ..., \symexp_k]} \\ 
       \upjudge{\prog, \sigma[i \mapsto \symexp_i]}{\pexp_2}{\symexp_i'} \\
       \symexec{f}([\symexp_1, ..., \symexp_k], \{ \symexp_1 \mapsto \symexp_1', ..., \symexp_k \mapsto \symexp_k' \}) = \symexp
    } {
       \upjudge{\prog, \sigma}{f(\pexp_1, \, \lambda i. \, \pexp_2)}{\symexp}
    }
    
    \end{mathpar}
    \caption{Uni-directional rules for checking trace compatibility between partial program in target language and a program in the source language. We assume that lambda bindings in higher-order functions do not shadow input variables. }
    \label{fig:forwardrulesext}
    \vspace{-0.1in}
\end{figure}

The full set of unidirectional pruning rules is shown in Figure~\ref{fig:forwardrulesext}. Note, we slightly changed the rule \textsc{S-NTerm-}$\uparrow$ from the presentation in the main text. In particular, we made the rule deterministic (and thus the algorithm deterministic) by having the rule return a symbolic value capturing \emph{all} possible values for a completion of the nonterminal; in contrast, the original rule nondeterministically chose one such possible value. The resulting deterministic version has equivalent pruning power to the rule shown in the main text, but this version is easier to use to prove soundness, so we assume this set of rules going forward in the proof.

\subsubsection{Unidirectional Inclusion Lemma}

We now state and prove the \emph{unidirectional inclusion lemma} which will enable us to prove unidirectional pruning soundness.

\begin{lemma}{(Unidirectional Inclusion Lemma)}
\label{lemma:uniinclusion}
Given some partial expression $\pexp$ and symbolic value $\symexp$, if $\upjudge{\prog, \sigma'}{\pexp}{\symexp}$ where $\pexp$ is subexpression trace compatible with $\prog$ on $\sigma$, then it is the case that for every completion $e$ of $\pexp$ and valuation $\sigma''$ that is both a concretization of $\sigma'$ and a context of $e$ in $\prog_t$ on $\sigma$, the formula $\symexp = \exec{e}{\sigma''}$ is satisfiable.
\end{lemma}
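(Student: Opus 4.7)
The plan is to prove the Unidirectional Inclusion Lemma by structural induction on the derivation of $\upjudge{\prog, \sigma'}{\pexp}{\symexp}$, with cases matching the six inference rules from Figure~\ref{fig:forwardrulesext} that derive this judgment. For each case, the goal is to show that for any completion $e$ of $\pexp$ and any valuation $\sigma''$ that both concretizes $\sigma'$ and is an execution context of $e$, we have $\mathsf{SAT}(\symexp = \exec{e}{\sigma''})$.

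First I would handle the three base cases. For rule \textsc{NTerm-}$\uparrow$, $\symexp$ is a fresh variable $\nu$, which is trivially satisfiable equal to anything. For \textsc{VarTerm-}$\uparrow$, $\pexp$ is already complete (it is a variable $v$), so $e = v$; since $\sigma''$ concretizes $\sigma'$ and $\sigma'[v] = \symexp$, we have $\exec{v}{\sigma''} = \sigma''[v]$ and the concretization definition (Def.~\ref{def:concMap}) directly gives $\mathsf{SAT}(\sigma''[v] = \symexp)$. The most delicate base case is \textsc{S-NTerm-}$\uparrow$, where I must invoke subexpression trace compatibility of $\pexp$ with $\prog$ on $\sigma$: any completion $e$ of the shared non-terminal $N$ must be a shared term $w$ with $N \derives w$, and by the trace-compatibility assumption (condition (1) of the definition) $w$ also appears in $\prog$. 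Moreover, condition (2) of trace compatibility guarantees $\coll{w}{\prog_t}{\sigma} \subseteq \coll{w}{\prog}{\sigma}$, so $\exec{w}{\sigma''}$ lies in the disjunction $\bigvee_{w \in \prog, N \derives w}\bigvee_{c \in \coll{w}{\prog}{\sigma}} \nu = c$ that witnesses satisfiability of the fresh variable $\nu = \symexp$.

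Next I would handle the two inductive cases. For \textsc{FirstOrderTerm-}$\uparrow$ on $f(\pexp_1, \ldots, \pexp_n)$, any completion has the form $e = f(e_1, \ldots, e_n)$ where each $e_i$ is a completion of $\pexp_i$. Each $\pexp_i$ is itself subexpression trace compatible with $\prog$ on $\sigma$ (same witnessing $\prog'$), so the IH yields $\mathsf{SAT}(\symexp_i = \exec{e_i}{\sigma''})$ for every $i$. Then soundness of the forward semantics (Def.~\ref{def:forwardSound}, item 1) combined with $\symexec{f}(\symexp_1, \ldots, \symexp_n) = \symexp$ yields $\mathsf{SAT}(\symexp = \exec{f(e_1,\ldots,e_n)}{\sigma''})$. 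The \textsc{HigherOrderTerm-}$\uparrow$ case is the main obstacle: I would apply the IH to $\pexp_1$ to conclude $\mathsf{SAT}([\symexp_1,\ldots,\symexp_k] = \exec{e_1}{\sigma''})$, which forces the concrete list to have length $k$ with each element satisfiable equal to $\symexp_i$. Then, for each index $i$, I would argue that $\sigma''[i \mapsto \exec{e_i}{\sigma''}]$ is a concretization of $\sigma'[i \mapsto \symexp_i]$ and is an execution context for $e_2$ in $\prog_t$ on $\sigma$ (here I need to lift the context-compatibility argument inside the lambda), allowing the IH on $\pexp_2$ to give $\mathsf{SAT}(\symexp_i' = \exec{e_2}{\sigma''[i \mapsto \exec{e_i}{\sigma''}]})$. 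Finally, soundness of the forward semantics for higher-order functions (Def.~\ref{def:forwardSound}, item 2) closes the case.

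The main difficulty I anticipate is the bookkeeping for the higher-order case: justifying that lifting $\sigma'$ with the fresh binding $[i \mapsto \symexp_i]$ preserves both the ``concretization'' relation with concrete bindings $[i \mapsto \exec{e_i}{\sigma''}]$, and that the resulting concretization is still a valid execution context for the inner body $e_2$ in the ambient $\prog_t$. A secondary subtlety is that subexpression trace compatibility must be shown to propagate to each sub-expression $\pexp_i$; I would handle this by observing that the same witnessing $\prog'$ used for $\pexp$ also witnesses subexpression trace compatibility for every sub-term, since each sub-term appears in $\prog'$ and all completions considered lie below the completion $\prog_t$ used as witness.
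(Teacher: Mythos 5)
Your proposal is correct and follows essentially the same route as the paper's proof: structural induction on the derivation with the three base cases handled exactly as in the appendix (fresh variable for unshared non-terminals, concretization for variables, and the trace-compatibility subset relation $\coll{w}{\prog_t}{\sigma} \subseteq \coll{w}{\prog}{\sigma}$ for shared non-terminals), and the two function cases discharged via the inductive hypothesis plus soundness of the forward semantics. The bookkeeping obstacles you flag for the higher-order case — that $\sigma''$ being a context of $e$ implies the appropriately lifted valuations are contexts of $e_1$ and $e_2$ — are precisely the side conditions the paper isolates and discharges by appeal to call-by-value evaluation.
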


Informally, this lemma states that if the unidirectional rules generate some symbolic value $\symexp$ for partial expression $\pexp$, than that symbolic value is an \emph{overapproximation} of the possible values a completion $e$ of $\pexp$ could take on during execution of a trace-compatible program $\prog_t$.

\begin{proof}
We proceed with an inductive proof, starting with the three base cases: \\

{\bf Base Case 1: $\pexp = N$ (unshared)}. In this case, the lemma trivially holds, as for \textsc{NTerm-}$\uparrow$, $\symexp$ is simply a fresh variable, and thus $\mathsf{SAT}(\exec{e}{\sigma''} = \symexp)$ for all $\sigma''$.

{\bf Base Case 2: $\pexp = v$}. In this case, the lemma trivially holds, as for \textsc{VarTerm-}$\uparrow$, $\symexp$ is simply a $\sigma[v]$, and thus by the definition of a concretization, for any $\sigma'' \in \mathsf{Conc}(\sigma)$ it is the case that $\mathsf{SAT}(\exec{e}{\sigma''} = \symexp)$.

{\bf Base Case 3: $\pexp = N$ (shared)}. By rule \textsc{S-NTerm-}$\uparrow$, we know it is the case that $\symexp$ is a fresh variable $\nu$ constrained by the following: 
\begin{equation}
\label{eq:sntermrule}
\mathsf{SAT}(\bigvee_{w \in \prog, \ N \derives w}\bigvee_{c \in \coll{w}{\prog}{\sigma}} \nu = c)    
\end{equation}
For any completion $e$ of $\pexp$, let us define 
\begin{equation}
\label{eq:sntermsedef}
S_e = \{ \exec{e}{\sigma''} \ | \ \sigma'' \in \collctx{e}{\prog_t}{\sigma} \land \sigma'' \in \mathsf{Conc}(\sigma') \}    
\end{equation}
By the definition of the collecting semantics and the trace compatibility assumption, we know 
\begin{equation}
\label{eq:sntermtracecompat}
S_e \subseteq \coll{e}{\prog_t}{\sigma} \subseteq \coll{e}{\prog}{\sigma}    
\end{equation}
Using this and the constraint on $\symexp$, we see that for all $v \in S_e$, $\mathsf{SAT}(v = \symexp)$, and thus the lemma holds. \\

{\bf Inductive Hypothesis:} The lemma holds for all (potentially partial) subexpressions. \\

{\bf Case 1: $\pexp = f(\pexp_1, \ldots, \pexp_n)$}. In this case, we first apply the inductive hypothesis and the judgement $\upjudge{\prog, \sigma'}{\pexp_i}{\symexp_i}$ from \textsc{FirstOrderTerm-}$\uparrow$ to find the following valid formula:
\begin{equation}
\label{eq:fotindhyp}
    \forall \sigma_i \in \collctx{e_i}{\prog_t}{\sigma}. \ (\sigma_i \in \mathsf{Conc}(\sigma')) \rightarrow \mathsf{SAT}(\exec{e_i}{\sigma_i} = \symexp_i)
\end{equation}
Then, using \ref{eq:fotindhyp} and the soundness of $\symexec{\cdot}$ (along with the fact that $\symexec{f}(\symexp_1, \ldots, \symexp_n) = \symexp$), we can conclude the lemma holds, that is, that the following is valid:
\begin{equation}
    \forall \sigma'' \in \collctx{e}{\prog_t}{\sigma}. \ (\sigma'' \in \mathsf{Conc}(\sigma')) \rightarrow \mathsf{SAT}(\exec{e}{\sigma''} = \symexp)
\end{equation}
Critically, applying the $\symexec{\cdot}$ soundness definition requires that we know the following:
\begin{equation}
    \sigma'' \in \collctx{e}{\prog_t}{\sigma} \Rightarrow \sigma'' \in \collctx{e_i}{\prog_t}{\sigma}
\end{equation}
This is satisfied assuming a call by value semantics, which we assume in this context.

{\bf Case 2: $\pexp = f(\pexp_1, \lambda i. \pexp_2)$}. In this case, we first apply the inductive hypothesis and the judgement $\upjudge{\prog, \sigma'}{\pexp_1}{[\symexp_1, \ldots, \symexp_n]}$ from \textsc{HigherOrderTerm-}$\uparrow$ to find the following valid formula:
\begin{equation}
\label{eq:hotindhype1}
    \forall \sigma_{e1} \in \collctx{e_1}{\prog_t}{\sigma}. \ (\sigma_{e1} \in \mathsf{Conc}(\sigma')) \rightarrow \mathsf{SAT}(\exec{e_1}{\sigma_{e1}} = [\symexp_1, \ldots, \symexp_n])
\end{equation}
We then apply the inductive hypothesis to the judgement $\upjudge{\prog, \sigma'[i \mapsto \symexp_i]}{\pexp_2}{\symexp_i}$ to get the following valid formula:
\begin{equation}
\label{eq:hotindhype2}
    \forall \sigma_i \in \collctx{e_2}{\prog_t}{\sigma}. \ (\sigma_i \in \mathsf{Conc}(\sigma'[i \mapsto \symexp_i])) \rightarrow \mathsf{SAT}(\exec{e_2}{\sigma_i} = \symexp_i)
\end{equation}
Then, using \ref{eq:hotindhype1} and \ref{eq:hotindhype2} and the soundness of $\symexec{\cdot}$ (along with the fact that $\symexec{f}([\symexp_1, \ldots, \symexp_k], \{ \symexp_1 \mapsto \symexp_1', \ldots, \symexp_k \mapsto \symexp_k' \}) = \symexp$), we can conclude that the lemma holds, that is, that the following is valid:
\begin{equation}
    \forall \sigma'' \in \collctx{e}{\prog_t}{\sigma}. \ (\sigma'' \in \mathsf{Conc}(\sigma')) \rightarrow \mathsf{SAT}(\exec{e}{\sigma''} = \symexp)
\end{equation}
Critically, applying the $\symexec{\cdot}$ soundness definition requires that we know the following:
\begin{equation}
    \sigma'' \in \collctx{e}{\prog_t}{\sigma} \Rightarrow (\sigma'' \in \collctx{e_1}{\prog_t}{\sigma} \land \sigma''[i \mapsto \exec{e_1}{\sigma''}[i]] \in \collctx{e_2}{\prog_t}{\sigma})
\end{equation}
This is satisfied assuming a call by value semantics, which we assume in this context.

\end{proof}

\subsubsection{Soundness Proof}

We give a short proof of the main theorem here, which is easily obtained from the inclusion lemma.

\begin{proof}
We proceed with the proof of soundness by way of contradiction.

In particular, BWOC, suppose there exist some $\prog$, $\prog'$, $\sigma$, and $\prog_t$ such that:

\begin{enumerate}
    \item[A1.] $\nconjudge{\sigma}{\prog'}{\prog}$
    \item[A2.] $\prog' \derives \prog_t$
    \item[A3.] $\prog_t$ is trace-compatible with $\prog$ on valuation $\sigma$.
    \item[A4.] $\exec{\prog_t}{\sigma} = \exec{\prog}{\sigma}$
    \item[A5.] The forward and backward semantics are sound.
\end{enumerate}

Given these assumptions, we begin as follows:

By A1 and rule $\nconsist$\textsc{-}$\uparrow$, we conclude 
\begin{equation}
\sigma \not \vdash \prog' \consist \prog    
\end{equation}
Using this and rule $\consist$\textsc{-}$\uparrow$, we conclude there is no symbolic expression $\symexp$ such that 
\begin{equation}
\label{eq:unisateq}
\upjudge{\prog, \sigma}{\prog'}{\symexp} \land \mathsf{SAT}(\symexp = \exec{\prog}{\sigma})
\end{equation}
Given the definition of the $\uparrow$ judgment, we know there is a symbolic expression $\symexp$ which satisfies this judgement. Thus, using the Inclusion Lemma (Lemma~\ref{lemma:uniinclusion}) and the fact that $\sigma \in \collctx{\prog_t}{\prog_t}{\sigma}$, we know
\begin{equation}
    \mathsf{SAT}(\exec{\prog_t}{\sigma} = \symexp)
\end{equation}
Thus, by A4, we find a contradiction with \ref{eq:unisateq}.

\end{proof}

\subsection{Bidirectional Pruning Soundness Proof}
\label{subsec:biproof}

The proof of bidirectional pruning soundness builds off of the unidirectional pruning soundness proof. In the following section, we first give the full set of bidirectional pruning rules, define and prove some useful lemmas, and then give the proof of bidirectional pruning soundness.

\subsubsection{Bidirectional Pruning Rules}

\begin{figure}
    \centering
    \textbf{Bidirectional Trace-Compatibility Checking Rules $\updownarrow$} \\
    \begin{mathpar}
    \inferrule*[Right=$\nconsist$-$\updownarrow$, width=10cm]{
      \sigma \not \vdash \prog' \consist \prog
    } {
      \nconjudge{\sigma}{\prog'}{\prog}
    }
    
    \inferrule*[Right=$\consist$-$\updownarrow$, width=10cm]{
      \udjudge{\prog, \sigma\textcolor{blue}{, y = \exec{\prog}{\sigma}}}{\prog'}{\symexp} \\
      \mathsf{SAT}(\symexp = \exec{\prog}{\sigma})
    } {
      \conjudge{\sigma}{\prog'}{\prog}
    }
    \\
    \inferrule*[Right=NTerm-$\updownarrow$, width=10cm]{
      \neg \mathrm{Shared}(N) \\ 
      \mathrm{FreshVar}(\nu) \\ 
      \textcolor{blue}{\mathsf{SAT}(\varphi[\nu])}
    } {
      \udjudge{\prog, \sigma\textcolor{blue}{, \varphi}}{N}{\nu}
    }
    \\
    \inferrule*[Right=S-NTerm-$\updownarrow$, width=10cm]{
      \mathrm{Shared}(N)  \\
      \mathrm{FreshVar}(\nu) \\
      \mathsf{SAT}(\bigvee_{w \in \prog, \ N \derives w}\bigvee_{c \in \coll{w}{\prog}{\sigma}} \nu = c) \\
      \textcolor{blue}{\mathsf{SAT}(\varphi[\nu])}
    } {
      \udjudge{\prog, \sigma\textcolor{blue}{, \varphi}}{N}{\nu}
    }
    \\
    \inferrule*[Right=VarTerm-$\updownarrow$, width=10cm]{
      \mathrm{Variable}(v) \\ 
      \sigma[v] = \symexp \\ 
      \textcolor{blue}{\mathsf{SAT}(\varphi[\symexp])}    
    } {
      \udjudge{\prog, \sigma\textcolor{blue}{, \varphi}}{v}{\symexp}
    }
    \\
    \inferrule*[Right=FirstOrderTerm-$\updownarrow$, width=10cm]{
      \mathrm{Function}(f) \\ 
      \mathrm{FirstOrder}(f) \\
      \forall i. \, \udjudge{\prog, \sigma\textcolor{blue}{, \symexecback{f}{-i}(\varphi, \symexp_1, ..., \symexp_{i-1})}}{\pexp_i}{\symexp_i} \\
      \symexec{f}(\symexp_1, ..., \symexp_n) = \symexp \\ 
      \textcolor{blue}{\mathsf{SAT}(\varphi[\symexp])}
    } {
      \udjudge{\prog, \sigma\textcolor{blue}{, \varphi}}{f(\pexp_1, ..., \pexp_n)}{\symexp}
    }
    \\
    \inferrule*[Right=HigherOrderTerm-$\updownarrow$, width=8cm]{
      \mathrm{Function}(f) \\ 
      \mathrm{HigherOrder}(f) \\
      \udjudge{\prog, \sigma\textcolor{blue}{, \symexecback{f}{-1}(\varphi)}}{\pexp_1}{[\symexp_1, ..., \symexp_k]} \\
      \udjudge{\prog, \sigma[i \mapsto \symexp_i]\textcolor{blue}{, \symexecback{f}{-2}(\varphi, \symexp_i)}}{\pexp_2}{\symexp_i'} \\
      \symexec{f}([\symexp_1, ..., \symexp_k], \{ \symexp_1 \mapsto \symexp_1', ..., \symexp_k \mapsto \symexp_k' \}) = \symexp \\
      \textcolor{blue}{\mathsf{SAT}(\varphi[\symexp])}
    } {
      \udjudge{\prog, \sigma\textcolor{blue}{, \varphi}}{f(\pexp_1, \, \lambda i. \, \pexp_2)}{\symexp}    
    }    
    
    \end{mathpar}
    \caption{Bidrectional Rules}
    \label{fig:bidirecrulesext}
    \vspace{-0.2in}
\end{figure}

The full set of bidirectional pruning rules is shown in Figure~\ref{fig:bidirecrulesext}. In the same manner as the unidirectional rules, we slightly changed the rule \textsc{S-NTerm-}$\uparrow$ to make the rule deterministic and thus easier to reason about. Again, this results in equivalent pruning power to the rule shown in the main text.

\subsubsection{Pruning Equivalence Lemmas}

To prove the soundness of the bidirectional pruning rules, we will simply prove that the pruning power of the unidirectional pruning rules is equivalent to the pruning power of the bidirectional pruning rules. Recall from the main text that the two have the same (theoretical) pruning power, but that the unidirectional rules scale more poorly. To establish the equivalence of the pruning power, we will rely on the following two lemmas relating the $\uparrow$-judgments and the $\updownarrow$-judgments.

\begin{lemma}
\label{lemma:biImpUni}
For any programs $\prog$ and $\prog'$, valuation $\sigma$, formula $\varphi$, and symbolic value $\symexp$ for which we have $\udjudge{\prog, \sigma, \varphi}{\prog'}{\symexp}$, we also have $\upjudge{\prog, \sigma}{\prog'}{\symexp}$.
\end{lemma}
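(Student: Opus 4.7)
The plan is to prove this lemma by straightforward structural induction on the derivation of the bidirectional judgment $\udjudge{\prog, \sigma, \varphi}{\prog'}{\symexp}$. The key observation is that, by comparing Figure~\ref{fig:bidirecrulesext} to Figure~\ref{fig:forwardrulesext}, every bidirectional rule is obtained from the corresponding unidirectional rule by (i) adding a satisfiability check on $\varphi$ (the blue side conditions), and (ii) augmenting the judgments on subexpressions with a specification derived through the inverse semantics. To recover the unidirectional derivation, we simply discard all blue side conditions and drop the $\varphi$ component from each judgment, which leaves behind an identical collection of premises and a well-formed unidirectional derivation.

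In more detail, the base cases \textsc{NTerm-}$\updownarrow$, \textsc{S-NTerm-}$\updownarrow$, and \textsc{VarTerm-}$\updownarrow$ each have all their non-$\varphi$ premises coincide with those of \textsc{NTerm-}$\uparrow$, \textsc{S-NTerm-}$\uparrow$, and \textsc{VarTerm-}$\uparrow$ respectively, so after dropping $\mathsf{SAT}(\varphi[\cdot])$ we obtain the corresponding unidirectional judgment immediately. For the inductive cases \textsc{FirstOrderTerm-}$\updownarrow$ and \textsc{HigherOrderTerm-}$\updownarrow$, we apply the inductive hypothesis to each sub-derivation $\udjudge{\prog,\sigma,\symexecback{f}{-i}(\varphi,\ldots)}{\pexp_i}{\symexp_i}$ to obtain the corresponding $\upjudge{\prog,\sigma}{\pexp_i}{\symexp_i}$, and then assemble these with the unchanged premise $\symexec{f}(\symexp_1,\ldots,\symexp_n)=\symexp$ (or its higher-order analogue) to conclude via \textsc{FirstOrderTerm-}$\uparrow$ (or \textsc{HigherOrderTerm-}$\uparrow$).

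I do not anticipate any significant obstacle here: the lemma is essentially a syntactic erasure result, because the bidirectional rules are a refinement of the unidirectional ones obtained by strengthening premises rather than weakening them. The only point worth verifying carefully is that the specification $\varphi$ (and the specifications $\symexecback{f}{-i}(\varphi,\ldots)$ appearing in sub-derivations) plays no role in the unidirectional premises, so the inductive hypothesis can be invoked uniformly regardless of what specification appears in the sub-derivations. Once this is observed, the case analysis is mechanical and the lemma follows directly.
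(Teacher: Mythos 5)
Your proposal is correct and matches the paper's own proof: both proceed by structural induction on the partial expression, observing that each bidirectional rule only adds premises (the $\mathsf{SAT}(\varphi[\cdot])$ checks and the inverse-semantics specifications) to the corresponding unidirectional rule, so the base cases are immediate and the inductive cases follow by applying the hypothesis to the sub-derivations and reassembling with the unchanged forward-semantics premise. Your framing of the argument as a syntactic erasure of the $\varphi$ component is an accurate summary of what the paper does.
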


Informally, this rule states that any symbolic value satisfying the $\updownarrow$-judgment will also satisfy the matching $\uparrow$-judgment.

\begin{proof}
We proceed with an inductive proof, starting with the three base cases: \\

{\bf Base Case 1: $\pexp = N$ (unshared)}. Trivial, as both \textsc{NTerm-}$\uparrow$ and \textsc{NTerm-}$\updownarrow$ produce fresh variable $\nu$ with no additional restrictions from \textsc{Nterm-}$\uparrow$.

{\bf Base Case 2: $\pexp = v$}. Trivial, as both \textsc{VarTerm-}$\uparrow$ and \textsc{VarTerm-}$\updownarrow$ produce symbolic expression $\sigma[v]$ with no additional restrictions from \textsc{Varterm-}$\uparrow$.

{\bf Base Case 3: $\pexp = N$ (shared)}. Trivial, as both \textsc{S-NTerm-}$\uparrow$ and \textsc{S-NTerm-}$\updownarrow$ produce fresh variable $\nu$ constrained by the formula $\mathsf{SAT}(\bigvee_{w \in \prog, \ N \derives w}\bigvee_{c \in \coll{w}{\prog}{\sigma}} \nu = c)$ with no additional restrictions from \textsc{S-NTerm-}$\uparrow$. \\

{\bf Inductive Hypothesis:} The lemma holds for all (potentially partial) subexpressions. \\

{\bf Case 1: $\pexp = f(\pexp_1, \ldots, \pexp_n)$}. Using the lemma, we assume
\begin{equation}
    \udjudge{\prog, \sigma, \varphi}{f(\pexp_1, \ldots, \pexp_n)}{\symexp}
\end{equation}
Using the rule \textsc{FirstOrderTerm-}$\updownarrow$, we use this to conclude the following for $1 \leq i \leq n$
\begin{equation}
    \udjudge{\prog, \sigma, \symexecback{f}{-i}(\varphi, \symexp_1, \ldots, \symexp_{i-1})}{\pexp_i}{\symexp_i}
\end{equation}
Using the inductive hypothesis, we can conclude the following for $1 \leq i \leq n$
\begin{equation}
    \upjudge{\prog, \sigma}{\pexp_i}{\symexp_i}
\end{equation}
Finally, combining this with the fact that $\symexec{f}(\symexp_1, \ldots, \symexp_n) = \symexp$ (from the assumption of the lemma that the $\updownarrow$-judgment holds), we find that the RHS of the lemma holds, that is
\begin{equation}
    \upjudge{\prog, \sigma}{f(\pexp_1, \ldots, \pexp_n)}{\symexp}
\end{equation}
\\

{\bf Case 2: $\pexp = f(\pexp_1, \lambda i. \pexp_2)$}. Using the lemma, we assume
\begin{equation}
    \label{eq:biImpUniCase2Assump}
    \udjudge{\prog, \sigma, \varphi}{f(\pexp_1, \lambda i. \pexp_2)}{\symexp}
\end{equation}
Using the rule \textsc{HigherOrderTerm-}$\updownarrow$, we use this to conclude the following
\begin{equation}
    \label{eq:biImpUniCase2r1}
    \udjudge{\prog, \sigma, \symexecback{f}{-1}(\varphi)}{\pexp_1}{[\symexp_1, \ldots, \symexp_k]}
\end{equation}
Furthermore, using \ref{eq:biImpUniCase2Assump}, we can conclude for $1 \leq i \leq k$
\begin{equation}
    \label{eq:biImpUniCase2r2}
    \udjudge{\prog, \sigma, \symexecback{f}{-2}(\varphi, \symexp_i)}{\pexp_2}{\symexp_i'}
\end{equation}
We use the inductive hypothesis with \ref{eq:biImpUniCase2r1} to get
\begin{equation}
    \label{eq:biImpUniCase2r1ind}
    \upjudge{\prog, \sigma}{\pexp_1}{[\symexp_1, \ldots, \symexp_k]}
\end{equation}
and the inductive hypothesis with \ref{eq:biImpUniCase2r2} to get
\begin{equation}
    \label{eq:biImpUniCase2r2ind}
    \upjudge{\prog, \sigma}{\pexp_2}{\symexp_i'}
\end{equation}
Finally, combining \ref{eq:biImpUniCase2r1ind} and \ref{eq:biImpUniCase2r2ind} with the fact that $\symexec{f}([\symexp_1, \ldots, \symexp_n], \{ \symexp_1 \mapsto \symexp_1', \ldots, \symexp_k \mapsto \symexp_k' \}) = \symexp$ (from the assumption of the lemma that the $\updownarrow$-judgment holds), we find that the RHS of the lemma holds, that is
\begin{equation}
    \upjudge{\prog, \sigma}{f(\pexp_1, \lambda i. \pexp_2)}{\symexp}
\end{equation}

\end{proof}

\begin{lemma}
\label{lemma:uniImpBi}
For any programs $\prog$ and $\prog'$, valuation $\sigma$, and symbolic value $\symexp$ for which we have $\upjudge{\prog, \sigma}{\prog'}{\symexp}$, we also have $\udjudge{\prog, \sigma, \varphi}{\prog'}{\symexp}$ for all formulas $\varphi$ such that $\mathsf{SAT}(\varphi[\symexp])$.
\end{lemma}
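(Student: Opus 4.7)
\textbf{Proof plan for Lemma~\ref{lemma:uniImpBi}.}

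The plan is to prove the lemma by structural induction on the partial expression $\prog'$, mirroring the structure of the proof of Lemma~\ref{lemma:biImpUni} but running the implications in the opposite direction. The key resource I will use is the soundness of the backward semantics (Definition~\ref{def:backwardSound}), which is precisely what guarantees that a satisfiable top-level specification $\varphi$ propagates down to satisfiable specifications on each subexpression; without this, the added blue premises $\mathsf{SAT}(\varphi[\symexp])$ in the $\updownarrow$ rules could fail even when the black premises (inherited from the $\uparrow$ rules) all succeed.

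For the three base cases ($\pexp = N$ unshared, $\pexp = v$, and $\pexp = N$ shared), the derivation of $\udjudge{\prog,\sigma,\varphi}{\pexp}{\symexp}$ reuses exactly the same premises as the matching $\uparrow$ rule together with the single extra premise $\mathsf{SAT}(\varphi[\symexp])$. Since the extra premise is precisely the hypothesis of the lemma, each base case is immediate.

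For the two inductive cases I assume the lemma for every strict subexpression. In the first-order case $\pexp = f(\pexp_1,\ldots,\pexp_n)$, the $\uparrow$ derivation gives us $\upjudge{\prog,\sigma}{\pexp_i}{\symexp_i}$ and $\symexec{f}(\symexp_1,\ldots,\symexp_n) = \symexp$. Combined with $\mathsf{SAT}(\varphi[\symexp])$, the first clause of Definition~\ref{def:backwardSound} yields that $\symexecback{f}{-i}(\varphi,\symexp_1,\ldots,\symexp_{i-1})[\symexp_i]$ is satisfiable for each $i$. Applying the inductive hypothesis with this formula in place of $\varphi$ gives $\udjudge{\prog,\sigma,\symexecback{f}{-i}(\varphi,\symexp_1,\ldots,\symexp_{i-1})}{\pexp_i}{\symexp_i}$, which is exactly the premise needed by \textsc{FirstOrderTerm-}$\updownarrow$ to conclude $\udjudge{\prog,\sigma,\varphi}{f(\pexp_1,\ldots,\pexp_n)}{\symexp}$. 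The higher-order case $\pexp = f(\pexp_1, \lambda i. \pexp_2)$ proceeds the same way, using the second clause of Definition~\ref{def:backwardSound} to obtain satisfiability of $\symexecback{f}{-1}(\varphi)[[\symexp_1,\ldots,\symexp_k]]$ and $\symexecback{f}{-2}(\varphi,\symexp_i)[\symexp'_i]$, then invoking the inductive hypothesis on $\pexp_1$ and on $\pexp_2$ (under the extended valuation $\sigma[i \mapsto \symexp_i]$) to discharge the premises of \textsc{HigherOrderTerm-}$\updownarrow$.

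The main technical obstacle is being careful in the higher-order case, because the lambda body $\pexp_2$ must be typed under the extended valuation $\sigma[i\mapsto\symexp_i]$, and one must check that the $\uparrow$ derivation really gave us that judgment for every $i$ (so that the inductive hypothesis is applicable $k$ times, once per element). Beyond this bookkeeping, the proof is essentially a one-to-one translation of each $\uparrow$-premise into the corresponding $\updownarrow$-premise, with Definition~\ref{def:backwardSound} supplying the single missing satisfiability check at each step.
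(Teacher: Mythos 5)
Your proposal is correct and follows essentially the same route as the paper's proof: structural induction on the partial expression, with the base cases immediate from the hypothesis $\mathsf{SAT}(\varphi[\symexp])$ and the two function cases discharged by invoking the soundness of the backward semantics (Definition~\ref{def:backwardSound}) to obtain satisfiability of the propagated specifications $\symexecback{f}{-i}(\cdot)$ before applying the inductive hypothesis. Your handling of the higher-order case under the extended valuation $\sigma[i \mapsto \symexp_i]$ matches the paper's treatment as well.
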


Informally, this rule states that any symbolic value $\symexp$ satisfying the $\uparrow$-judgment will also satisfy the matching $\updownarrow$-judgment as long as the formula $\varphi$ added to the $\updownarrow$-judgment is satisfiable on $\symexp$.

\begin{proof}
We proceed with an inductive proof, starting with the three base cases: \\

{\bf Base Case 1: $\pexp = N$ (unshared)}. Trivial, as both \textsc{NTerm-}$\updownarrow$ and \textsc{NTerm-}$\uparrow$ produce fresh variable $\nu$ with the only additional restriction from \textsc{Nterm-}$\updownarrow$ being that $\mathsf{SAT}(\varphi[\symexp])$ which we already assume of $\varphi$ in the lemma.

{\bf Base Case 2: $\pexp = v$}. Trivial, as both \textsc{VarTerm-}$\updownarrow$ and \textsc{VarTerm-}$\uparrow$ produce the symbolic value $\sigma[v]$ with the only additional restriction from \textsc{Varterm-}$\updownarrow$ being that $\mathsf{SAT}(\varphi[\symexp])$ which we already assume of $\varphi$ in the lemma.

{\bf Base Case 3: $\pexp = N$ (shared)}. Trivial, as both \textsc{S-NTerm-}$\updownarrow$ and \textsc{S-NTerm-}$\uparrow$ produce fresh variable $\nu$ constrained by the formula $\mathsf{SAT}(\bigvee_{w \in \prog, \ N \derives w}\bigvee_{c \in \coll{w}{\prog}{\sigma}} \nu = c)$ with the only additional restriction from \textsc{S-NTerm-}$\updownarrow$ being that $\mathsf{SAT}(\varphi[\nu])$ which we already assume of $\varphi$ in the lemma. \\

{\bf Inductive Hypothesis:} The lemma holds for all (potentially partial) subexpressions. \\

{\bf Case 1: $\pexp = f(\pexp_1, \ldots, \pexp_n)$}. Using the lemma, we assume the following
\begin{equation}
    \upjudge{\prog, \sigma}{f(\pexp_1, \ldots, \pexp_n)}{\symexp}
\end{equation}
Using the rule \textsc{FirstOrderTerm}-$\uparrow$, we can conclude the following for $1 \leq i \leq n$
\begin{equation}
    \label{eq:uniImpBiCase1Assume1}
    \upjudge{\prog, \sigma}{\pexp_i}{\symexp_i}
\end{equation}
Similarly, we can conclude
\begin{equation}
    \label{eq:uniImpBiCase1Assume2}
    \symexec{f}(\symexp_1, \ldots, \symexp_n) = \symexp    
\end{equation}
Furthermore, let us only consider $\varphi$ which satisfy the following
\begin{equation}
    \label{eq:uniImpBiCase1Assume3}
    \mathsf{SAT}(\varphi[\symexp])
\end{equation}
We make this assumption because otherwise the lemma trivially holds. Given \ref{eq:uniImpBiCase1Assume2} and \ref{eq:uniImpBiCase1Assume3}, along with the soundness of the backwards semantics, we can conclude the following for $1 \leq i \leq n$
\begin{equation}
    \label{eq:uniImpBiCase1Sound}
    \mathsf{SAT}(\symexecback{f}{-i}(\varphi, \symexp_1, \ldots, \symexp_{i-1})[\symexp_i])
\end{equation}
Using \ref{eq:uniImpBiCase1Sound}, \ref{eq:uniImpBiCase1Assume1}, and the inductive hypothesis, we prove that
\begin{equation}
    \udjudge{\prog, \sigma, \symexecback{f}{-i}(\varphi, \symexp_1, \ldots, \symexp_{i-1})}{\pexp_i}{\symexp_i}
\end{equation}
Using this in combination with \ref{eq:uniImpBiCase1Assume2} and \ref{eq:uniImpBiCase1Assume3}, we show that the lemma holds, that is
\begin{equation}
    \udjudge{\prog, \sigma, \varphi}{f(\pexp_1, \ldots, \pexp_n)}{\symexp}
\end{equation}
for any satisfiable $\varphi$. \\

{\bf Case 2: $\pexp = f(\pexp_1, \lambda i. \pexp_2)$}. Using the lemma, we assume the following
\begin{equation}
    \upjudge{\prog, \sigma}{f(\pexp_1, \lambda i. \pexp_2)}{\symexp}
\end{equation}
Using the rule \textsc{HigherOrderTerm-}$\uparrow$, we can conclude the following
\begin{equation}
    \label{eq:uniImpBiCase2Assume1}
    \upjudge{\prog, \sigma}{\pexp_1}{[\symexp_1, \ldots, \symexp_k]}
\end{equation}
Furthermore, we can conclude the following for $1 \leq i \leq k$
\begin{equation}
    \label{eq:uniImpBiCase2Assume2}
    \upjudge{\prog, \sigma[i \mapsto \symexp_i]}{\pexp_2}{\symexp_i'}
\end{equation}
Additionally, still from \textsc{HigherOrderTerm-}$\uparrow$, we conclude
\begin{equation}
    \label{eq:uniImpBiCase2Assume3}
    \symexec{f}([\symexp_1, \ldots, \symexp_k], \{ \symexp_1 \mapsto \symexp_1', \ldots, \symexp_k \mapsto \symexp_k' \})
\end{equation}
Furthermore, let us only consider $\varphi$ which satisfy the following
\begin{equation}
    \label{eq:uniImpBiCase2Assume4}
    \mathsf{SAT}(\varphi[\symexp])
\end{equation}
We make this assumption because otherwise the lemma trivially holds. Given \ref{eq:uniImpBiCase2Assume3} and \ref{eq:uniImpBiCase2Assume4} and the soundness of the backwards semantics, we can conclude the two following facts
\begin{equation}
    \label{eq:uniImpBiCase2Sound1}
    \mathsf{SAT}(\symexecback{f}{-1}(\varphi)[[\symexp_1, \ldots, \symexp_k]])
\end{equation}
and
\begin{equation}
    \label{eq:uniImpBiCase2Sound2}
    \mathsf{SAT}(\symexecback{f}{-2}(\varphi, \symexp_i)[\symexp_i'])
\end{equation}
for $1 \leq i \leq k$. Using \ref{eq:uniImpBiCase2Sound1}, \ref{eq:uniImpBiCase2Assume1}, and the inductive hypothesis, we can conclude
\begin{equation}
    \label{eq:uniImpBiCase2Ind1}
    \udjudge{\prog, \sigma, \symexecback{f}{-1}(\varphi)}{\pexp_1}{[\symexp_1, \ldots, \symexp_k]}
\end{equation}
Similarly, using \ref{eq:uniImpBiCase2Sound2}, \ref{eq:uniImpBiCase2Assume2}, and the inductive hypothesis, we can conclude
\begin{equation}
    \label{eq:uniImpBiCase2Ind2}
    \udjudge{\prog, \sigma[i \mapsto \symexp_i], \symexecback{f}{-2}(\varphi, \symexp_i)}{\pexp_2}{\symexp_i'}
\end{equation}
for $1 \leq i \leq k$. Finally, we use \ref{eq:uniImpBiCase2Ind1}, \ref{eq:uniImpBiCase2Ind2}, \ref{eq:uniImpBiCase2Assume3}, and \ref{eq:uniImpBiCase2Assume4} to prove the lemma, that is, that the following
\begin{equation}
    \udjudge{\prog, \sigma, \varphi}{f(\pexp_1, \lambda i. \pexp_2)}{\symexp}    
\end{equation}
for any satisfiable $\varphi$.

\end{proof}

\subsubsection{Soundness Proof}

Now, we will proceed to show soundness by proving the pruning power of the unidirectional and bidirectional pruning rules is equivalent.  

\begin{proof}

Formally, we want to show that
\begin{equation}
    \sigma \vdash_\uparrow \prog' \nconsist \prog \Leftrightarrow \sigma \vdash_\updownarrow \prog' \nconsist \prog
\end{equation}
Note, we use $\vdash_\uparrow$ and $\vdash_\updownarrow$ to denote using the unidirectional and bidirectional rules respectively. We can restate this as follows
\begin{equation}
    \sigma \vdash_\uparrow \prog' \consist \prog \Leftrightarrow \sigma \vdash_\updownarrow \prog' \consist \prog
\end{equation}
Both directions fall straight out of the lemmas proved above. In particular, the first direction
\begin{equation}
    \sigma \vdash_\uparrow \prog' \consist \prog \Rightarrow \sigma \vdash_\updownarrow \prog' \consist \prog
\end{equation}
is proved by Lemma~\ref{lemma:uniImpBi}. In particular, by the LHS, we know $\upjudge{\prog,\sigma}{\prog'}{\symexp}$ and $\mathsf{SAT}(\symexp = \exec{\prog}{\sigma})$ (which is equivalent to $\mathsf{SAT}(\varphi[\symexp])$ for $\varphi \equiv y = \exec{\prog}{\sigma}$). Thus, using Lemma~\ref{lemma:uniImpBi}, we show that $\udjudge{\prog, \sigma, y = \exec{\prog}{\sigma}}{\prog'}{\symexp}$ and $\mathsf{SAT}(\symexp = \exec{\prog}{\sigma})$, showing that the RHS holds.
For the second direction, namely
\begin{equation}
    \sigma \vdash_\updownarrow \prog' \consist \prog \Rightarrow \sigma \vdash_\uparrow \prog' \consist \prog
\end{equation}
the reasoning proceeds the same, just with Lemma~\ref{lemma:biImpUni}. In particular, by the LHS, we know $\udjudge{\prog, \sigma, y  = \exec{\prog}{\sigma}}{\prog'}{\symexp}$ and $\mathsf{SAT}(\symexp = \exec{\prog}{\sigma})$. Using Lemma~\ref{lemma:biImpUni}, we show that $\upjudge{\prog, \sigma}{\prog'}{\symexp}$. Thus, combined with $\mathsf{SAT}(\symexp = \exec{\prog}{\sigma})$, we know the RHS is satisfied.

\end{proof}

\end{appendices}

\fi

\end{document}
\endinput
%%
%% End of file `sample-manuscript.tex'.